\documentclass[12pt]{article}

\usepackage{lmodern}
\usepackage{geometry}
\usepackage{fullpage}
\usepackage{microtype}
\usepackage[utf8]{inputenc}
\usepackage[T1]{fontenc}
\usepackage{setspace}

\usepackage[round]{natbib}

\usepackage{hyperref}
\usepackage[title]{appendix}

\usepackage{textcomp}

\usepackage{graphicx}
\usepackage{floatrow}

\usepackage{caption, subcaption}

\usepackage{booktabs,threeparttable,array}

\usepackage{amsmath, amssymb, amsthm}
\theoremstyle{plain}
\newtheorem{theorem}{Theorem}
\newtheorem{lemma}{Lemma}
\newtheorem{proposition}[theorem]{Proposition}
\newtheorem{corollary}{Corollary}
\theoremstyle{definition}

\newtheorem{example}{Example}
\theoremstyle{remark}

\newcommand{\E}{\mathsf{E}}

\newcommand{\var}{\mathrm{var}}
\newcommand{\cov}{\mathrm{cov}}

\newcommand{\pr}{\mathrm{P}}
\newcommand{\dd}{\mathrm{d}}
\DeclareMathOperator{\vecop}{vec}

\newcommand{\tr}{\operatorname{tr}}

\newcommand{\bdiag}{\operatorname{bdiag}}
\newcommand{\tsp}{\mathsf{T}}
\newcommand{\rN}{\mathrm{N}}

\newcommand{\R}[1]{\mathbb{R}^{#1}}

\DeclareMathOperator*{\argmax}{arg\,max}

\newcommand{\webappendix}{Appendix}

\begin{document}

\title{Score-based confidence intervals for variance-covariance parameters in linear mixed models}

\author{Matias Shedden \quad Karl Oskar Ekvall \\
{\normalsize Department of Statistics, University of Florida} \\
{\tt \normalsize sheddenmatias@ufl.edu \quad k.ekvall@ufl.edu }}

\maketitle

\onehalfspacing

\abstract{ We propose confidence intervals with near-nominal coverage of
variance-covariance parameters in linear mixed models, even when interest or
nuisance parameters are near the boundary, that is, when some random effects
have small variances or are strongly correlated. In such settings, our
simulations show standard Wald, likelihood ratio, and score intervals with
nominal level 95\% can have coverage as low as 45\% and as high as 99\%, while
the proposed intervals are near nominal. The proposed intervals
invert a modified profile score statistic based on the restricted likelihood,
with nuisance parameters estimated on an extended parameter set. In a
longitudinal study of socialization in children with autism, where the estimated
correlation between the random intercept and slope is near minus one, our
intervals show that children who start higher improve more slowly. In a genetic study of white blood cell count in mice, they
distinguish the contributions of variant classes even though several intervals
include zero. Compared to universal inference intervals, which are valid in finite
samples, ours are 1.7 to 3.1 times narrower, and finite where one universal
interval is not, at a forty-seventh of the computing time. We establish asymptotically correct coverage uniformly in the
parameter under a condition known to hold with independent clusters and crossed
random effects. }

\vspace{1em}
\noindent\textit{Key words:} Linear mixed models; Boundary parameter; Confidence interval; Nuisance
parameter; Restricted maximum likelihood; Score test; Variance components.
\vspace{1em}

\section{Introduction} \label{sec:intro}

Many applications of linear mixed models require inference on
(variance-)covariance parameters near the boundary of the parameter set, that
is, a variance near zero or a correlation near $\pm 1$. In longitudinal studies,
for instance, the variance of a random slope in time quantifies how much the
rate of change varies between individuals, and its covariance with the random
intercept indicates whether starting higher is associated with faster or slower
change. Section~\ref{subsec:autism} illustrates this with socialization scores
of children with autism at five ages, where the estimated correlation between
the random intercept and slope is near $-1$. In genetics, a variance component
can model the variability in a trait attributable to a class of genetic
variants, and the questions are how much each class contributes and how the
classes rank \citep{yang2010,yang2015}. In Section~\ref{subsec:snp}, the trait
is white blood cell count in mice and the classes are single-nucleotide
polymorphisms grouped by minor allele frequency, and several classes have small
or zero estimated variance components. Near the boundary, standard confidence
intervals for the covariance parameters are unreliable, and existing valid
ones are either restricted to special cases or can be very wide.

The asymptotic approximations underlying Wald and likelihood
ratio intervals are typically inaccurate for parameters on or near the boundary
\citep{stern2000likelihood}, sometimes
severely so. Our simulations include settings in which nominal $95\%$ likelihood
ratio intervals have coverage as low as $0.65$ and Wald intervals as low as
$0.75$, even though the true parameter is interior. In other settings, the Wald
interval has coverage up to $0.995$ and the likelihood ratio interval up to
$0.992$. Which parameter values are close enough to the boundary to affect
coverage depends on the model and the interval, but since the true parameter is
unknown, any method whose coverage depends substantially on its value is unreliable. We
provide computationally efficient, score-based confidence intervals and show
that they have asymptotically correct coverage uniformly in the parameter. In the 
simulations, they have near-nominal coverage in every setting.

Testing whether the parameter is on the boundary is well understood in general
\citep{self1987asymptotic,geyer1994asymptotics,silvapulle1995score,andrews2001testing}
and for mixed models in particular
\citep{stram1994variance,greven2008restricted,baey2019asymptotic}.
However, such a test requires an accurate distribution approximation only at the
boundary. A confidence interval requires accurate approximations also at
interior values arbitrarily close to the boundary. Existing resampling methods
do not resolve the difficulty: the bootstrap can be inconsistent when the true
parameter is on the boundary \citep{andrews2000bootstrap}, and modifications of
it for testing boundary points in mixed models do not extend straightforwardly
to confidence intervals \citep{sinha2009bootstrap,guedon2024bootstrap}, nor do
permutation tests for random effects
\citep{fitzmaurice2007note}.

Recent theory shows that the score statistic standardized by expected
information often remains approximately chi-squared near the boundary
\citep{ekvall2022confidence,ekvall2026uniform}. \citet{battey2024anomaly}
similarly find that statistic to be unaffected by an atypical geometry of the
likelihood that makes the Wald statistic ineffectual for a variance component.
However, existing reliable methods apply either in special cases, notably a
single variance component or a ratio of two
\citep{harville1985confidence,crainiceanu2004likelihood,schweiger2016fast,zhang2025fast},
or to the vector of all covariance parameters jointly, using confidence regions
\citep{ekvall2026uniform}. Such regions are difficult to interpret, and
scientific questions are often about a particular component. Projecting the
region onto one axis is often too conservative, and requires computing the full
region. Universal inference gives valid intervals under minimal conditions
\citep{zhang2026universal}, but, as the examples of
Section~\ref{sec:data_examples} show, they are much wider and slower to compute
than ours, which invert a profile score statistic.

Intervals for one component at a time have the others as nuisance parameters,
whose estimation, for example by profiling the likelihood, can be
computationally intensive and affect coverage. A hard case is a nuisance
parameter near the boundary, as in the genetics example of
Section~\ref{subsec:snp}, where one variance component is estimated to be zero
and several are nearly so. We address both computing and coverage by
extending the parameter set to include nuisance parameter values that are
incompatible with the random-effects formulation but still yield a valid
marginal likelihood. The intervals remain interpretable because the interest
parameters are constrained to their original parameter set. Extended parameter
sets have precedents in and outside mixed models
\citep{feng1992statistical,verbeke2003use,ketz2018subvector,stringer2026testing},
and nuisance variances on the boundary have been treated in testing
\citep{qu2013linear,wood2013simple}, but none of these gives confidence
intervals with a coverage guarantee in this setting.

Proving that the intervals attain their nominal coverage uniformly requires
asymptotics along sequences of true parameters approaching the boundary, whereas
classical theory for mixed models assumes a fixed and interior parameter
\citep{miller1977mle,jiang1996reml}. Building on recent theory for
general parametric models \citep{shedden2026profile}, we establish the
asymptotic distribution of the profile score statistic under such sequences,
with the nuisance parameters estimated on the extended set. Although the general
conditions are considerably more involved than what suffices without nuisance
parameters, we show that for the linear mixed model they reduce, with the
right choice of nuisance estimator, to one already known to hold in models with
independent clusters and crossed random effects \citep{ekvall2026uniform}.

Inverting the score test requires many evaluations of the expected information,
which makes naive implementations slow. Our implementation in the R package
\texttt{reconf} chooses automatically between a sparse and a dense way of
computing it, from the number of random effects and the sparsity of their design
matrix. The first works with matrices of the dimension of the random effects and
exploits their sparsity, as in longitudinal and crossed designs. The
second suits the dense relationship matrices of genetics, working with matrices
of the dimension of the data at a cost that does not grow with the number of
random effects. Section~\ref{sec:computation} describes both, together with the
warm-started search for the interval endpoints. In the longitudinal example of
Section~\ref{subsec:autism}, \texttt{reconf} returns finite intervals for all
four covariance parameters in half a second, roughly an order of magnitude
faster than the profile intervals of \texttt{lme4} \citep{bates2015fitting}.

% model.tex

\section{Model and proposed intervals} \label{sec:model}

\subsection{Restricted likelihood}
For a response vector $Y \in \R{n}$ and non-stochastic $X \in \R{n\times p}$ and
$Z \in \R{n\times q}$, suppose
\begin{align} \label{eq:lmm}
  Y = X\beta + Z U + E,
\end{align}
where the random effects vector $U \sim \rN_{q}(0, \Psi)$ and the error vector
$E \sim \rN_n(0, \psi_r I_n)$ are independent. We assume $\Psi =
\Psi(\psi) = \sum_{j = 1}^{r - 1} \psi_j H_j$ for known symmetric $H_j \in
\R{q\times q}$, so $\Psi$ is parameterized by the first $r - 1$ elements of $\psi =
[\psi_1, \dots, \psi_r]^\tsp$. The model in \eqref{eq:lmm} implies
\begin{align} \label{eq:marginal}
  Y \sim \rN_n(X\beta, \Sigma),~~ \Sigma = \Sigma(\psi) = Z \Psi(\psi) Z^\tsp + \psi_r I_n.
\end{align}
Variance components models are a special case with $Z = I_n$ and $H_j$
positive semi-definite; then $\Sigma = \sum_{j = 1}^{r -
1}\psi_j H_j + \psi_r I_n$ with $\psi_j \geq 0$, $j \in \{1, \dots, r\}$.

We focus on inference on elements of $\psi$ using the restricted likelihood
\citep{patterson1971recovery}. The restricted likelihood is common in practice and tends to give confidence intervals with coverage closer to the nominal level than those from the full likelihood. Inference based on the profile likelihood obtained by partially maximizing the log-likelihood of
\eqref{eq:marginal} over $\beta$ can be treated similarly; it is implemented
in the software, and \webappendix~\ref{sec:lmm_computing} gives details. 

To define the restricted likelihood, assume $X$ has full column rank and let
$V \in \R{n \times (n - p)}$ be semi-orthogonal with $V^\tsp X = 0$. Then $V^\tsp Y \sim \rN_{n - p}(0, \tilde{\Sigma})$,
$\tilde{\Sigma} = \tilde{\Sigma}(\psi) = V^\tsp \Sigma(\psi) V$, which does not
depend on $\beta$. The restricted log-likelihood is, up to a constant,
\begin{align} \label{eq:res_ll}
  \ell(\psi) = -\frac{1}{2}\{\log \vert \tilde{\Sigma}(\psi)\vert +
  Y^\tsp V \tilde{\Sigma}(\psi)^{-1} V^\tsp Y\},
\end{align}
where $\vert \cdot \vert$ denotes the determinant. 
Any two admissible $V$ are related by right-multiplication with an orthogonal matrix, so \eqref{eq:res_ll} does not depend on the choice of $V$. 
Up to an additive constant, \eqref{eq:res_ll} equals
\begin{align} \label{eq:res_ll2}
   -\frac{1}{2}[\log \vert \Sigma(\psi)\vert + \log \vert X^\tsp
  \Sigma(\psi)^{-1} X\vert + \{Y - X\tilde{\beta}(\psi)\}^\tsp
  \Sigma(\psi)^{-1} \{Y - X\tilde{\beta}(\psi)\}],
\end{align}
where $\tilde{\beta}(\psi) = \{X^\tsp \Sigma(\psi)^{-1}X\}^{-1} X^\tsp
\Sigma(\psi)^{-1} Y$. Computing is based on \eqref{eq:res_ll2}, while theory is often more convenient with \eqref{eq:res_ll}.

\subsection{Test statistic and confidence intervals}
Upon reordering if needed, we may assume the parameter of interest is $\psi_1$
and write $\psi = [\psi_1, \psi_{(2)}^\tsp]^\tsp$, where $\psi_{(2)} = [\psi_2,
\dots, \psi_r]^\tsp$ is a nuisance parameter. By \eqref{eq:lmm}, $\psi_1$ is
either a variance or a covariance. Conformably partition
the score $S(\psi)  = \nabla \ell(\psi) = [S_1(\psi), S_{(2)}(\psi)^\tsp]^\tsp$.
Similarly, the expected information $\mathcal{I}(\psi)= \cov \{S(\psi)\} \in \R{r\times r}$ has
blocks $\mathcal{I}_{(11)}\in \R{}$, $\mathcal{I}_{(12)} \in \R{1\times (r -
1)}$, and $\mathcal{I}_{(22)} \in \R{(r - 1)\times (r - 1)}$. Blocks of the
inverse are denoted with superscripts; in particular, $\mathcal{I}^{(11)} =
(\mathcal{I}_{(11)} -
\mathcal{I}_{(12)}\mathcal{I}_{(22)}^{-1}\mathcal{I}_{(21)})^{-1}$ is the
inverse (reciprocal) of the efficient information for $\psi_1$.

For a hypothesized value of $\psi_1$, let $\tilde{\psi}_{(2)} = \tilde{\psi}_{(2)}(\psi_1)$ be a nuisance parameter estimator to be discussed shortly; the choice critically
affects the properties of the resulting test statistic. Let also $\tilde{\psi} = [\psi_1, \tilde{\psi}_{(2)}^\tsp]^\tsp$. The proposed test statistic is the score for the interest parameter standardized
by the efficient information:
\begin{equation} \label{eq:standardized_score}
  T(\psi_1) = \mathcal{I}^{(11)}(\tilde{\psi})\, \{S_1(\tilde{\psi})\}^2.
\end{equation}
The expected information cannot be replaced by the observed information as the
coverage guarantee of Section~\ref{sec:gen_theory} can then fail
\citep[see][]{ekvall2022confidence}, and the observed information can be
negative in practice. The efficient information is needed because, when the nuisance parameters are
unknown, the information about $\psi_1$ is smaller than $\mathcal{I}_{(11)}$. The loss of information can be quantified using 
\begin{equation} \label{eq:kappa}
  \kappa(\psi)^2 = \frac{\mathcal{I}_{(12)}\mathcal{I}_{(22)}^{-1}\mathcal{I}_{(21)}}{\mathcal{I}_{(11)}},
\end{equation}
with all quantities evaluated at $\psi$. Note that $(\mathcal{I}^{(11)})^{-1} = \mathcal{I}_{(11)}\{1 -
\kappa(\psi)^2\}$ and, when $\psi$ is the true parameter, $\kappa(\psi)$ is the
multiple correlation between $S_1(\psi)$ and $S_{(2)}(\psi)$.

Inverting the test based on $T$ in \eqref{eq:standardized_score} gives the
$1 - \alpha$ confidence set
\[
  C(\alpha) = \left\{ \psi_1 : T(\psi_1) \leq q_{1-\alpha} \right\},
\]
where $q_{1-\alpha}$ is the $(1-\alpha)$-quantile of the
$\chi^2_{1}$-distribution and $\psi_1$ ranges over $[0, \infty)$ if it is a variance and over $\R{}$ if it is a covariance.

\subsection{Nuisance parameter estimators}\label{sec:nuis}

Assuming the parameter of interest is not the error variance, let $\Upsilon =
\{\psi : \Psi(\psi) \succeq 0,~ \psi_r > 0\}$ denote the parameter set, where
$\succeq$ denotes positive semi-definiteness. When the error variance is the
parameter of interest, the same arguments apply with a change of notation. Consider a
hypothesized $\psi_1$ and define the slice $\Upsilon^{(2)}(\psi_1) =
\{\psi_{(2)} : [\psi_1, \psi_{(2)}^\tsp]^\tsp \in \Upsilon\}$. That is, for
a given $\psi_1$, $\Upsilon^{(2)}(\psi_1)$ is the set of nuisance parameters
for which $\Psi$ is positive semi-definite and the error variance is positive. Profiling the restricted likelihood over $\psi_{(2)}$ in the usual way corresponds to taking $\tilde{\psi}_{(2)}(\psi_1) \in \argmax_{\psi_{(2)} \in \Upsilon^{(2)}(\psi_1)} \ell(\psi_1, \psi_{(2)})$. However, such estimators are often on the boundary of $\Upsilon^{(2)}(\psi_1)$, which typically leads to confidence intervals with far from nominal coverage; see Section \ref{sec:simulations}.

The proposed confidence intervals instead use a nuisance estimator over a
larger set. Let $\tilde{\Upsilon} = \{\psi : \tilde{\Sigma}(\psi) \succ 0\}$,
where $\succ$ denotes positive definiteness, and define the extended set
$\tilde{\Upsilon}^{(2)}(\psi_1) = \{\psi_{(2)} : [\psi_1, \psi_{(2)}^\tsp]^\tsp
\in \tilde{\Upsilon}\}$. Over this set, the restricted likelihood is still well-defined,
even though the interpretation offered by \eqref{eq:lmm} is not available in
general.
In particular, there are $\psi_{(2)} \in \tilde{\Upsilon}^{(2)}(\psi_1)$ such
that $\Psi(\psi_1, \psi_{(2)})$ is not a covariance matrix. Nevertheless, the
resulting methods lead to confidence intervals with the desired coverage
properties, as shown in Section~\ref{sec:gen_theory}. The extended set
$\tilde{\Upsilon}^{(2)}(\psi_1)$ is open, so any maximizer solves
$S_{(2)}(\psi_1, \tilde{\psi}_{(2)}) = 0$ and can in principle be found by
unconstrained optimization.

In large-scale applications, computing the exact maximizer for many candidate $\psi_1$ can be too
expensive. For such settings we use a
one-step approximation. Let $\mathring{\psi}_{(2)}$ be a preliminary estimator,
such as the restricted maximum likelihood estimator obtained when fitting the
model, which does not depend on $\psi_1$. Then, for each $\psi_1$, let
\begin{equation} \label{eq:onestep}
  \check{\psi}_{(2)}(\psi_1) = \mathring{\psi}_{(2)} +
    \mathcal{I}_{(22)}(\psi_1, \mathring{\psi}_{(2)})^{-1}
    S_{(2)}(\psi_1, \mathring{\psi}_{(2)})
\end{equation}
be the one-step update from it. Theoretical support for both choices, including
guarantees that they are well-defined for large enough sample sizes, is given
in Section~\ref{sec:gen_theory}.

\subsection{Computation} \label{sec:computation}

Making the method practical requires evaluating the test statistic $T(\psi_1)$
efficiently at the many values needed to invert the test. 
Each evaluation
requires solving an optimization problem to find the nuisance parameter estimator
$\tilde{\psi}_{(2)}(\psi_1)$. 
This is a nonconvex problem whose objective can be
irregular. 
Because $\tilde{\Upsilon}^{(2)}(\psi_1)$ is open, the problem is
unconstrained; over $\Upsilon^{(2)}(\psi_1)$ it would have positive
semi-definiteness constraints. We use a trust-region algorithm with the expected information, which
the statistic already requires, in place of the Hessian \citep[cf. the average-information
algorithm of][]{gilmour1995average}. When the random-effects design is
sparse, each evaluation uses a Woodbury identity for $\Sigma(\psi)^{-1}$ and
that sparsity, so the value, score, and expected information together cost
$O(nq + rq^3 + r^2 q^2)$ in the worst case and less when the factorizations
stay sparse. When $q \geq n$ and $Z$ is dense, as in kernel or
kinship models, that cost is dominated by dense $q \times q$ algebra. Our
implementation then forms the matrices $Z H_j Z^\tsp$ once and works with
$n \times n$ matrices thereafter, so each later evaluation costs
$O(rn^3 + r^2n^2)$, for any $q$.

To invert the test, we search outward from the estimate until the value of the
test-statistic crosses the critical value. The nuisance parameter estimate at each accepted value warm-starts the next
optimization, which speeds up the search. Step lengths
follow from a secant prediction of the critical-value crossing, refined by
regula falsi, and a limit typically needs 15--20 evaluations of the statistic.
The R package \texttt{reconf} implements these methods. \webappendix~\ref{sec:lmm_computing} gives additional computational details.

% asymptotics.tex

\section{Asymptotic coverage} \label{sec:gen_theory}

We next establish conditions under which the profile intervals $C_n(\alpha)$
have asymptotically correct uniform coverage probability on appropriate subsets of the parameter set $\Upsilon$. Specifically, for $A \subseteq \Upsilon$ to be specified,
\begin{equation} \label{eq:uniform_coverage}
  \lim_{n\to \infty}\sup_{\psi \in A}\left \vert \pr_\psi\{\psi_1 \in C_n(\alpha)\}
  - (1-\alpha)\right\vert = 0.
\end{equation}

We focus on sets $A$ containing sequences of parameters tending to the boundary
of $\Upsilon$. Such an $A$ may contain boundary parameters, or only interior parameters
arbitrarily close to the boundary. Establishing \eqref{eq:uniform_coverage} for
such sets is what makes coverage reliable however close the true parameter
is to the boundary. In some examples $A$ can be the full parameter set
$\Upsilon$, while others need restrictions, such as compactness.

Here and for the remainder of the section, a subscript $n$ on quantities defined
in Section~\ref{sec:model} makes dependence on the sample size explicit, and
$\pr_\psi$ is the probability when $V^\tsp Y \sim \rN(0, \tilde{\Sigma}(\psi))$.
Importantly, the sample size $n$ is the length of the vector $Y$ but, in
general, it need not be related to a number of independent observations. Indeed,
the results do not explicitly require any particular type of dependence or lack
thereof.

Pointwise versions of \eqref{eq:uniform_coverage}, which omit the supremum,
often follow from classical arguments for interior parameters
\citep{jiang1996reml}, even if $C_n$ were based on a Wald or likelihood ratio
statistic. However, as discussed in Section \ref{sec:intro} and illustrated in simulations, pointwise asymptotics
are often misleading while uniform asymptotics are more accurately reflected in
finite samples. The following lemma lets us establish
\eqref{eq:uniform_coverage} by verifying conditions along sequences of (true)
parameters. For the remainder of the section, for each $n$, $\psi_n \in
\Upsilon$ denotes the true parameter with interest parameter $\psi_{n1} \in
\R{}$ and nuisance parameter $\psi_{n(2)} \in \R{r - 1}$. Throughout, $r$
does not depend on $n$.

\begin{lemma} \label{lem:sequences}
  If, for every sequence $(\psi_n)$ in $A$, $\lim_{n\to \infty}\pr_{\psi_n}\{\psi_{n1} \in C_n(\alpha)\} = 1 - \alpha$,
  then \eqref{eq:uniform_coverage} holds.
\end{lemma}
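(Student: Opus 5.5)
We argue by contradiction, using the standard device of extracting a near-supremizing sequence. Write
\[
  f_n(\psi) = \bigl\vert \pr_\psi\{\psi_1 \in C_n(\alpha)\} - (1-\alpha)\bigr\vert,
\]
so that \eqref{eq:uniform_coverage} reads $\sup_{\psi\in A} f_n(\psi)\to 0$. Suppose this fails. Then there are $\epsilon>0$ and an infinite set $N\subseteq\{1,2,\dots\}$ such that $\sup_{\psi\in A} f_n(\psi) > 2\epsilon$ for every $n\in N$.

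First step: pick points nearly attaining the supremum. By the definition of the supremum, for each $n\in N$ we can choose $\psi_n\in A$ with $f_n(\psi_n)>\epsilon$. The supremum need not be attained, which is why we only ask for $f_n(\psi_n)>\epsilon$ rather than $f_n(\psi_n)=\sup_{\psi\in A} f_n(\psi)$.

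Second step: extend this to a sequence indexed by all $n$, since the hypothesis concerns full sequences in $A$. For $n\notin N$, let $\psi_n$ be an arbitrary element of $A$; $A$ is nonempty, or else the statement is trivial. The resulting $(\psi_n)$ is a sequence in $A$, so by hypothesis $f_n(\psi_n)\to 0$. But $f_n(\psi_n)>\epsilon$ along the infinite set $N$, which is a contradiction.

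An equivalent route is to note that $\limsup_n \sup_{A} f_n$ is attained along some subsequence and to embed that subsequence into a full sequence as above.

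The only point needing care is this padding step. The hypothesis is stated for sequences indexed by every $n$, whereas failure of uniformity only yields a subsequence, so the values off $N$ must be filled in by arbitrary members of $A$. Nothing about the model is used: measurability of $\{\psi_1\in C_n(\alpha)\}$ is implicit in the statement, and the choice of $\psi_n$ needs only countable choice. I therefore expect no real obstacle, and the argument is a routine exchange of limit and supremum.
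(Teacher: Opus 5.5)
Your proof is correct and is essentially the paper's argument: both choose near-supremizing parameters and apply the hypothesis to the resulting sequence in $A$. The paper skips the contradiction and padding steps by choosing, for every $n$, a $\psi_n \in A$ whose coverage error is within $1/n$ of the supremum $\delta_n$, so the hypothesis directly gives $\delta_n \to 0$.
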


The choice of nuisance parameter estimator is important for the coverage of the
profile intervals. To establish \eqref{eq:uniform_coverage}, we will require
that, with probability tending to one, the nuisance estimator is a root of its
score function. That is, with probability tending to one,
\begin{equation} \label{eq:nuisance_root}
  S_{n(2)}(\psi_{n1}, \tilde{\psi}_{n(2)}(\psi_{n1})) = 0.
\end{equation}
Equation \eqref{eq:nuisance_root} is satisfied for every fixed $n$ when
$\tilde{\psi}_{n(2)}(\psi_{n1})$ is a maximizer of the restricted likelihood
over the extended set $\tilde{\Upsilon}_n^{(2)}(\psi_{n1})$. More generally, any
root of the nuisance score is admissible, whether or not it is a maximizer. This
matters because, in practice, convergence of iterative algorithms to a root is
verifiable and occurs under weak conditions, but it is in general unknown
whether the root is a maximizer. Requiring that \eqref{eq:nuisance_root} hold only with
probability tending to one accommodates samples for which no root exists.
The one-step estimators discussed in Section~\ref{sec:model} typically satisfy
\eqref{eq:nuisance_root} only approximately, and are treated separately below.

Our main results also require the nuisance estimator at the true interest parameter, $\tilde{\psi}_{n(2)} = \tilde{\psi}_{n(2)}(\psi_{n1})$, to satisfy
\begin{equation} \label{eq:nuis_rate}
  \Vert \mathcal{I}_{n(22)}(\psi_n)^{1/2}(\tilde{\psi}_{n(2)} - \psi_{n(2)})\Vert = O_p(1).
\end{equation}
When the smallest eigenvalue of $\mathcal{I}_{n(22)}(\psi_n)$ diverges,
\eqref{eq:nuis_rate} says the nuisance estimator is consistent at the
information rate.

To state the main result of the section, we require some more notation. For $v
\in \R{r}$, let $\tilde{\Sigma}(v) = V^\tsp\{Z\Psi(v)Z^\tsp + v_r I\}V$. When $v
= \psi_n$, this is the covariance matrix of $V^\tsp Y$. In general, however,
$\tilde{\Sigma}(v)$ need not be a covariance matrix since any $v \in \R{r}$ is
allowed. Define also, for any $\psi \in \tilde{\Upsilon}_n$, $A_j(\psi) =
\tilde{\Sigma}(\psi)^{-1/2}\{\partial\tilde{\Sigma}(\psi)/\partial\psi_j\}
\tilde{\Sigma}(\psi)^{-1/2}$ and $A(v, \psi) = \sum_{j} v_j A_j(\psi) =
\tilde{\Sigma}(\psi)^{-1/2}\tilde{\Sigma}(v)\tilde{\Sigma}(\psi)^{-1/2}$. The
matrices are related to the expected information as $\mathcal{I}_{n,ij}(\psi) =
\tr\{A_i(\psi)A_j(\psi)\}/2$, and $v^\tsp \mathcal{I}_{n}(\psi)v = \tr\{A(v,
\psi)^2\}/2$. It follows that the Fisher information is positive definite at $\psi$ if and only if $\Vert A(v, \psi)\Vert_F^2  > 0$, where $\Vert \cdot \Vert_F$ is the Frobenius norm. For any such $v$ and $\psi$, let $a_n(v, \psi) =\Vert A(v, \psi)\Vert / \Vert A(v, \psi)\Vert_F$; for every other $v$ and $\psi$ set $a_n(v, \psi) = 1$. Here, $\Vert \cdot \Vert$ is the spectral norm. Finally, let $\bar{a}_n(\psi) = \sup_{v: \Vert v\Vert = 1} a_n(v, \psi)$. Like \eqref{eq:res_ll}, $a_n$ does not depend on the choice of $V$: replacing $V$ by $VO$ for an orthogonal $O$ replaces $A(v, \psi)$ by $O^\tsp A(v, \psi)O$, which has the same spectral and Frobenius norms.

For every $n$, $\bar{a}_n(\psi)$ bounds how far the standardized score
$\mathcal{I}_n(\psi)^{-1/2}S_n(\psi)$ is from normal
\citep[Lemma~5]{ekvall2026uniform}. The main result requires that
\begin{equation} \label{eq:negligible}
  \lim_{n \to \infty}\bar{a}_n(\psi_n)  = 0.
\end{equation}
Under \eqref{eq:negligible}, therefore,
$\mathcal{I}_n(\psi_n)^{-1/2}S_n(\psi_n)$ converges in distribution to $\rN(0,
I_r)$ (\webappendix~\ref{sec:lmm_proofs}). Condition \eqref{eq:negligible} has
been shown to hold along sequences approaching the boundary in models with
independent clusters and in a model with crossed random effects
\citep{ekvall2026uniform}. Without nuisance parameters, it alone gives
asymptotically correct uniform coverage of score confidence regions
\citep{ekvall2026uniform}. With nuisance parameters, additional conditions are
needed. \citet{shedden2026profile} give high-level sufficient conditions for
general parametric models, which the proof of Theorem~\ref{thm:porting} in
\webappendix~\ref{sec:lmm_proofs} shows hold under \eqref{eq:negligible}
for nuisance estimators satisfying \eqref{eq:nuisance_root} and
\eqref{eq:nuis_rate}.

\begin{theorem} \label{thm:porting}
  If \eqref{eq:negligible} holds along every sequence $(\psi_n)$ in a set $A$,
  then, along every such sequence, with probability tending to one
  $S_{n(2)}(\psi_{n1}, \cdot)$ has a root satisfying
  \eqref{eq:nuis_rate}, any $\tilde{\psi}_{n(2)}(\psi_{n1})$ satisfying
  \eqref{eq:nuisance_root} and \eqref{eq:nuis_rate} equals that root with
  probability tending to one, and \eqref{eq:uniform_coverage} holds for
  $C_n(\alpha)$ formed with any such $\tilde{\psi}_{n(2)}(\psi_{n1})$.
\end{theorem}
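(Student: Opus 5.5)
By Lemma~\ref{lem:sequences}, it is enough to fix an arbitrary sequence $(\psi_n)$ in $A$ along which \eqref{eq:negligible} holds and prove three things: existence of a root with probability tending to one, uniqueness of such roots, and $\pr_{\psi_n}\{T_n(\psi_{n1}) \le q_{1-\alpha}\} \to 1-\alpha$.

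The natural device is a local reparametrization. Write $\mathcal{I}_n = \mathcal{I}_n(\psi_n)$ and set $\theta = \mathcal{I}_n^{1/2}(\psi - \psi_n)$. In the $\theta$-scale the information at the truth is the identity. The score $\mathcal{I}_n^{-1/2}S_n(\psi_n)$ converges to $\rN(0,I_r)$ under \eqref{eq:negligible}, as noted before the theorem via \citet[Lemma~5]{ekvall2026uniform}. The plan is to verify the high-level conditions of \citet{shedden2026profile} in this scale. They are:
\begin{itemize}
\item[(i)] asymptotic normality of the standardized score;
\item[(ii)] local uniform convergence of the standardized observed information to the identity, i.e.\ $\sup_{\Vert\theta\Vert\le M}\Vert \mathcal{I}_n^{-1/2}\{-\nabla^2\ell_n(\psi)\}\mathcal{I}_n^{-1/2} - I_r\Vert = o_p(1)$ for every fixed $M$;
\item[(iii)] local uniform continuity of the standardized expected information, $\sup_{\Vert\theta\Vert\le M}\Vert \mathcal{I}_n^{-1/2}\mathcal{I}_n(\psi)\mathcal{I}_n^{-1/2} - I_r\Vert \to 0$.
\end{itemize}
The key step is to obtain (ii) and (iii) from \eqref{eq:negligible} through the matrices $A$. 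For $\psi = \psi_n + \mathcal{I}_n^{-1/2}\theta$, set $v = \mathcal{I}_n^{-1/2}\theta$. Then $\tilde\Sigma(\psi) = \tilde\Sigma_n^{1/2}\{I + A(v,\psi_n)\}\tilde\Sigma_n^{1/2}$. By the identity $v^\tsp\mathcal{I}_n v = \Vert A(v,\psi_n)\Vert_F^2/2$ and the definition of $\bar a_n$,
\[
\Vert A(v,\psi_n)\Vert \le \bar a_n(\psi_n)\Vert A(v,\psi_n)\Vert_F = \bar a_n(\psi_n)\sqrt2\,\Vert\theta\Vert \to 0
\]
uniformly on $\Vert\theta\Vert\le M$. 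In particular $\psi \in \tilde\Upsilon_n$ for large $n$, since $I + A$ is positive definite. The extended set matters exactly here: the neighborhood can leave $\Upsilon$ but stays inside $\tilde\Upsilon_n$.

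With $\Delta = A(v,\psi_n)$, the standardized information at $\psi$ has entries $\tr\{(I+\Delta)^{-1}A_i(I+\Delta)^{-1}A_j\}/2$ after the rotation. Its difference from $\tr(A_iA_j)/2$ in any standardized direction $w$ is bounded by a constant times $\Vert\Delta\Vert\,\Vert A(w,\psi_n)\Vert_F^2$, which gives (iii).

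For (ii), the observed information equals the expected information plus a centered quadratic form
\[
\tfrac12\{\epsilon^\tsp B\epsilon - \tr(B)\},\qquad \epsilon\sim\rN(0,I),
\]
together with deterministic terms controlled as in (iii). Here $B$ is a product of the $(I+\Delta)^{-1}$ and $A(w,\psi_n)$ factors, and the deterministic terms are handled as in (iii). The quadratic form has variance of order $\Vert B\Vert_F^2$. Using $\Vert A(w)A(w)\Vert_F \le \Vert A(w)\Vert\,\Vert A(w)\Vert_F \le \bar a_n \Vert A(w)\Vert_F^2$, this variance is $O(\bar a_n^2)\to 0$ in standardized units. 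Pointwise convergence then has to be upgraded to uniform convergence over the compact $\theta$-ball. I will do this by equicontinuity in $\theta$ of these forms, using the same operator-norm bounds and the fact that $\epsilon^\tsp\epsilon/n$-type factors are $O_p(1)$ after scaling. I expect this uniformity step to be the main obstacle.

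Given (i)--(iii), the conclusions follow by the arguments of \citet{shedden2026profile}.
\begin{itemize}
\item \emph{Existence.} In $\theta$-coordinates the nuisance score equation is a small perturbation of the linear equation $\mathcal{I}_n^{-1/2}S_n(\psi_n) - \theta = 0$, restricted to the nuisance block with $\theta_1 = 0$. By a Brouwer or Newton--Kantorovich argument on a ball of radius $M$, it has a root there with probability at least $1-\epsilon$ for large $M$. This root satisfies \eqref{eq:nuis_rate}, after converting $\mathcal{I}_{n(22)}^{1/2}$ to the full standardization, which costs bounded factors because $\kappa(\psi_n) < 1$ uniformly, as follows from (i) being nondegenerate.
\item \emph{Uniqueness.} By (ii), the negative Hessian in the nuisance block is positive definite on that ball with probability tending to one, so the root is unique there. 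Any estimator satisfying \eqref{eq:nuisance_root} and \eqref{eq:nuis_rate} eventually lies in the ball and therefore coincides with it.
\item \emph{Limiting law.} A Taylor expansion of $S_1$ around $\psi_n$ gives
\[
S_{n1}(\tilde\psi_n) = S_{n1}(\psi_n) - \mathcal{I}_{n(12)}\mathcal{I}_{n(22)}^{-1}S_{n(2)}(\psi_n) + o_p\bigl((\mathcal{I}_n^{(11)})^{-1/2}\bigr).
\]
The leading term is the efficient score, whose variance is $1/\mathcal{I}_n^{(11)}$. By (iii), $\mathcal{I}^{(11)}_n(\tilde\psi_n)/\mathcal{I}^{(11)}_n(\psi_n)\to1$. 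Together with (i), this gives $T_n(\psi_{n1})\to\chi^2_1$ in distribution, and Lemma~\ref{lem:sequences} completes the proof.
\end{itemize}
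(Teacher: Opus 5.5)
Your skeleton matches the paper's: a score CLT, plus uniform control of the standardized observed and expected information on information-scale balls. Those balls lie in $\tilde{\Upsilon}_n$ because $\Vert A(v,\psi_n)\Vert\le\sqrt{2}\,\bar a_n\Vert\theta\Vert$. However, one load-bearing claim is false.

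\textbf{The false claim.} You assert that $\kappa(\psi_n)$ is bounded away from one and that (i) implies this. It does not. $\mathcal{I}_n^{-1/2}S_n(\psi_n)$ has identity covariance whatever the correlation between $S_{n1}$ and $S_{n(2)}$. $\mathcal{I}_n(\psi_n)\succ0$ only gives $\kappa_n<1$ for each $n$ (Lemma~\ref{lem:kappa}). Sequences with $\kappa_n\to1$ are exactly what the theorem must cover; near-singular random-effects covariances produce them, as in the autism fit with $\kappa_2(\hat\psi)=0.997$. The paper stresses that the theorem needs no bound on $\kappa_n$, unlike Lemma~\ref{lem:reml_start} and Proposition~\ref{prop:onestep}.

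\textbf{Where it is harmless.} For \eqref{eq:nuis_rate} the slip costs nothing. For $h=\psi-\psi_n$ with $h_1=0$, one has $\Vert\mathcal{I}_n^{1/2}h\Vert=\Vert\mathcal{I}_{n(22)}^{1/2}h_{(2)}\Vert$ exactly. Note that, for the symmetric root, this slice is not $\{\theta_1=0\}$ unless $\mathcal{I}_{n(12)}=0$. Likewise, the nuisance block of the standardized Hessian is $Q(\cdot)Q^\tsp$ with $Q=\mathcal{I}_{n(22)}^{-1/2}P_2\mathcal{I}_n^{1/2}$, $P_2=[0,I_{r-1}]$, and $QQ^\tsp=I_{r-1}$.

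\textbf{Where it breaks.} The limiting-law step fails. Its remainder must be $o_p\{(\mathcal{I}_n^{(11)})^{-1/2}\}$, the efficient-score scale, which is smaller than the scale of $S_{n1}$ by the factor $(1-\kappa_n^2)^{1/2}$. Suppose you expand $S_{n1}$ and substitute $\mathcal{I}_{n(22)}^{1/2}(\tilde\psi_{n(2)}-\psi_{n(2)})=\mathcal{I}_{n(22)}^{-1/2}S_{n(2)}(\psi_n)+o_p(1)$. Then the $o_p(1)$ term is multiplied by $\mathcal{I}_{n(12)}\mathcal{I}_{n(22)}^{-1/2}$. On the efficient scale its norm is $\kappa_n(1-\kappa_n^2)^{-1/2}$, which is unbounded as $\kappa_n\to1$. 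This is the same factor that forces \eqref{eq:onestep_margin} for inexact roots.

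\textbf{The missing idea.} Use the exact root through the efficient score with coefficients frozen at the truth: $U_n=[1,-K_n]S_n$, where $K_n=\mathcal{I}_{n(12)}\mathcal{I}_{n(22)}^{-1}$.
On the root event, $S_{n1}(\tilde\psi_n)=U_n(\tilde\psi_n)$.
Since $[1,-K_n]\mathcal{I}_nP_2^\tsp=0$, integrating along the chord, with $h=\tilde\psi_n-\psi_n$, gives
$U_n(\tilde\psi_n)-U_n(\psi_n)=-\int_0^1[1,-K_n]\{\mathcal{J}_n(\psi_n+th)-\mathcal{I}_n\}\,\dd t\,P_2^\tsp h_{(2)}$.
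There is no first-order term.

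Two identities hold exactly for every $n$: $\Vert\{\mathcal{I}_n^{(11)}\}^{1/2}[1,-K_n]\mathcal{I}_n^{1/2}\Vert=1$ and $\Vert\mathcal{I}_n^{1/2}P_2^\tsp\mathcal{I}_{n(22)}^{-1/2}\Vert=1$. They bound the remainder, on the efficient scale, by the supremum of your standardized Hessian error times $\Vert\mathcal{I}_{n(22)}^{1/2}h_{(2)}\Vert=O_p(1)$. No $\kappa_n$ appears, and only the rate \eqref{eq:nuis_rate} is used, never an expansion of the estimator.

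Similarly, obtain $\mathcal{I}_n^{(11)}(\tilde\psi_n)/\mathcal{I}_n^{(11)}\to1$ by writing the ratio as $L\Omega_n^{-1}L^\tsp/\Vert L\Vert^2$, with $L=[1,0]\mathcal{I}_n^{-1/2}$ and $\Omega_n=\mathcal{I}_n^{-1/2}\mathcal{I}_n(\tilde\psi_n)\mathcal{I}_n^{-1/2}$. Do not perturb the Schur complement directly: it is a near-cancelling difference when $\kappa_n\to1$.

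\textbf{A separate problem with uniformity.} Your plan uses operator-norm Lipschitz bounds plus $\epsilon^\tsp\epsilon/n=O_p(1)$. That yields a pathwise Lipschitz constant of order $n\bar a_n^2$, which \eqref{eq:negligible} does not control; it diverges in crossed designs. The paper instead bounds increments of $X_t=\epsilon^\tsp B_t\epsilon-\tr B_t$ in the exponential Orlicz norm via Hanson--Wright, $\Vert X_t-X_{t'}\Vert_\phi\lesssim\Vert B_t-B_{t'}\Vert_F=O(\bar a_n)\Vert t-t'\Vert$. It then chains (Lemma~\ref{lem:qf_sup}).
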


A key step in the proof is showing that the extended set is large enough asymptotically. While an
extended set exists for every fixed $n$, and such sets have been considered
before, the nuisance estimator behaves well only if the set is large on the
information scale. Lemma~\ref{lem:interior} shows that for any
$v$ such that $\tilde{\Sigma}(v)$ is singular, that is, any $v$ on the boundary of the extended set,
$\Vert \mathcal{I}_n(\psi_n)^{1/2}(v - \psi_n)\Vert \geq 1/\{\sqrt{2}\,
\bar{a}_n(\psi_n)\}$, which diverges under \eqref{eq:negligible}.

Because $\tilde{\Upsilon}^{(2)}_n(\psi_{n1})$ is open, a maximizer of the
restricted likelihood over it satisfies \eqref{eq:nuisance_root}, which gives
the following.

\begin{corollary} \label{cor:maximizer}
  If \eqref{eq:negligible} holds along every sequence $(\psi_n)$ in a set $A$
  and, along every such sequence, $\tilde{\psi}_{n(2)}(\psi_{n1})$ is a
  maximizer of $\ell_n(\psi_{n1}, \cdot)$ over
  $\tilde{\Upsilon}^{(2)}_n(\psi_{n1})$ satisfying \eqref{eq:nuis_rate}, then
  \eqref{eq:uniform_coverage} holds.
\end{corollary}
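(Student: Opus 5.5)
The corollary should follow directly from Theorem~\ref{thm:porting}, so the plan is to verify that its hypotheses imply those of the theorem and then apply its conclusion. Fix a sequence $(\psi_n)$ in $A$. By assumption, \eqref{eq:negligible} holds along it, so the first hypothesis of Theorem~\ref{thm:porting} is satisfied. It remains to check that the estimator $\tilde{\psi}_{n(2)}(\psi_{n1})$ of the corollary is one of those covered by the theorem, namely that it satisfies \eqref{eq:nuisance_root} with probability tending to one and also satisfies \eqref{eq:nuis_rate}.

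The rate condition \eqref{eq:nuis_rate} is assumed outright. For \eqref{eq:nuisance_root}, I would use the openness of the extended set. The set $\tilde{\Upsilon}_n = \{\psi : \tilde{\Sigma}_n(\psi) \succ 0\}$ is open because $\psi \mapsto \tilde{\Sigma}_n(\psi)$ is affine and hence continuous, and positive definiteness is an open condition; here one can use continuity of the smallest eigenvalue. Its slice $\tilde{\Upsilon}_n^{(2)}(\psi_{n1})$ is the preimage of $\tilde{\Upsilon}_n$ under the continuous map $\psi_{(2)} \mapsto [\psi_{n1}, \psi_{(2)}^\tsp]^\tsp$, so it is open in $\R{r-1}$. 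On this set, $\ell_n(\psi_{n1}, \cdot)$ is differentiable, since $\log|\tilde{\Sigma}|$ and $\tilde{\Sigma}^{-1}$ are smooth on positive definite matrices. Consequently, whenever a maximizer exists it is an interior local maximum of a differentiable function, and its gradient $S_{n(2)}(\psi_{n1}, \tilde{\psi}_{n(2)})$ vanishes. This gives \eqref{eq:nuisance_root} on every sample for which the estimator is defined as a maximizer, which is all that is required.

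The one delicate point is existence: a maximizer over an open set need not exist. The corollary's hypothesis, however, presupposes $\tilde{\psi}_{n(2)}(\psi_{n1})$ is a maximizer. To be safe, I would read the hypothesis as holding with probability tending to one, which is exactly what \eqref{eq:nuisance_root} demands, so no extra argument is needed. With both conditions in place, Theorem~\ref{thm:porting} yields \eqref{eq:uniform_coverage}, since its conclusion already covers $C_n(\alpha)$ formed with any estimator satisfying \eqref{eq:nuisance_root} and \eqref{eq:nuis_rate} along every sequence in $A$. I do not expect a real obstacle; the only care needed is in stating openness and differentiability precisely.
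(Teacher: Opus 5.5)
Your proposal is correct and matches the paper's proof. Openness of $\tilde{\Upsilon}^{(2)}_n(\psi_{n1})$ makes any maximizer an interior stationary point, which gives \eqref{eq:nuisance_root}; \eqref{eq:nuis_rate} is assumed, so Theorem~\ref{thm:porting} applies. Your added justification of openness (affine $\tilde{\Sigma}$) and of differentiability only spells out what the paper takes for granted.
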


Theorem \ref{thm:porting} requires no bound on the eigenvalues of the correlation matrix
associated with $\mathcal{I}_n(\psi_n)$, or on the size of the efficient
information in comparison to $\mathcal{I}_{n(11)}$; see the discussion of the
efficient score in \webappendix~\ref{sec:lmm_proofs}. When using a one-step estimator for the nuisance parameter, however, the correlation between the scores for interest and nuisance parameters matters.

Recall \eqref{eq:kappa} and let $\kappa_n = \kappa(\psi_n)$. Recall also the
preliminary estimator $\mathring{\psi}_{n(2)}$, which does not depend on
$\psi_1$, and let $\check{\psi}_{n(2)}$ be the one-step update
\eqref{eq:onestep} from it, evaluated at $\psi_{n1}$. The proposition below
requires $\mathring{\psi}_{n(2)}$ to satisfy \eqref{eq:nuis_rate};
Lemma~\ref{lem:reml_start} in \webappendix~\ref{sec:lmm_proofs} shows the
nuisance block of the restricted maximum likelihood estimator does so if the
multiple correlation $\kappa_n$ is asymptotically bounded away from one.

\begin{proposition} \label{prop:onestep}
  If \eqref{eq:negligible} holds, $\mathring{\psi}_{n(2)}$ satisfies
  \eqref{eq:nuis_rate}, and
  \begin{equation} \label{eq:onestep_margin}
    \frac{\kappa_n}{\sqrt{1 - \kappa_n^2}}\,\bar{a}_n \to 0
  \end{equation}
  along every sequence $(\psi_n)$ in $A$, then \eqref{eq:uniform_coverage}
  holds for $C_n(\alpha)$ formed with $\tilde{\psi}_{n(2)} =
  \check{\psi}_{n(2)}$.
\end{proposition}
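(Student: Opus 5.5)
By Lemma~\ref{lem:sequences} it suffices to fix a sequence $(\psi_n)$ in $A$ and show $\pr_{\psi_n}\{T_n(\psi_{n1}) \le q_{1-\alpha}\} \to 1-\alpha$ when $T_n$ is formed with $\check{\psi}_{n(2)}$. The plan is to compare the one-step statistic with the statistic formed with the exact root $\tilde{\psi}_{n(2)}$, whose limit $\chi^2_1$ follows from Theorem~\ref{thm:porting}. Since $\mathring{\psi}_{n(2)}$ satisfies \eqref{eq:nuis_rate}, Lemma~\ref{lem:interior} and \eqref{eq:negligible} give that $(\psi_{n1},\mathring{\psi}_{n(2)})$ lies in $\tilde{\Upsilon}_n$ with probability tending to one, at distance $O_p(1)$ on the information scale while the boundary is at distance of order $1/\bar a_n\to\infty$. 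Consequently $\check{\psi}_{n(2)}$ is well defined with probability tending to one. I would work throughout in standardized coordinates $u = \mathcal{I}_n(\psi_n)^{1/2}(v-\psi_n)$.

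\textbf{Step 1 (local regularity).} Use the Lipschitz-type bounds behind Theorem~\ref{thm:porting}: on balls of fixed radius in $u$, the standardized information $\mathcal{I}_n(\psi_n)^{-1/2}\mathcal{I}_n(v)\mathcal{I}_n(\psi_n)^{-1/2}$ equals $I_r + O(\bar a_n)$. These bounds come from $A(v,\psi)$ having spectral norm at most $\bar a_n$ times its Frobenius norm. Likewise, the standardized observed information minus the standardized expected information is $O_p(\bar a_n)$ uniformly on such balls, since its variance is controlled by $\tr(A^4)\le \Vert A\Vert^2\Vert A\Vert_F^2$. Combining these, a Taylor expansion of the standardized score around $\psi_n$ has remainder $O_p(\bar a_n)$ times the displacement.

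\textbf{Step 2 (one-step is close to the root).} Let $\tilde u$, $\mathring u$, $\check u$ be the standardized nuisance displacements. The one-step update is a Newton-type step with expected information, so by Step~1
\[
\check u - \tilde u = O_p(\bar a_n)\,\Vert \mathring u - \tilde u\Vert = O_p(\bar a_n),
\]
using \eqref{eq:nuis_rate} for both $\mathring u$ and $\tilde u$. The coordinates here are standardized by the full $\mathcal{I}_{n(22)}$ block, not by the efficient information.

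\textbf{Step 3 (transfer to $T_n$; main obstacle).} The numerator of $T_n$ is the efficient-information-standardized score for $\psi_1$. Perturbing the nuisance by $\delta$ changes $S_1$ by roughly $\mathcal{I}_{n(12)}\delta$, plus a remainder. For the exact root, the first-order term is absorbed because $S_{(2)}=0$. For $\check\psi$, the residual score $S_{(2)}(\psi_{n1},\check\psi_{n(2)})$ is nonzero, of order $O_p(\bar a_n)$ in standardized units. Its effect on $S_1$ is $\mathcal{I}_{(12)}\mathcal{I}_{(22)}^{-1}S_{(2)}$, and dividing by $\{\mathcal{I}_{(11)}(1-\kappa_n^2)\}^{1/2}$ turns this into a term of order $\kappa_n(1-\kappa_n^2)^{-1/2}\,\bar a_n$. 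That term is $o_p(1)$ exactly by \eqref{eq:onestep_margin}.

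The obstacle is making this bookkeeping rigorous. One must show that the efficient-score projection $S_1-\mathcal{I}_{(12)}\mathcal{I}_{(22)}^{-1}S_{(2)}$ evaluated at $\check\psi$ differs from its value at $\tilde\psi$ by $O_p(\bar a_n)$ on the efficient scale. The dangerous contribution, the non-projected part, carries exactly the factor $\kappa_n/\sqrt{1-\kappa_n^2}$, and the continuity of $\mathcal{I}^{(11)}$ along the path must hold relatively, which again follows from Step~1.

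With these bounds, $T_n(\psi_{n1})$ formed with $\check\psi_{n(2)}$ minus $T_n(\psi_{n1})$ formed with $\tilde\psi_{n(2)}$ is $o_p(1)$. The latter converges to $\chi^2_1$ by Theorem~\ref{thm:porting}, so Slutsky's theorem and continuity of the $\chi^2_1$ distribution function give the limit $1-\alpha$.
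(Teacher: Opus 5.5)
Your argument is correct, but it routes through a different comparison than the paper's proof.

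\textbf{The paper's route.} The paper never uses the exact root. It verifies the sufficient conditions \eqref{eq:suff_clt}--\eqref{eq:suff_info} directly for $\check{\psi}_{n(2)}$. Three of them involve the nuisance estimator only through \eqref{eq:nuis_rate}, so they carry over verbatim from the proof of Theorem~\ref{thm:porting}. Condition \eqref{eq:suff_nuis} reduces, via Lemma~\ref{lem:kappa}, to showing that the residual $\mathcal{I}_{n(22)}^{-1/2}S_{n(2)}(\psi_{n1},\check{\psi}_{n(2)})$ is $O_p(M_n\bar a_n)$. That bound comes from two integral expansions of the nuisance score: one from $\psi_n$ to $(\psi_{n1},\mathring{\psi}_{n(2)})$, and one from there to $(\psi_{n1},\check{\psi}_{n(2)})$. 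Both are carried out on an information ball whose radius $M_n\to\infty$ is chosen so that $\kappa_n(1-\kappa_n^2)^{-1/2}M_n\bar a_n\to 0$.

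\textbf{Your route.} You take the root $\hat{\psi}_{n(2)}$ of Lemma~\ref{lem:nuis_est} as an oracle comparison. This is legitimate: it exists with probability tending to one under \eqref{eq:negligible} alone, and only its behavior along the fixed sequence matters. You then show that one Newton step from a start at $O_p(1)$ information distance lands within $O_p(\bar a_n)$ of it on the $\mathcal{I}_{n(22)}(\psi_n)^{1/2}$ scale, and transfer this to $T_n$.

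\textbf{What each buys.} Your route gives a stronger statement: the one-step and fully profiled statistics at the truth differ by $o_p(1)$. The paper's route shows more generally that any estimator with the information rate and a small nuisance-score residual suffices, with no root involved.

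\textbf{Closing Step~3.} The bookkeeping you flag closes cleanly if you write $S_{n1} = U_n + K_nS_{n(2)}$, with $U_n$ from \eqref{eq:eff_score}, and use $S_{n(2)}(\psi_{n1},\hat{\psi}_{n(2)}) = 0$.
\begin{itemize}
\item For the $U_n$ part, the identity $\Vert\Xi_n\Vert = 1$ from the proof of Theorem~\ref{thm:porting} bounds $\{\mathcal{I}_n^{(11)}\}^{1/2}\vert U_n(\psi_{n1},\check{\psi}_{n(2)}) - U_n(\psi_{n1},\hat{\psi}_{n(2)})\vert$ by the supremum of $\Vert G_n\Vert$ over the ball times your $\Vert\check u-\tilde u\Vert$. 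There is no $\kappa_n$ factor.
\item For the $K_nS_{n(2)}$ part, Lemma~\ref{lem:kappa} gives a bound of $\kappa_n(1-\kappa_n^2)^{-1/2}O_p(\bar a_n)$, which vanishes by \eqref{eq:onestep_margin}.
\end{itemize}

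\textbf{A caution on Step~1.} The pointwise variance bound $\tr(A^4)\le\Vert A\Vert^2\Vert A\Vert_F^2$ does not by itself give control uniformly over a ball. You need that uniformity, because you integrate the observed information along random segments. It requires a chaining argument, which is exactly Lemma~\ref{lem:local_ulln}.
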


The nuisance score does not in general vanish at the one-step estimator, which
adds a term to the standardized score. Condition \eqref{eq:onestep_margin}
makes that term tend to zero in probability, so it does not affect the
limiting distribution. When $\kappa_n$ is bounded
away from one, \eqref{eq:onestep_margin} reduces to \eqref{eq:negligible}.
Although $\kappa_n$ and $\bar{a}_n$ depend on the unknown $\psi_n$, $\kappa_n$
can be estimated by evaluating $\kappa$ at an estimate of $\psi_n$, and
$\bar{a}_n$ can often be bounded uniformly in the parameter
\citep{ekvall2026uniform}.

Example \ref{ex:common_design} verifies the conditions of
Corollary~\ref{cor:maximizer} in a simple mixed model, where a maximizer over
the extended set exists almost surely for every $n$.

\begin{example} \label{ex:common_design}
Suppose the $n = mk$ observations form $m$ clusters of size $k$ with a common
design: for $i \in \{1, \dots, m\}$,
\[
  Y_i = X_1\beta + Z_1U_i + E_i,
\]
where $X_1 \in \R{k\times p}$ has full column rank, $Z_1 \in \R{k\times q_1}$,
and the $U_i \sim \rN_{q_1}(0, \Psi_1)$ and $E_i \sim \rN_k(0, \psi_r I_k)$
are independent. This is \eqref{eq:lmm} with $X = 1_m \otimes X_1$, $Z = I_m
\otimes Z_1$, and $\Psi = I_m \otimes \Psi_1$, where $\otimes$ is the Kronecker
product and $\Psi_1 = \Psi_1(\psi) = \sum_{j = 1}^{r - 1}\psi_j H_{1j}$ for
known symmetric $H_{1j} \in \R{q_1 \times q_1}$. For $v \in \R{r}$, let
$\Sigma_1(v) = Z_1\Psi_1(v)Z_1^\tsp + v_r I_k$, so that $\Sigma(v) = I_m
\otimes \Sigma_1(v)$. The case $Z_1 = X_1 = [1_k, t]$ is a random intercept
and slope with common observation times $t \in \R{k}$. Asymptotics are as $m \to
\infty$ with $k$, $p$, and $r$ fixed, so $n \to \infty$.

\begin{proposition} \label{prop:common_design}
  If $k$, $p$, and $r$ are fixed, $m \geq k + 1$, and $\Sigma_1(v) \neq 0$ for
  every non-zero $v \in \R{r}$, then $\bar{a}_n(\psi) \leq (m - 1)^{-1/2}$ for
  every $\psi \in \tilde{\Upsilon}_n$ and, almost surely,
  $\ell_n(\psi_1, \cdot)$ has a maximizer on
  $\tilde{\Upsilon}^{(2)}_n(\psi_1)$ for every $\psi_1$ such that
  $\tilde{\Upsilon}^{(2)}_n(\psi_1)$ is non-empty. If, in addition,
  $\tilde{\psi}_{n(2)}(\psi_1)$ is such a maximizer for every $n$ and
  $\psi_1$, then \eqref{eq:nuis_rate} holds along every sequence $(\psi_n)$ in
  $\Upsilon$, and hence \eqref{eq:uniform_coverage} holds with $A = \Upsilon$.
\end{proposition}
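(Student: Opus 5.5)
The proof has three parts, in order: the bound on $\bar a_n$ via a block diagonalization of $\tilde\Sigma$; almost-sure existence of a maximizer via coercivity of $-\ell_n(\psi_1,\cdot)$ on the extended slice; and \eqref{eq:nuis_rate} for any maximizer, after which Corollary~\ref{cor:maximizer} gives \eqref{eq:uniform_coverage} with $A=\Upsilon$.

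\textbf{Step 1: block diagonalization and the bound on $\bar a_n$.} Since $a_n$ does not depend on $V$, I will pick a convenient one. Let $[m^{-1/2}1_m, C]$ be an orthogonal $m\times m$ matrix, and let $V_1\in\R{k\times(k-p)}$ be semi-orthogonal with $V_1^\tsp X_1=0$. Take $V=[C\otimes I_k,\ m^{-1/2}1_m\otimes V_1]$; this is semi-orthogonal and satisfies $V^\tsp X=0$. Then
\[
  \tilde\Sigma(v)=\bdiag\{I_{m-1}\otimes\Sigma_1(v),\ V_1^\tsp\Sigma_1(v)V_1\}.
\]
Because $m\ge 2$, $\psi\in\tilde\Upsilon_n$ forces $\Sigma_1(\psi)\succ0$. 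Put $B=\Sigma_1(\psi)^{-1/2}\Sigma_1(v)\Sigma_1(\psi)^{-1/2}$ and $\tilde B=W^{-1/2}V_1^\tsp\Sigma_1(v)V_1W^{-1/2}$ with $W=V_1^\tsp\Sigma_1(\psi)V_1$. Then $A(v,\psi)=\bdiag(I_{m-1}\otimes B,\tilde B)$.
\begin{itemize}
  \item The eigenvalues of $\tilde B$ are generalized Rayleigh quotients $x^\tsp\Sigma_1(v)x/x^\tsp\Sigma_1(\psi)x$ restricted to $x$ in the range of $V_1$. Hence $\Vert\tilde B\Vert\le\Vert B\Vert$, and so $\Vert A\Vert=\Vert B\Vert$.
  \item For the Frobenius norm, $\Vert A\Vert_F^2\ge(m-1)\Vert B\Vert_F^2\ge(m-1)\Vert B\Vert^2$.
  \item $B\neq0$ whenever $v\neq0$, by the hypothesis $\Sigma_1(v)\neq0$.
\end{itemize}
Together these give $a_n(v,\psi)\le(m-1)^{-1/2}$ for every unit $v$, which is the claimed bound on $\bar a_n$.

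\textbf{Step 2: existence of a maximizer.} In the same coordinates, the rows of $(C^\tsp\otimes I_k)Y$ give $m-1\ge k$ i.i.d.\ $\rN_k(0,\Sigma_1(\psi_n))$ vectors. Their cross-product matrix $S$ is therefore almost surely positive definite. The last block is $y=V_1^\tsp\bar Y\sqrt m$, which is almost surely nonzero in every direction that matters. Then
\[
  -2\ell_n=(m-1)\{\log|\Sigma_1(v)|+\tr(\Sigma_1(v)^{-1}S)/(m-1)\}+\log|V_1^\tsp\Sigma_1 V_1|+y^\tsp(V_1^\tsp\Sigma_1V_1)^{-1}y .
\]
I will show that this diverges to $+\infty$ as $v_{(2)}$ approaches the boundary of $\tilde\Upsilon_n^{(2)}(\psi_1)$ or $\Vert v_{(2)}\Vert\to\infty$. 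This is the usual eigenvalue argument: each term $\log\lambda_j+s/\lambda_j$, with $s=\lambda_{\min}(S)>0$, is bounded below and blows up at $0$ and at $\infty$. The hypothesis $\Sigma_1(v)\ne0$ for $v\neq0$ makes $v\mapsto\Sigma_1(v)$ injective, so $\Vert v\Vert\to\infty$ forces $\lambda_{\max}(\Sigma_1)\to\infty$. The possibly negative term $\log|V_1^\tsp\Sigma_1V_1|$ is at most $(k-p)\log\lambda_{\max}(\Sigma_1)$, and this is dominated by $(m-1)\log\lambda_{\max}$ because $m-1\ge k>k-p$. If instead $V_1^\tsp\Sigma_1V_1$ degenerates while $\Sigma_1\succ0$, the term $-\log\lambda+c/\lambda$ with $c>0$ almost surely diverges. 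A continuous coercive function on an open set attains its maximum, so a maximizer exists almost surely.

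\textbf{Step 3: \eqref{eq:nuis_rate} for the maximizer (main obstacle).} By Theorem~\ref{thm:porting}, a root satisfying \eqref{eq:nuis_rate} exists with probability tending to one. The task is to show that the global maximizer obeys the rate, uniformly along arbitrary sequences in $\Upsilon$. I will first use equivariance: write $\Sigma_1(v)=\Sigma_1(\psi_n)^{1/2}B\,\Sigma_1(\psi_n)^{1/2}$, so the problem becomes maximizing
\[
  f(B)=-\log|B|-\tr(B^{-1}\hat S),\qquad \hat S=I+O_p(m^{-1/2}) \text{ uniformly},
\]
over $I+\mathcal L_n$. Here $\mathcal L_n$ is the image of the nuisance directions, a linear subspace of symmetric $k\times k$ matrices of fixed dimension, and the information norm is $\sqrt{m}\,\Vert B-I\Vert_F$ up to the $O(1)$ between-cluster piece. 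Since the Grassmannian is compact and $f$ has its unique unconstrained maximum at $\hat S$ with a curvature bound that is uniform on compacts, I expect a uniform argument to give $\Vert\hat B-I\Vert_F=O_p(m^{-1/2})$. The between-cluster term is a single-cluster contribution, smaller by a factor $m$, which Step 2's domination shows cannot move the maximizer far. To reduce to a compact set, I will first rule out maximizers with $\Vert B-I\Vert_F$ large using the coercivity bounds of Step 2 uniformly in $\mathcal L_n$. Making this uniformity over subspaces rigorous is the step I expect to be hardest. Once it holds, Corollary~\ref{cor:maximizer} together with Step 1 ($\bar a_n\to0$) yields \eqref{eq:uniform_coverage} with $A=\Upsilon$.
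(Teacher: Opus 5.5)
Your Step~1 is the paper's argument (your $\tilde B$ is exactly the paper's $Q^\tsp A_1 Q$). Your Step~2 is a workable coercivity version of the paper's compact-superlevel-set argument, with one slip. The term $\log\vert V_1^\tsp\Sigma_1(v)V_1\vert$ enters $-2\ell_n$ with a plus sign, so you need a \emph{lower} bound. The right one is $\log\vert V_1^\tsp\Sigma_1(v)V_1\vert\geq(k-p)\log\lambda_{\min}\{\Sigma_1(v)\}$, which is swamped by $(m-1)s/\lambda_{\min}\{\Sigma_1(v)\}$ as $\lambda_{\min}\to0$. Your upper bound through $\lambda_{\max}$ controls a direction in which that term only helps. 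Also, $\lambda_{\min}(V_1^\tsp\Sigma_1V_1)\geq\lambda_{\min}(\Sigma_1)$, so degeneration of $V_1^\tsp\Sigma_1V_1$ is not a separate case.

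The genuine gap is Step~3, which carries the second half of the proposition and which you leave as an expectation. Invoking Theorem~\ref{thm:porting} does not help: it gives a root near the truth with probability tending to one, but says nothing about where the \emph{global} maximizer over the extended slice lies. The uniformity over constraint subspaces $\mathcal{L}_n$ that you single out as the obstacle is not needed at all.

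The missing idea is that the comparison is one-sided. Since $\psi_n\in\Upsilon\subseteq\tilde{\Upsilon}_n$, the point $I\in I+\mathcal{L}_n$ is feasible, so any maximizer satisfies $\ell_n(\tilde\psi_n)\geq\ell_n(\psi_n)$. In your notation, leaving aside the between-cluster block, this gives $f(\hat S)-f(\hat B)\leq f(\hat S)-f(I)$.
\begin{itemize}
\item The left side is $\sum_j\phi(\nu_j)$, with $\phi(x)=x-\log x-1$ and $\nu_j$ the eigenvalues of $\hat B^{-1}\hat S$ (these are the paper's $\nu_j(\tilde\psi_n)$).
\item The right side is $\sum_j\phi(\hat s_j)=O_p(m^{-1})$, with a distribution free of $\psi_n$ because $(m-1)\hat S$ is Wishart with identity scale.
\item Since $\phi(x)\geq(x-1)^2/8$ on $[1/2,2]$ and $\phi$ is bounded away from zero off neighborhoods of $1$, this yields $\max_j\vert\nu_j-1\vert=O_p(m^{-1/2})$, hence $\Vert\hat B-I\Vert=O_p(m^{-1/2})$.
\item Then $h^\tsp\mathcal{I}_n(\psi_n)h=\Vert A(h,\psi_n)\Vert_F^2/2\leq(mk/2)\Vert\hat B-I\Vert^2=O_p(1)$, whatever $\mathcal{L}_n$ is.
\end{itemize}

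The between-cluster block cannot simply be dismissed as smaller by a factor $m$, because $-\log\vert V_1^\tsp\Sigma_1V_1\vert$ is unbounded above near the boundary. The paper controls it with two bounds:
\begin{itemize}
\item $\log\vert\Sigma_e(\psi')\vert-\log\vert\Sigma_e(\psi)\vert\leq(k-p)\{\log\nu_k(\psi)-\log\nu_1(\psi')\}$, where $\Sigma_e=V_1^\tsp\Sigma_1V_1$;
\item the convexity bound $(k-p)\log x\leq(m-1)\phi(x)/2+2k$, which uses $(k-p)^2\leq k(m-1)$.
\end{itemize}
Together these give $\ell_n(\psi)-\ell_n(\psi')\leq k+\varrho(\psi')/2-\tfrac{m-1}{4}\sum_j\phi\{\nu_j(\psi)\}$, with $\varrho(\psi_n)=O_p(1)$ by pivotality. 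That single inequality delivers both the existence of a maximizer and \eqref{eq:nuis_rate}. Deriving it first would have made your Steps~2 and~3 one argument.
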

\end{example}

% simulations.tex

\section{Simulations} \label{sec:simulations}

We study finite-sample properties of the proposed intervals and compare them to
Wald, likelihood ratio, and constrained score intervals. The latter result from
using the smaller parameter set for the nuisance parameters; the interest
parameter has the same set for all methods. All intervals are based on the
restricted likelihood. In every simulation setting, the true parameter is
interior and, therefore, each interval is calibrated against the
$\chi^2_1$-distribution. The simulations show how the behavior of the intervals
depends on the proximity of the parameter to the boundary and other model
quantities. Code reproducing all results is available at
\texttt{https://github.com/koekvall/reconf-suppl}.

The Wald statistic is the squared difference between the estimate and the null
value, times the efficient observed information. Near the boundary the
constrained estimate need not be a stationary point and that information can be
negative, in which case the usual statistic is undefined and the replication is
lost. We use the absolute value instead, so that every replication contributes.
The choice favors the Wald intervals, whose coverage is worse without it.

We first consider coverage probabilities for the covariance between
a random intercept and slope. Data were generated as $Y_{ij} = X_{ij}^\tsp
\beta + Z_{ij}^\tsp U_i + E_{ij}$ for $n_1$ groups indexed by $i$ and $n_2$
observations per group indexed by $j$, so $n = n_1 n_2$, where $X_{ij} = Z_{ij}
= [1, x_{ij}]^\tsp$ and the $U_i \sim \rN_2(0, \Psi_1)$ and $E_{ij} \sim \rN(0,
\psi_4)$ are independent, with
\[
  \Psi_1 = \begin{bmatrix} \psi_1 & \psi_2 \\ \psi_2 & \psi_3 \end{bmatrix}.
\]
The predictors were drawn from the uniform distribution on $(-1, 1)$,
independently across $i$ and $j$; the restricted likelihood does not depend on $\beta$, so it can be zero without loss of generality. In the
notation of Section~\ref{sec:model}, $q = 2n_1$, $r = 4$, and $\Psi(\psi) =
I_{n_1} \otimes \Psi_1$, so $\Upsilon$ is the set of $\psi$ with
$\psi_1 \geq 0$, $\psi_3 \geq 0$, $\psi_1 \psi_3 \geq \psi_2^2$, and $\psi_4 >
0$. The parameter of interest is the covariance $\psi_2$, and the error
variance is $\psi_4 = 1$.

Figure \ref{fig:routes} shows how coverage changes as the variances $\psi_1$ and
$\psi_3$, here set to be the same, approach the boundary (panel (a)) and as the
correlation $\psi_2 / \sqrt{\psi_1 \psi_3}$ approaches one (panel (b)). The
proposed interval is near-nominal across parameter values. The other intervals
are sometimes conservative and sometimes anti-conservative. For example, across
the two panels the Wald interval covers between $0.902$ and $0.995$, the
likelihood ratio between $0.937$ and $0.992$, and the constrained score interval
between $0.448$ and $0.976$, against the proposed interval's $0.945$ to
$0.956$.

Recall that the two score intervals differ only in the set on which the nuisance
parameter is estimated. On the smaller set that estimate can lie on the
boundary, where the nuisance score need not vanish and the standardization in
\eqref{eq:standardized_score} no longer removes the effect of estimating it. In
the case where the estimate lies on the boundary, the
interval can then be empty, and it was in $54\%$ of replications at correlation
$0.999$.

\begin{figure}[ht]
    \includegraphics[width=\linewidth]{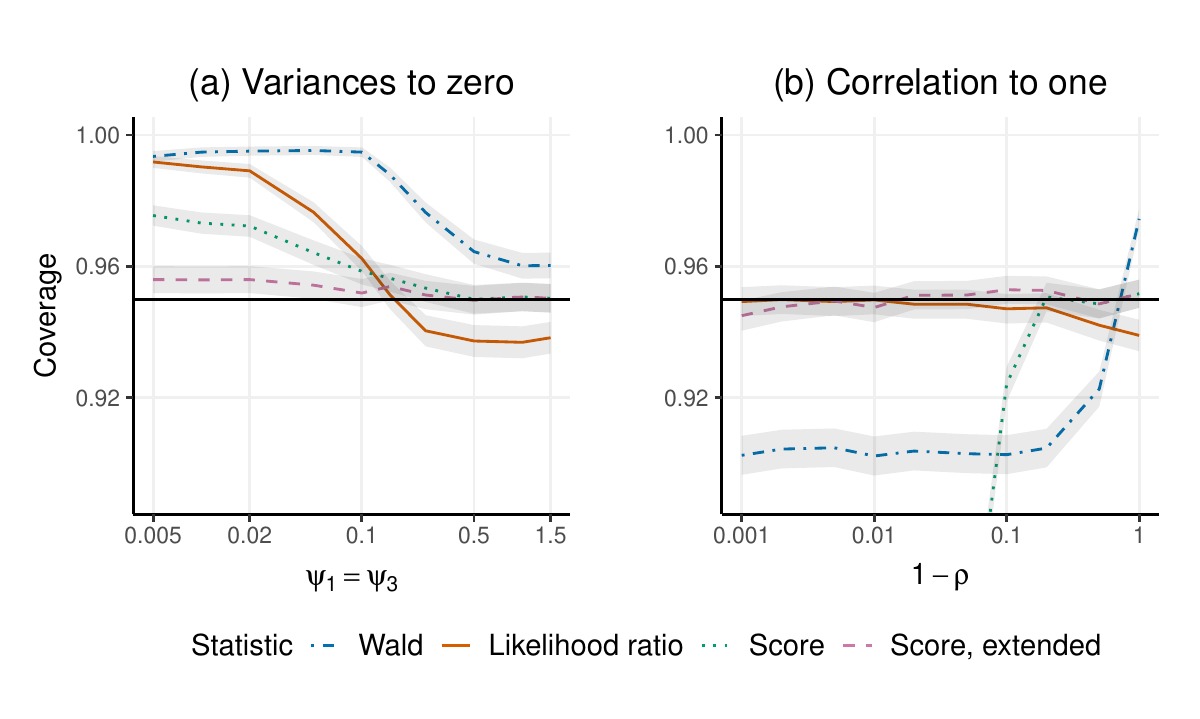}
    \caption{Coverage of nominal $95\%$ intervals for the covariance of a
    correlated random intercept and slope as the parameter approaches the
    boundary (a) $20$ groups of $10$
    observations, $\psi = (\psi_1, 0.2\psi_1, \psi_1, 1)$. (b) $20$ groups of $10$
    observations, $\psi = (2.5, 2.5\rho, 2.5, 1)$, the correlation approaching
    one with the variances held fixed; the horizontal axis is $1 - \rho$.
    Both horizontal axes are logarithmic. The vertical axis is cut, since the
    constrained score interval's coverage near correlation one would otherwise
    obscure the comparison for the other intervals. The solid line is the
    nominal level and the bands are two standard errors from $10\,000$
    replications.}
    \label{fig:routes}
\end{figure}

\begin{figure}[ht]
    \includegraphics[width=0.75\linewidth]{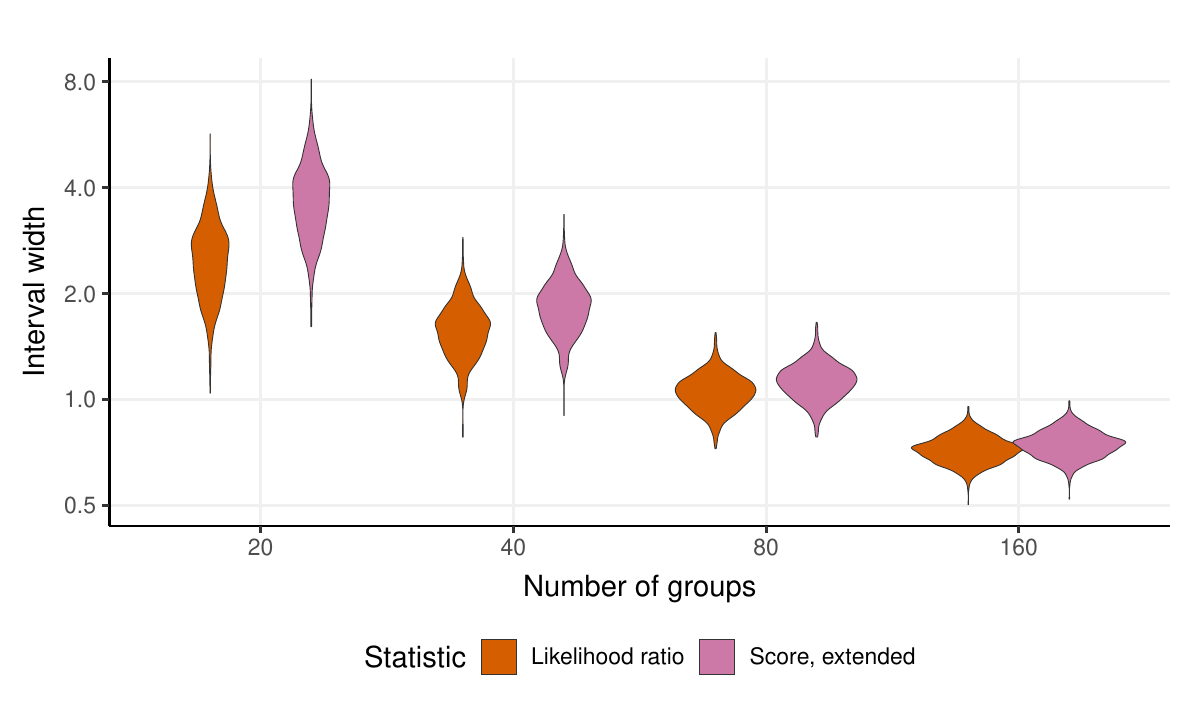}
    \caption{Widths of the likelihood-ratio interval and the
    proposed interval in the correlated design $\psi = (2, 0.4, 2, 1)$
    with $20$, $40$, $80$, or $160$ groups of $10$ observations, from $1000$
    replications; the vertical axis is logarithmic.}
    \label{fig:width}
\end{figure}

% P6c -- width
Figure~\ref{fig:width} compares widths of the proposed intervals and the likelihood ratio ones. The setting is chosen so that both have
near-nominal coverage probability, in the correlated design from Fig.~\ref{fig:routes} with $\psi = (2, 0.4, 2, 1)$ and $n_1 = 20$, $40$, $80$, or $160$ groups. There is substantial overlap between the width distributions, but the likelihood ratio is clearly shorter on average with $n_1 = 20$, and somewhat shorter when $n_1 = 40$. As the sample size increases, the width difference decreases. The median ratios of lengths, paired by replication, were $1.48$, $1.16$, $1.07$, and $1.04$ for $n_1$ equal to $20$, $40$, $80$, and $160$, respectively.

\begin{figure}[ht]
    \includegraphics[width=\linewidth]{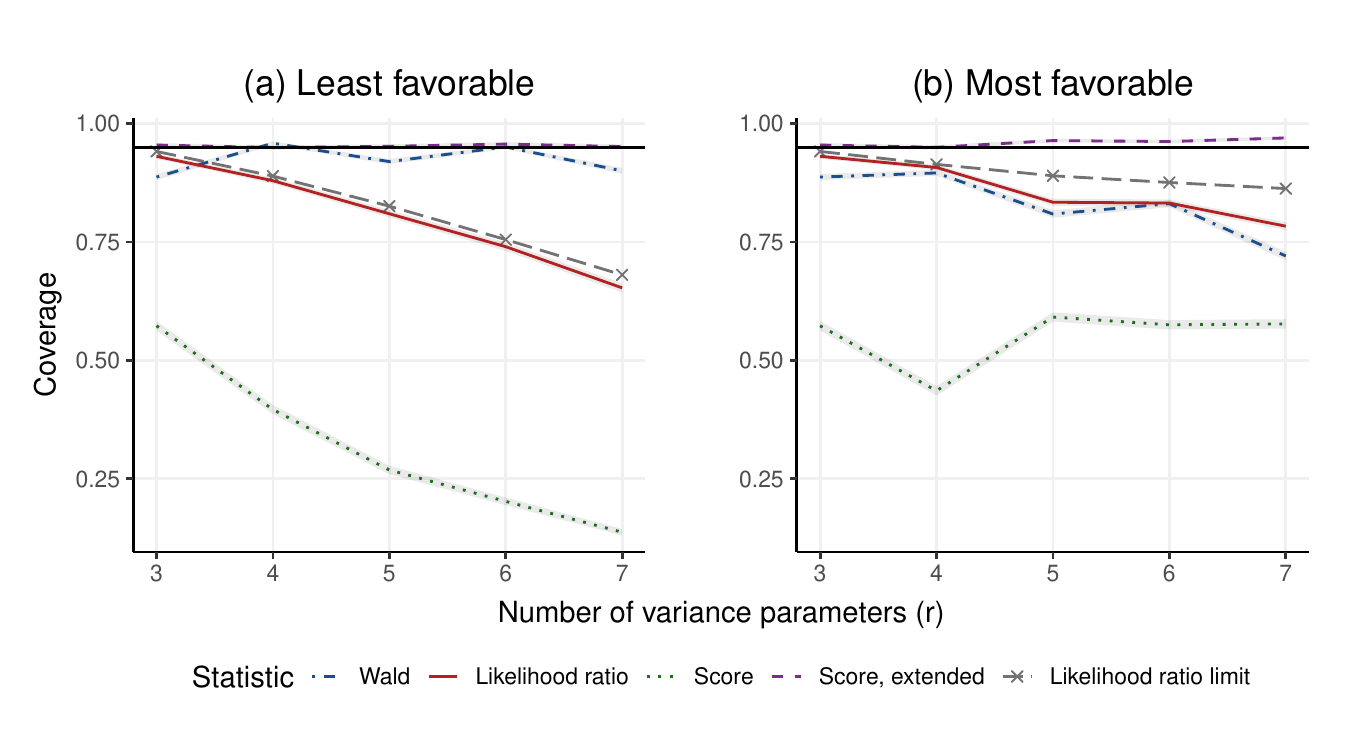}
    \caption{Coverage of nominal $95\%$ intervals for $\psi_1$ in a variance
    components model with $r$ parameters (horizontal axis). The likelihood
    ratio limit, in grey with crosses, is computed from the local
    approximation and does not depend on $n$.}
    \label{fig:boundary_k}
\end{figure}

% P6 -- Figure 2, the mechanism
Figure~\ref{fig:boundary_k} shows how the number of nuisance parameters can, in
designs with correlated score elements and near-boundary parameters, seriously
affect coverage of common intervals. The model is a variance components model,
$Y \sim \rN(0, \sum_{j = 1}^r \psi_j K_j)$, with $K_r = I_n$ the identity
matrix. For simplicity, we constructed $K_1, \ldots, K_{r-1}$ to be diagonal
matrices such that the multiple correlation between the interest and nuisance
scores is $\kappa_n = 0.99$ at every $r$, so the nuisance parameters inflate the
variance for the interest parameter by $(1 - \kappa_n^2)^{-1} = 50$; this is
similar to the data example in Section \ref{subsec:autism}. We also constrained
every nuisance score element to be equally correlated with the interest score,
and with every other nuisance score, after eliminating the error variance as in
the definition of $\kappa_n$. Among all such designs, we searched for the ones
giving the smallest (panel (a)) and greatest (panel (b)) likelihood ratio
limit, the coverage of the likelihood ratio interval when the log-likelihood is
quadratic and the parameter set a cone. Both panels show that limit together
with the simulated coverages. The interval is
for $\psi_1$, with $\psi_1 = \psi_r = 1$ and $\psi_2 = \dots = \psi_{r - 1} =
0.001$, so the nuisance components are near their boundary but interior, and $n
= 800$ at every $r$.

The coverage of the likelihood ratio interval decreases from about $0.93$ at $r
= 3$ to $0.65$ at $r = 7$ in panel (a), and to about $0.78$ in panel (b). Wald
intervals have near nominal coverage in panel (a), but are anti-conservative in
panel (b), reaching below 0.75 at worst. Other structures give other values at
the same $\kappa_n$, depending on the whole geometry and on the unknown
parameter. In practice, therefore, it is in general impossible to
calibrate Wald and likelihood ratio intervals correctly.

% data_examples_autism.tex

\section{Data examples} \label{sec:data_examples}

\subsection{Socialization in children with autism} \label{subsec:autism}

The Vineland Socialization Age Equivalent score measures
parent-reported socialization. \citet{anderson2009patterns} recorded it at ages
$2$, $3$, $5$, $9$, and $13$ years in a longitudinal study of children with
autism spectrum disorders. We use the $610$ observed scores on $158$ children
distributed with the \texttt{WWGbook} package \citep{west2014linear}.
Socialization can improve with age at a rate that differs between children, so we
fit a random intercept and a random slope in age per child,
\begin{align} \label{eq:autism}
  Y_{it} = \beta_1 + \beta_2 X_{it} + \beta_3^\tsp G_i + U_{1i} + U_{2i}X_{it} +
  E_{it},
\end{align}
where $Y_{it}$ is the score for child $i$ at age $X_{it}$, $G_i$ indicates the
child's expressive language group at age two, and the two random effects are
correlated within child. This is \eqref{eq:lmm} with $\Psi =
\bdiag(\Psi_1, \dots, \Psi_{158})$ and $\Psi_i$ the $2\times 2$ matrix with
diagonal $(\psi_1, \psi_3)$ and off-diagonal $\psi_2$. The covariance $\psi_2$
is of interest in its own right: it says whether children who start higher improve faster or more slowly, and by how much.

Because the spread of the response grows with its level, as panel~(a) of
Figure~\ref{fig:autism} shows, we take the error standard deviation
proportional to age and fit \eqref{eq:autism} with known weights $w_{it} =
X_{it}^{-2}$, which is \eqref{eq:lmm} after multiplying $Y$, $X$, and $Z$ by
$W^{1/2}$, so that the theory and the software apply unchanged and $\psi_4$
is the variance of $w_{it}^{1/2}E_{it}$.

The restricted maximum likelihood estimate of the random-effects covariance
is nearly singular: its correlation is $-0.974$, its eigenvalues are $59.5$
and $0.56$, and the direction belonging to the smaller one contributes on
average $0.23$ times the error variance to a weighted observation, against
$1.81$ for the larger. The estimate is thus close to the boundary of
$\Upsilon$ without being on it, suggesting a true parameter near the boundary.
The scores of the components are also strongly correlated: writing
$\kappa_j(\hat{\psi})$ for the multiple correlation \eqref{eq:kappa} with
$\psi_j$ as the interest parameter, evaluated at the estimate $\hat{\psi}$,
$(\kappa_1(\hat{\psi}), \dots, \kappa_4(\hat{\psi})) = (0.990, 0.997, 0.989,
0.489)$, so at that value estimating the other components inflates the
variances of the estimators of $\psi_1$, $\psi_2$, and $\psi_3$ by factors of
$49$, $162$, and $47$ relative to their values when the other components are
known. Most of the uncertainty about $\psi_1$, $\psi_2$, and $\psi_3$ therefore
comes from having to estimate the other components.

Table~\ref{tab:autism} gives $95\%$ intervals for all four components, and
panel~(b) of Figure~\ref{fig:autism} shows the inversion behind the interval for
$\psi_2$, which excludes zero under every method: children who start higher
improve more slowly. The proposed intervals, with the nuisance parameters
partially maximized on the extended set, are finite for every component. We also
include profile intervals from \texttt{lme4}, in the last column, because the
package is widely used. However, that package profiles only the unrestricted
likelihood, even when the fit is restricted. Here three of the four intervals it
returns have an endpoint at zero or infinity, returned when the profiling step
fails rather than computed from the likelihood. The failures illustrate that
settings near the boundary are also difficult computationally. Computing all
four intervals took about $0.5$ seconds with \texttt{reconf} and about $7.0$
seconds with \texttt{lme4} (medians of five runs).

Inverting the restricted likelihood ratio directly, which gives the
profile-likelihood intervals in Table~\ref{tab:autism}, returns an interval for
every component, and the profile-score intervals are between $1$ and $5$ percent
wider than the profile-likelihood intervals. Panel~(b)
also shows the Wald statistic: its interval is symmetric about the estimate by
construction, the shortest of the three, and shifted toward zero relative to the
other two. The extra width of the profile-score intervals may be
read as the price of a uniform validity guarantee the profile-likelihood and
Wald intervals do not have.

\begin{figure}[ht]
    \includegraphics[width=\linewidth]{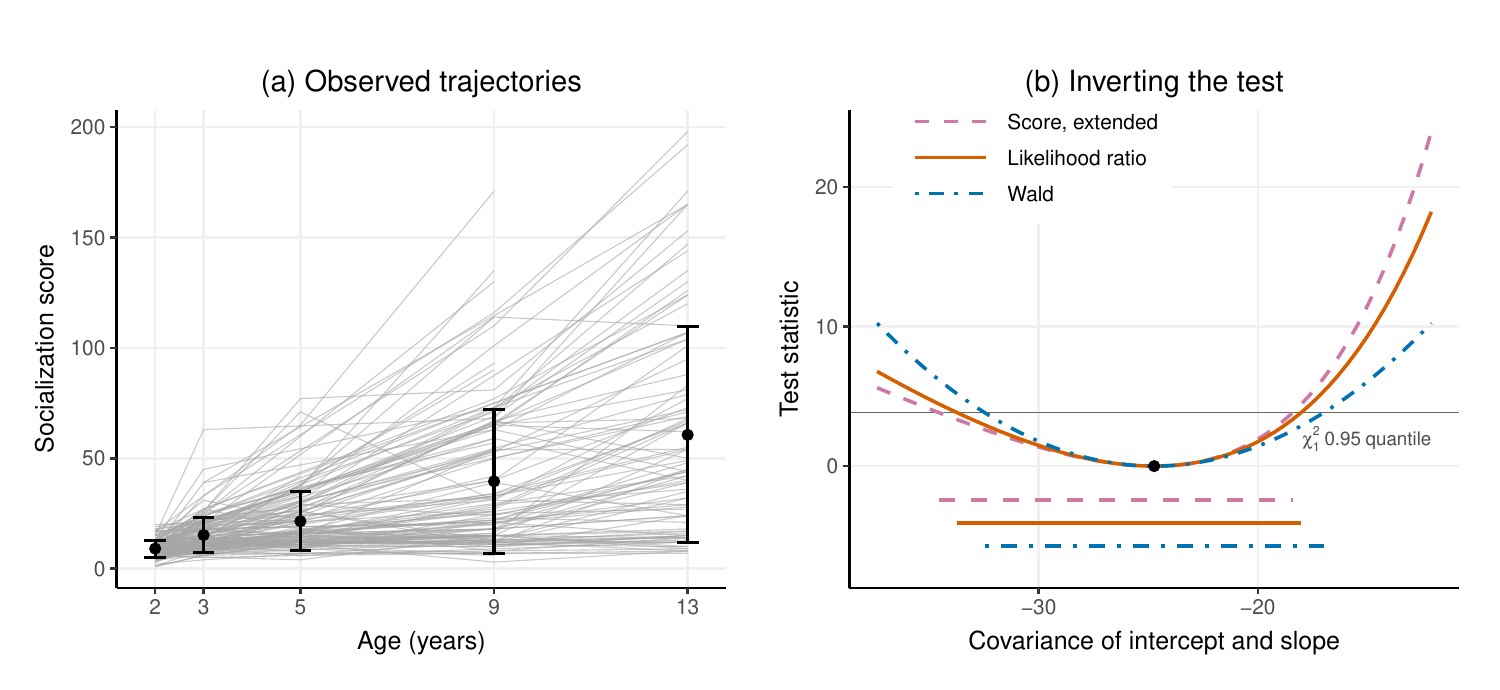}
    \caption{Autism socialization data. (a) Observed trajectories for the $158$
    children, with the mean and one standard deviation at each age. (b) The
    profile score, restricted likelihood ratio, and Wald statistics as functions
    of the intercept--slope covariance, with the $0.95$ quantile of the
    $\chi^2_1$-distribution; the point marks the restricted maximum likelihood
    estimate and the bars below give the interval each statistic returns.}
    \label{fig:autism}
\end{figure}

\begin{table}[ht]
\centering
\caption{Estimates and 95\% confidence intervals for the autism data}
\footnotesize\setlength{\tabcolsep}{4pt}
\begin{tabular}{llrccc}
\toprule
  & & & \multicolumn{3}{c}{95\% Confidence Interval} \\
  \cmidrule(lr){4-6}
  Component & & REML & Profile score & Profile likelihood & \texttt{lme4} \\
\midrule
  $\psi_1$ & Intercept variance & 46.089 & $(31.465,\, 68.464)$ & $(30.887,\, 66.256)$ & $(0.000,\, \infty)$ \\
  $\psi_2$ & Covariance & $-24.727$ & $(-34.566,\, -18.401)$ & $(-33.737,\, -18.018)$ & $(-\infty,\, -23.631)$ \\
  $\psi_3$ & Slope variance & 13.993 & $(10.941,\, 18.759)$ & $(10.712,\, 18.409)$ & $(12.885,\, \infty)$ \\
  $\psi_4$ & Error variance & 2.521 & $(2.176,\, 2.994)$ & $(2.159,\, 2.966)$ & $(2.163,\, 2.948)$ \\
\bottomrule
\end{tabular}
\label{tab:autism}
\end{table}

% data_examples_snp.tex

\subsection{Genetic variance components in mice} \label{subsec:snp}

A single-nucleotide polymorphism (SNP) is a position in the genome at which
the alleles observed in a population differ in one nucleotide; the SNPs
studied here have two alleles. In mice, the genotype at a SNP is the number
of copies, $0$, $1$, or $2$, of the less common allele. The frequency of
that allele among the genotyped individuals is the SNP's minor allele
frequency. For $n$ individuals genotyped at $s$ SNPs, let $D \in \R{n \times s}$ be
the matrix of genotypes, with the column for a SNP of minor allele frequency
$\hat{p}$ centered at its sample mean $2\hat{p}$ and scaled by
$\{2\hat{p}(1 - \hat{p})\}^{1/2}$; missing genotypes, $3\%$ of the total,
are set to the sample mean before centering. Partition the SNPs into groups,
and let $D_j$ hold the columns of $D$ in group $j$ and $s_j$ their number.
The genetic relationship matrix of group $j$, $K_j = c_jD_jD_j^\tsp/s_j$
with $c_j$ chosen so that the diagonal elements of $K_j$ average one,
measures the genetic similarity of every pair of individuals through the
SNPs in that group; its entries are proportional to estimates of twice the
kinship coefficients of quantitative genetics. Grouping by
minor allele frequency separates the contributions of rare and common
variants \citep{yang2015}.

We illustrate the method on white blood cell count in $742$ Diversity
Outbred mice, a population descended from eight inbred founder strains, using the
\texttt{DO\_Gatti2014} data in the \texttt{qtl2data} repository, which holds
data in the format of the \texttt{qtl2} package. The $6044$ autosomal SNPs are
divided into four groups by minor allele frequency, with cut points $0.05$,
$0.15$, and $0.3$. We use four groups rather than the seven of
\citet{yang2015} because the SNPs are mostly common: the median minor
allele frequency is $0.38$, and only $16$ SNPs fall below $0.05$. The
fixed effects are sex, birth date, and the number of
generations of outbreeding since the population was founded. The model is
\eqref{eq:lmm} with $Z = [c_1^{1/2}D_1/s_1^{1/2}, \dots, c_4^{1/2}D_4/s_4^{1/2}]$ and $\Psi$
block diagonal with blocks $\psi_jI_{s_j}$, so that $r = 5$ and $\Sigma =
\sum_{j = 1}^4\psi_jK_j + \psi_5I_n$: every SNP has an independent random
effect, with a variance that depends only on its group. Table~\ref{tab:snp} lists the groups
and the number of SNPs $s_j$ in each. The questions in such an analysis
are whether a class of variants contributes to the trait at all, and how
large the components are relative to each other, which ranks the classes by their contribution.

The restricted maximum likelihood estimate of $\psi_2$ is zero, suggesting a true parameter near or at the boundary.
The scores of the components are correlated: with $\psi_4$ as the
interest parameter, the multiple correlation \eqref{eq:kappa} between its
score and the nuisance scores, evaluated at the restricted maximum likelihood
estimate $\hat{\psi}$, is $\kappa(\hat{\psi}) = 0.74$, so at that value the
nuisance parameters inflate the variance of the estimator of $\psi_4$ by the
factor $\{1 - \kappa(\hat{\psi})^2\}^{-1} = 2.2$.

Table~\ref{tab:snp} gives $95\%$ intervals for the five components: the proposed
profile-score intervals, profile-likelihood intervals as in
Section~\ref{subsec:autism}, and the universal inference intervals of
\citet{zhang2026universal}. We include no other software in the comparison
because we know of none that computes restricted profile-likelihood intervals in
this setting: unlike the model of Section~\ref{subsec:autism}, this one cannot
be fitted by \texttt{lme4} without extensions, and its profile intervals are
based on the unrestricted likelihood even for restricted fits, as noted in
Section~\ref{subsec:autism}. The universal intervals were computed with the
\texttt{varcomp} package of those authors after removing the fixed effects by an
orthogonal transformation, which gives the restricted likelihood. The split
likelihood ratio was averaged over ten random splits of the data into halves,
which preserves validity \citep{wasserman2020universal}, because the interval
from a single split is valid but unstable: over the ten splits, the upper
endpoint for the component with minor allele frequency in $[0.05, 0.15)$ ranged
from $0.9$ to $3.3$.

The universal intervals, which are uniformly valid in finite samples, are $1.7$ to $3.1$
times wider than the profile-score intervals for four of the five components.
For the rare-variant component the universal interval is unbounded: the
information about it comes only from the $16$ directions spanned by its
relationship matrix. The profile-score intervals are
$0.99$ to $1.8$ times as wide as the profile-likelihood intervals, which have no
asymptotic coverage guarantee with a nuisance parameter at or near the
boundary.

The profile-score and profile-likelihood intervals agree on which components
differ significantly from zero, $\psi_4$ and $\psi_5$, whereas the universal
intervals exclude zero only for $\psi_5$, the error variance. Computing the five
intervals took $82$ seconds with the profile score, $219$ with the profile
likelihood, and $3865$ with universal inference over the ten splits, each timed
once.

Interest is sometimes in the ratio of a component to the total variance, the
heritability attributable to a class of variants. Testing whether such a ratio is zero is equivalent to
testing whether the component is zero, and ranking the classes by their
ratios, as is common in analyses of partitioned heritability
\citep{yang2015}, is equivalent to ranking them by their components.
Intervals for the ratios can in principle be obtained by a
reparameterization, which is not pursued here.

\begin{table}[ht]
\centering
\caption{Estimates and 95\% confidence intervals for the mouse data}
\footnotesize
\begin{tabular}{llrrccc}
\toprule
  & & & & \multicolumn{3}{c}{95\% Confidence Interval} \\
  \cmidrule(lr){5-7}
  Component & Frequency & SNPs & REML & Profile score & Profile likelihood & Universal \\
\midrule
  $\psi_1$ & $[0,\,0.05)$ & 16 & 0.006 & $(0.000,\; 0.241)$ & $(0.000,\; 0.131)$ & $(0.000,\; \infty)$ \\
  $\psi_2$ & $[0.05,\,0.15)$ & 317 & 0.000 & $(0.000,\; 0.332)$ & $(0.000,\; 0.337)$ & $(0.000,\; 1.017)$ \\
  $\psi_3$ & $[0.15,\,0.3)$ & 920 & 0.312 & $(0.000,\; 1.310)$ & $(0.000,\; 1.049)$ & $(0.000,\; 2.261)$ \\
  $\psi_4$ & $[0.3,\,0.5]$ & 4791 & 1.294 & $(0.520,\; 2.312)$ & $(0.528,\; 2.215)$ & $(0.000,\; 3.689)$ \\
  $\psi_5$ & Error & --- & 3.629 & $(3.054,\; 4.323)$ & $(3.061,\; 4.269)$ & $(2.529,\; 5.042)$ \\
\bottomrule
\end{tabular}
\label{tab:snp}
\end{table}

% discussion.tex

\section{Discussion} \label{sec:discussion}

Coverage of common confidence intervals depends on the proximity of the
(unknown) true parameter to the boundary, and more generally on the geometry of
the parameter set near the true parameter. The structure of the information
matrix at the truth also has an effect. The proposed intervals, by contrast,
have asymptotically correct coverage uniformly in the parameter and near-nominal
coverage in every simulated setting, and we therefore recommend them as the
default. Likelihood ratio intervals are typically shorter, but how to modify
them to ensure uniform asymptotic validity in near-boundary settings is an open
question. Universal inference intervals are valid in finite samples, but they
can be both wide and computationally expensive.

The key condition of our theory holds in several linear mixed model settings, including independent clusters and crossed designs. In
general, no particular type of dependence or lack thereof is required. We expect
the results to extend to other normal models. The setting here covers any covariance matrix linear in the parameters, but nonlinear ones, such as autoregressive, Mat\'ern, or factor-analytic covariances, require additional work. Extending the results to generalized linear mixed
models is more difficult: the likelihood integrates against the density of the
random effects, so it needs $Z\Psi(\psi)Z^\tsp$ to be a covariance matrix and
the extended set of Section~\ref{sec:model} has no clear analogue. 

Other open questions include intervals for ratios,
simultaneous coverage for several components, and settings with high-dimensional
nuisance parameters. With dense relationship matrices each evaluation of the
statistic costs $O(rn^3 + r^2n^2)$, which limits the sample sizes reachable in
genetics and may require additional computational developments.

\bibliographystyle{apalike}
\bibliography{compute_lmm}

\appendix
\renewcommand{\appendixname}{Appendix}

\numberwithin{theorem}{section}
\numberwithin{lemma}{section}
\numberwithin{corollary}{section}
\numberwithin{definition}{section}
\numberwithin{example}{section}

% app_lmm_proofs.tex

\section{Proofs of the asymptotic coverage results} \label{sec:lmm_proofs}

Throughout, we fix a sequence $(\psi_n)$ in $\Upsilon$, let
$\bar{a}_n =
\bar{a}_n(\psi_n)$, $R = \tilde{\Sigma}(\psi_n)^{-1/2}V^\tsp Y$, which is
$\rN(0, I_{n - p})$ under $\psi_n$, $\mathcal{J}_n(\psi) = -\partial S_n(\psi)/\partial\psi^\tsp$ is the observed information, $\mathbb{S}^{r - 1} = \{u \in \R{r} : \Vert u\Vert = 1\}$, and $P_2 = [0, I_{r-1}] \in \R{(r-1)\times r}$, such that $\psi_{(2)} = P_2\psi$. 

Furthermore, under \eqref{eq:negligible}, $\mathcal{I}_n(\psi) \succ 0$ for every $\psi \in \tilde{\Upsilon}_n$ and large enough $n$. The justification is as follows: note that $v^\tsp \mathcal{I}_n(\psi) v = \| A(v, \psi) ||_F^2$, and thus $\mathcal{I}_n(\psi) \succ 0$ if and only if $A(v, \psi) \neq 0$ for every nonzero $v$. $a_n(v, \psi_n) = 1$ whenever $\Vert A(v, \psi_n)\Vert_F = 0$, which contradicts \eqref{eq:negligible} for large $n$, and thus $\| A(v, \psi_n) \| \neq 0$ for large $n$ and nonzero $v$. Since $A(v, \psi_n) = \tilde{\Sigma}(\psi_n)^{-1/2}\tilde{\Sigma}(v)\tilde{\Sigma}(\psi_n)^{-1/2}$ with $\tilde{\Sigma}(\psi_n)^{-1/2}$ invertible, $A(v, \psi_n) = 0$ if and only if $\tilde{\Sigma}(v) = 0$, and thus $\tilde{\Sigma}(v) \neq 0$ for large $n$ and nonzero $v$. Thus, $A(v, \psi) = \tilde{\Sigma}(\psi)^{-1/2}\tilde{\Sigma}(v)\tilde{\Sigma}(\psi)^{-1/2} \neq 0$ for every non-zero $v$, giving the result.

For $M > 0$, let
\begin{align}
\label{eq:infoball}
  D_n^{M} &= \{\psi : \Vert \mathcal{I}_n(\psi_n)^{1/2}(\psi - \psi_n)\Vert \leq M\}, \\
  B_n^{M} &= \{\psi_{(2)} : \Vert \mathcal{I}_{n(22)}(\psi_n)^{1/2}(\psi_{(2)} - \psi_{n(2)})\Vert \leq M\},
\end{align}
such that $\Vert \mathcal{I}_n(\psi_n)^{1/2}h\Vert = \Vert
\mathcal{I}_{n(22)}(\psi_n)^{1/2}(\psi_{(2)} - \psi_{n(2)})\Vert$ when $h = (0,
\psi_{(2)}^\tsp - \psi_{n(2)}^\tsp)^\tsp$. By \eqref{eq:negligible}, there exists a sequence $M_n \rightarrow \infty$ with $M_n^3 \bar{a}_n \rightarrow 0$ (take $M_n = \bar{a}_n^{-1/6}$, for example). Such a sequence also satisfies $M_n\bar{a}_n \rightarrow 0$. Fix one such sequence. Lemma~\ref{lem:nuis_est}, Lemma~\ref{lem:reml_start}, and the proof of Theorem~\ref{thm:porting} take $M_n$ to be this fixed sequence and write $D_n = D_n^{M_n}$ and $B_n = B_n^{M_n}$. Lemmas \ref{lem:interior}, \ref{lem:local_ulln}, and \ref{lem:local_detrmn} are stated for an arbitrary radius sequence and thus keep the superscript $M_n$, since the proof of Proposition~\ref{prop:onestep} requires a different sequence, introduced there.

The following technical lemma bounds the supremum of a Gaussian quadratic-form process.

\begin{lemma} \label{lem:qf_sup}
  Let $\xi \sim \rN(0, I_n)$ and let $\{G_t : t \in T\}$ be $n \times n$
  matrices indexed by a nonempty set $T$ contained in a Euclidean ball of radius $\rho > 0$ in
  $\R{k}$. Suppose that, for some $\sigma \geq 0$ and $L > 0$, $\sup_{t \in T}\Vert G_t\Vert_F \leq \sigma$ and $\Vert G_t -
  G_{t'}\Vert_F \leq L\Vert t - t'\Vert$ for all $t, t' \in T$. Then
  \[
    \E \sup_{t \in T} \left\vert \xi^\tsp G_t \xi - \tr G_t \right\vert \leq C_k\,(\sigma + L\rho)
  \]
  for a constant $C_k$ depending only on $k$. In particular, the supremum is
  $O_p(\sigma + L\rho)$.
\end{lemma}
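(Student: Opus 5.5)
The plan is a generic chaining argument for the centered process $X_t = \xi^\tsp G_t \xi - \tr G_t$. We fix a point $t_0 \in T$ and use
\[
\sup_{t\in T}\vert X_t\vert \leq \vert X_{t_0}\vert + \sup_{t,t'\in T}\vert X_t - X_{t'}\vert .
\]
Since $\var(X_{t_0}) = \Vert G_{t_0} + G_{t_0}^\tsp\Vert_F^2/2 \leq 2\Vert G_{t_0}\Vert_F^2$, the first term has expectation at most $\sqrt{2}\,\sigma$.

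For the increment term we use the Hanson--Wright inequality. For fixed $t,t'$, the difference $X_t - X_{t'} = \xi^\tsp(G_t - G_{t'})\xi - \tr(G_t - G_{t'})$ is a centered Gaussian chaos of order two. Write $\Delta = G_t - G_{t'}$. Then
\[
\pr\{\vert X_t - X_{t'}\vert > u\} \leq 2\exp\bigl[-c\min\{u^2/\Vert \Delta\Vert_F^2,\ u/\Vert \Delta\Vert\}\bigr].
\]
Because $\Vert \Delta\Vert \leq \Vert \Delta\Vert_F \leq L\Vert t - t'\Vert$, the increments have mixed sub-Gaussian/sub-exponential tails with respect to the metric $d(t,t') = L\Vert t - t'\Vert$. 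Equivalently, the Orlicz norm satisfies $\Vert X_t - X_{t'}\Vert_{\psi_1} \leq C L\Vert t - t'\Vert$, with no dependence on $n$.

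Next we apply Dudley's entropy integral with the $\psi_1$ Orlicz norm; the version in van der Vaart and Wellner, Theorem 2.2.4, suffices. This gives
\[
\E\sup_{t,t'}\vert X_t - X_{t'}\vert \leq C\int_0^{\mathrm{diam}} \psi_1^{-1}\{N(\epsilon, T, d)\}\,\dd\epsilon .
\]
Here $\psi_1^{-1}(x) = \log(1+x)$, and $T$ has $d$-diameter at most $2L\rho$. Since $T$ lies in a ball of radius $\rho$ in $\R{k}$, the covering number satisfies $N(\epsilon,T,d) \leq (3L\rho/\epsilon)^k$. Substituting $\epsilon = L\rho s$, the integral becomes
\[
L\rho\int_0^2 \log\{1 + (3/s)^k\}\,\dd s \leq C_k L\rho ,
\]
which is finite because the logarithmic singularity at zero is integrable. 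Combining this with the bound for the first term gives $\E\sup_t\vert X_t\vert \leq C_k(\sigma + L\rho)$. The $O_p$ statement then follows from Markov's inequality.

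The main obstacle is measurability and separability of the supremum. The Lipschitz condition makes $t\mapsto X_t$ continuous for every $\xi$, so the supremum over $T$ equals the supremum over a countable dense subset, and the chaining bound applies to that subset. A secondary point is that $T$ need not itself be a ball, since only a subset is given. Covering numbers of a subset are still controlled, with $N(\epsilon, T, d) \leq N(\epsilon/2, \text{ball}, d)$, which only changes constants. If $T$ is a single point, the increment term vanishes and only $\sigma$ remains.
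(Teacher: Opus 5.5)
Your proposal is correct and follows essentially the same route as the paper: you use Hanson--Wright to show the increments have Orlicz norm (for $x \mapsto e^x - 1$) at most a constant times $L\Vert t - t'\Vert$, then the chaining bound of van der Vaart and Wellner's Theorem~2.2.4 with a volumetric entropy bound for a ball in $\R{k}$, the triangle inequality through a fixed $t_0$, and continuity of paths for separability. The differences are cosmetic: you bound $\E\vert X_{t_0}\vert$ by its standard deviation rather than its Orlicz norm, and you use covering rather than packing numbers, neither of which matters beyond constants.
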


\begin{proof}
  Let $X_t = \xi^\tsp G_t \xi - \tr G_t$ and let $\Vert \cdot \Vert_\phi$ be the
  Orlicz norm for $\phi(x) = e^x - 1$; that is, $\Vert X\Vert_\phi = \inf\{s > 0
  : \E \phi(\vert X\vert/s) \leq 1\}$. The process $(X_t)$ has (Lipschitz) continuous paths 
  since $\vert X_t - X_{t'}\vert \leq (\Vert \xi\Vert^2 + \sqrt{n})\Vert G_t -
  G_{t'}\Vert_F \leq (\Vert \xi\Vert^2 + \sqrt{n}) L \| t - t' \|$, and $T$ has a countable dense subset as a subset of $\mathbb{R}^k$, and thus $\sup_{t \in T} X_t$ and $\sup_{t, t' \in T} |X_t - X_{t'}|$ are measurable.

  Consider an $n\times n$ matrix $G$, random vector $\xi \sim \rN(0, I_n)$ (such that $\E \xi^\tsp G \xi = \tr G$), and $X = \xi^T G \xi - \tr G$. By the Hanson--Wright inequality \citep[Theorem~6.2.1]{vershynin2018high},
  \begin{equation}
  \label{eq:hansen}
    \pr\{\vert X \vert \geq u\} \leq 2\exp\left[-c\min\left(
      \frac{u^2}{\Vert G\Vert_F^2},\ \frac{u}{\Vert G\Vert}\right)\right], \qquad u \geq 0,
  \end{equation}
  for a universal constant $c > 0$. Define $K_1 = \|G\|_F/\min(c, \log 2)$ and consider some $u \geq 0$. When $u \geq \|G\|_F$, $u^2/\|G\|_F^2 \geq u/\|G\|_F$. Furthermore, $u/\|G\| \geq u/\|G\|_F$ always, so by \eqref{eq:hansen},
  $$
  P(|X| \geq u) \leq 2 \exp \left[ \frac{-cu}{||G||_F} \right] \leq 2\exp(-u/K_1)
  $$
  When $u < \| G \|_F$,
  $$
  2 \exp(-u/K_1) > 2 \cdot \frac{1}{2} = 1 \geq P(|X| \geq u).
  $$
  Therefore, for every $u \geq 0$, $P(|X| \geq u) \leq 2 \exp(-u/K_1)$. By \citep[Proposition~2.7.1]{vershynin2018high}, his (a) with $K_1$ implies (d): there exists $K_2 \leq C_0 K_1$ for some absolute constant $C_0$, with
  $$
  \E \exp(|X|/K_2) \leq 2.
  $$
  By definition, $\|X\|_\phi$ is the infimum over all $K_2$ such that the above holds, and hence
  $$
  \| X \|_\phi \leq K_2 \leq C_0 K_1 = \frac{C_0 ||G||_F}{\min(c, \log 2)} = C ||G||_F
  $$
  for a universal constant $C$.
  
  Applying this with $G = G_{t_0}$
  for a fixed $t_0 \in T$ and with $G = G_t - G_{t'}$ gives
  \[
    \Vert X_{t_0}\Vert_\phi \leq C\sigma, \qquad
    \Vert X_t - X_{t'}\Vert_\phi \leq CL\Vert t - t'\Vert, \quad t, t' \in T.
  \]
  The second condition above means we can apply Theorem 2.2.4 of
  \citet{vandervaart1996weak}. By the justification given at the beginning of this proof, the process $(X_t)$ is separable in the sense of the theorem. Further take their $\psi$ to be $\phi$ and their $c=1$, with $\phi(x)\phi(y)/\phi(xy) \leq e^{x+y}/(e^{xy} - 1) \leq 2e^{-(xy-x-y)} \rightarrow 0$ as $x, y \rightarrow \infty$ (the second inequality holding for $xy \geq \log2$), and $\phi$ is convex and increasing with $\phi(0) = 0$. Lastly, take semi-metric $d(t, t') = CL \Vert t - t'\Vert$, for which their Lipschitz condition holds with constant $1$, and $\delta = \eta = 2CL\rho$, where $2CL\rho$ is an upper bound on the diameter of $T$ under $d$. Then,
  \begin{equation}
  \label{eq:integral}
  \left\| \sup_{t, t' \in T} |X_t - X_{t'} | \right\|_\phi \leq K \left[ \int_0^{2CL\rho} \phi^{-1} \left( D(\epsilon, d) \right) d\epsilon + 2CL\rho \cdot \phi^{-1} \left( D(2CL\rho, d)^2\right) \right] 
  \end{equation}
  where $K$ depends only on $\phi$, and $D(\epsilon, d) = D(T, \epsilon, d)$ is the maximum number of $\epsilon$-separated points in $T$ (the packing number).

  Here $D(2CL\rho, d) = 1$ and $\phi^{-1}(y) = \log(1+y)$, so the second term is $2CL\rho\log(2)$. For the first,
  $$
  D(\epsilon, d) = D(\epsilon/CL, \| \cdot \|) \leq D(B(0, \rho), \epsilon/CL, \| \cdot \|)
  $$
  where $B(0, \rho) \subset \mathbb{R}^k$ is a ball of radius $\rho$. By \citep[Proposition~4.2.12]{vershynin2018high} applied to $K=B(0, \rho)$,
  $$
  \begin{aligned}
  D(B(0, \rho), \epsilon/CL, \| \cdot \|) &\leq \frac{(\rho + \epsilon/(2CL))^k}{(\epsilon/(2CL))^k} \\
  &= \left( \frac{2CL\rho}{\epsilon} + 1 \right)^k \\
  &\leq \left(\frac{6CL \rho}{\epsilon} \right)^k
  \end{aligned}
  $$
  where the last line follows from the fact that $\epsilon \leq 2CL\rho$ in the range of integration, and hence $x := 2CL\rho/\epsilon \geq 1$, and for such $x$, $(x+1)^k \leq (3x)^k$. Since $\phi^{-1}(y) = \log(1+y)$, is an increasing function of $y$, substitution into \eqref{eq:integral} yields
  $$
  \left\| \sup_{t, t' \in T} |X_t - X_{t'} | \right\|_\phi \leq K \left[ \int_0^{2CL\rho} \log \left( 1+\left(\frac{6CL\rho}{\epsilon} \right)^k \right) d\epsilon  + 2CL\rho\log2 \right]. 
  $$
  Substitute $v = \epsilon/(2CL\rho)$ such that the integral becomes
  $$
  2CL\rho \int_0^1 \log\left( 1 + \left( \frac{6}{2v} \right)^k \right) dv = 2CL\rho c_k
  $$
  where $c_k$ is a finite constant by integrability. Thus,
  $$
  \left\| \sup_{t, t' \in T} |X_t - X_{t'} | \right\|_\phi \leq K(2CL\rho c_k + 2CL\rho \log 2) = K_k L \rho
  $$
  where $K_k$ depends only on $k$. Since $\phi(x) \geq x$ implies $\E\vert
  X\vert \leq \Vert X\Vert_\phi$,
  \begin{align*}
    \E\sup_{t, t' \in T}\vert X_t - X_{t'}\vert \leq  K_k L\rho,
  \end{align*}
  
  Lastly, for every $t \in T$, fixing $t_0 \in T$, 
  $$
  |X_t| \leq |X_{t_0}| + |X_t - X_{t_0}| \leq |X_{t_0}| + \sup_{t, t' \in T} |X_t - X_{t'} |.
  $$
  Taking expectations and applying $\E|X_{t_0}| \leq \|X_{t_0}\|_\phi \leq C \sigma$,
  \[
    \E\sup_{t \in T}\vert X_t\vert \leq \E\vert X_{t_0}\vert +
    \E\sup_{t, t' \in T}\vert X_t - X_{t'}\vert \leq C\sigma + K_k L\rho,
  \]
  and taking $C_k := \max(C, K_k)$ gives the result. Markov's inequality gives the stochastic-order statement.
\end{proof}

The extended set $\tilde{\Upsilon}_n$ depends on $n$ through the design matrix $Z \in \mathbb{R}^{n\times q}$ and the projection $V \in \mathbb{R}^{n \times (n-p)}$. 
The results require a nuisance estimator interior to
$\tilde{\Upsilon}_n$, and a set shrinking too fast on the information scale
would preclude one. The following lemma shows that the extended set is
asymptotically large enough.

\begin{lemma} \label{lem:interior}
  If $\psi^0 \in \tilde{\Upsilon}_n$, then every $\psi$ such that
  $\tilde{\Sigma}(\psi)$ is singular satisfies
  \[
    \Vert \mathcal{I}_n(\psi^0)^{1/2}(\psi - \psi^0)\Vert \geq
    \frac{1}{\sqrt{2}\,\bar{a}_n(\psi^0)}.
  \]
  Suppose in addition that \eqref{eq:negligible} holds, and consider $D_n^{M_n}$ as in \eqref{eq:infoball}, with radius sequence $M_n$ satisfying $M_n \bar{a}_n \rightarrow 0$. Then $D_n^{M_n} \subseteq \tilde{\Upsilon}_n$ for all large $n$.
\end{lemma}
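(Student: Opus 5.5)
The plan is to work entirely with the whitened matrices $A(v,\psi^0) = \tilde{\Sigma}(\psi^0)^{-1/2}\tilde{\Sigma}(v)\tilde{\Sigma}(\psi^0)^{-1/2}$, which are linear in $v$. Let $h = \psi - \psi^0$. Because $\tilde{\Sigma}(\cdot)$ is linear, $A(\psi,\psi^0) = A(\psi^0,\psi^0) + A(h,\psi^0) = I_{n-p} + A(h,\psi^0)$. This uses $A(\psi^0,\psi^0) = \tilde{\Sigma}(\psi^0)^{-1/2}\tilde{\Sigma}(\psi^0)\tilde{\Sigma}(\psi^0)^{-1/2} = I$, which is valid since $\psi^0\in\tilde{\Upsilon}_n$.

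Suppose $\tilde{\Sigma}(\psi)$ is singular. Then $I + A(h,\psi^0)$ is singular, because $\tilde{\Sigma}(\psi^0)^{-1/2}$ is invertible. Hence $-1$ is an eigenvalue of the symmetric matrix $A(h,\psi^0)$, which gives $\Vert A(h,\psi^0)\Vert \geq 1$. In particular $h\neq 0$ and $A(h,\psi^0)\neq 0$, so $\Vert A(h,\psi^0)\Vert_F>0$. Write $u = h/\Vert h\Vert$. Since $A(\cdot,\psi^0)$ is linear, $a_n(h,\psi^0) = a_n(u,\psi^0) \leq \bar{a}_n(\psi^0)$. Then
\[
1 \leq \Vert A(h,\psi^0)\Vert = a_n(h,\psi^0)\Vert A(h,\psi^0)\Vert_F \leq \bar{a}_n(\psi^0)\,\Vert A(h,\psi^0)\Vert_F .
\]
Finally, use $h^\tsp\mathcal{I}_n(\psi^0)h = \tr\{A(h,\psi^0)^2\}/2 = \Vert A(h,\psi^0)\Vert_F^2/2$. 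This gives $\Vert A(h,\psi^0)\Vert_F = \sqrt{2}\,\Vert \mathcal{I}_n(\psi^0)^{1/2}h\Vert$, and rearranging yields the first claim. The bound is vacuous when $\bar a_n(\psi^0)=0$, but that case forces every nonzero $A(v,\psi^0)$ to have zero spectral norm, which is impossible, so no singular $\psi$ exists and the claim holds trivially. I will note this degenerate case rather than dwell on it.

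For the second claim, apply the first with $\psi^0 = \psi_n\in\Upsilon\subseteq\tilde{\Upsilon}_n$. The inclusion holds because $\psi_{nr}>0$ and $\Psi\succeq0$ give $\Sigma\succ0$, hence $\tilde\Sigma\succ0$. Every $\psi$ with $\tilde{\Sigma}(\psi)$ singular then lies at information distance at least $1/(\sqrt2\,\bar a_n)$ from $\psi_n$. Since $M_n\bar a_n\to0$, this distance exceeds $M_n$ for large $n$, so $D_n^{M_n}$ contains no singular point.

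It remains to show that $D_n^{M_n}$ contains no point with $\tilde\Sigma$ indefinite either. I will use convexity: $D_n^{M_n}$ is convex (an ellipsoid, or a degenerate cylinder) and contains $\psi_n$. Take any $\psi\in D_n^{M_n}$ and consider the segment $t\mapsto\psi_n+t(\psi-\psi_n)$ for $t\in[0,1]$. The smallest eigenvalue of $\tilde\Sigma$ along the segment is continuous and positive at $t=0$. If it were nonpositive somewhere, it would vanish at some intermediate point, and that point lies in $D_n^{M_n}$ yet has singular $\tilde\Sigma$, a contradiction. Hence $\tilde\Sigma(\psi)\succ0$, and $D_n^{M_n}\subseteq\tilde\Upsilon_n$.

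The main obstacle is only bookkeeping. One must handle the scale invariance of $a_n$, which is what lets an unnormalized $h$ be compared with $\bar a_n$'s supremum over unit vectors. One must also use this connectedness argument, since excluding only singular points would not by itself rule out indefinite ones.
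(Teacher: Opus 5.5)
Your proposal is correct and follows the paper's proof essentially step for step. Both arguments factor $\tilde\Sigma(\psi)=\tilde\Sigma(\psi^0)^{1/2}\{I+A(h,\psi^0)\}\tilde\Sigma(\psi^0)^{1/2}$ to get $\Vert A(h,\psi^0)\Vert\ge 1$, then use scale invariance of $a_n$, the identity $h^\tsp\mathcal{I}_n(\psi^0)h=\Vert A(h,\psi^0)\Vert_F^2/2$, and convexity of $D_n^{M_n}$ with the intermediate value theorem on $\lambda_{\min}$ to exclude indefinite points. Your aside about $\bar a_n(\psi^0)=0$ is unnecessary, since $\bar a_n(\psi^0)\ge (n-p)^{-1/2}>0$ always.
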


\begin{proof}
  Pick an arbitrary $\psi$ such that $\tilde{\Sigma}(\psi)$ is singular and let
  $h = \psi - \psi^0$. Linearity gives $\tilde{\Sigma}(\psi) -
  \tilde{\Sigma}(\psi^0) = \tilde{\Sigma}(h)$, so $\tilde{\Sigma}(\psi) =
  \tilde{\Sigma}(\psi^0)^{1/2}\{I_{n-p} + A(h,
  \psi^0)\}\tilde{\Sigma}(\psi^0)^{1/2}$. Since $\tilde{\Sigma}(\psi^0) \succ 0$
  and $\tilde{\Sigma}(\psi)$ is singular, $-1$ is an eigenvalue of $A(h,
  \psi^0)$, meaning $\Vert A(h, \psi^0)\Vert \geq 1$. 

  Note that $a_n(cv, \psi) = a_n(v, \psi)$ for any constant $c$, and thus
  $$
  a_n(h, \psi^0) = a_n\left( \frac{h}{\|h\|}, \psi^0 \right) \leq \sup_{u : \|u\| = 1} a_n(u, \psi^0) = \bar{a}_n(\psi^0)
  $$
  and thus $\Vert A(h, \psi^0)\Vert_F \geq
  \Vert A(h, \psi^0)\Vert / \bar{a}_n(\psi^0) \geq 1/\bar{a}_n(\psi^0)$, applying $\Vert A(h, \psi^0)\Vert \geq 1$ from above. By the identity $\Vert
  \mathcal{I}_n(\psi^0)^{1/2}h \Vert^2 = h^\tsp \mathcal{I}_n(\psi^0)h =
  \tfrac{1}{2}\Vert A(h, \psi^0)\Vert_F^2$ from Section~\ref{sec:gen_theory}, the bound follows.

  For the second claim, $D_n^{M_n}$ is convex and contains $\psi_n \in
  \tilde{\Upsilon}_n$. If some $\psi' \in D_n^{M_n}$ had $\tilde{\Sigma}(\psi')$ not
  positive definite, the segment $[\psi_n, \psi'] \subseteq D_n^{M_n}$ would contain a
  point with $\tilde{\Sigma}$ singular: either an eigenvalue of
  $\tilde{\Sigma}(\psi')$ is zero, or it has a negative eigenvalue since it is
  not positive semi-definite. In the latter case, let $\psi(t) = \psi_n + t(\psi' - \psi_n)$ for $t \in [0, 1]$. Weyl's inequality ($|\lambda_k(B) - \lambda_k(C)| \leq \| B - C \|$ for symmetric $B, C$ and every $k$) and linearity of $\tilde{\Sigma}(\psi)$ gives that the eigenvalues of $\tilde{\Sigma}(\psi(t))$ are (Lipshitz) continuous functions of $t$. Letting $f(t) = \lambda_{min}(\tilde\Sigma(\psi(t)))$, where $f(0) > 0$ by positive definiteness of $\tilde{\Sigma}(\psi_n)$, and $f(1) < 0$ by $\tilde{\Sigma}(\psi')$ having a negative eigenvalue, it follows by the intermediate value theorem that there exists $t^*$ such that $f(t^*)  = \tilde{\Sigma}(\psi(t^*))) = 0$, where $\psi(t^*)$ is in $D_n^{M_n}$. Combining this with the first part of the proof,
  $$
  M_n \geq \| \mathcal{I}_n(\psi_n)^{1/2} (\psi(t^*) - \psi_n) \| \geq \frac{1}{\sqrt{2}\bar{a}_n(\psi_n)}
  $$
  which is a contradiction upon taking $n$ large enough that $\sqrt{2}M_n\bar{a}_n < 1$. Thus $D_n^{M_n} \subseteq \tilde{\Upsilon}_n$.
\end{proof}

\begin{lemma} \label{lem:local_ulln}
  If \eqref{eq:negligible} holds, then for any radius $M_n$ with $M_n\bar{a}_n \to 0$ and $\liminf_n M_n > 0$,
  \[
    \sup_{\psi \in D_n^{M_n}} \left\Vert \mathcal{I}_n(\psi_n)^{-1/2}\{\mathcal{J}_n(\psi) - \mathcal{I}_n(\psi_n)\}\mathcal{I}_n(\psi_n)^{-1/2}\right\Vert = O_p(M_n\bar{a}_n).
  \]
\end{lemma}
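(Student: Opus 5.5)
The plan is to reduce the matrix norm to a supremum of Gaussian quadratic forms in $R$, split off a deterministic trace term, and control the random part with Lemma~\ref{lem:qf_sup}. First, since $M_n\bar{a}_n \to 0$, Lemma~\ref{lem:interior} gives $D_n^{M_n} \subseteq \tilde{\Upsilon}_n$ for large $n$, so $\mathcal{J}_n$ is defined on the whole ball. We may also replace $M_n$ by $\max(M_n, c)$ for a fixed $c>0$; because $\liminf_n M_n > 0$, this changes neither the ball inclusion nor the order $M_n\bar{a}_n$.

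Because $\tilde{\Sigma}$ is linear in $\psi$, the second derivatives of $\tilde{\Sigma}$ vanish. Differentiating \eqref{eq:res_ll} twice therefore gives
\[
\mathcal{J}_{n,ij}(\psi) = -\tfrac12\tr(\tilde{\Sigma}^{-1}\tilde{\Sigma}_i\tilde{\Sigma}^{-1}\tilde{\Sigma}_j) + Y^\tsp V\tilde{\Sigma}^{-1}\tilde{\Sigma}_i\tilde{\Sigma}^{-1}\tilde{\Sigma}_j\tilde{\Sigma}^{-1}V^\tsp Y,
\]
where $\tilde{\Sigma}_i = \partial\tilde{\Sigma}/\partial\psi_i$.

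For unit vectors $u, w \in \mathbb{S}^{r-1}$, set $v = \mathcal{I}_n(\psi_n)^{-1/2}u$ and $z = \mathcal{I}_n(\psi_n)^{-1/2}w$. Write $a = A(v,\psi_n)$, $b = A(z,\psi_n)$, $h = \psi - \psi_n$, and $B_h = \{I + A(h,\psi_n)\}^{-1}$. The factorisation $\tilde{\Sigma}(\psi) = \tilde{\Sigma}(\psi_n)^{1/2}\{I + A(h,\psi_n)\}\tilde{\Sigma}(\psi_n)^{1/2}$ from the proof of Lemma~\ref{lem:interior} then gives
\[
u^\tsp \mathcal{I}_n(\psi_n)^{-1/2}\mathcal{J}_n(\psi)\mathcal{I}_n(\psi_n)^{-1/2} w = R^\tsp G R - \tfrac12\tr(B_h a B_h b),
\]
where $G = \tfrac12(B_h a B_h b B_h + B_h b B_h a B_h)$. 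The target quantity is $u^\tsp I w = \tfrac12\tr(ab)$, and we decompose the difference as
\[
(R^\tsp G R - \tr G) + \{\tr(B_h a B_h b B_h) - \tfrac12\tr(B_h a B_h b) - \tfrac12 \tr(ab)\}.
\]

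\emph{Basic bounds.} We have $\Vert a\Vert_F^2 = 2v^\tsp\mathcal{I}_n(\psi_n)v = 2$, so $\Vert a\Vert \le \sqrt2\,\bar{a}_n$, and the same holds for $b$. On $D_n^{M_n}$,
\[
\Vert A(h,\psi_n)\Vert \le \bar{a}_n\Vert A(h,\psi_n)\Vert_F \le \sqrt2\,\bar{a}_n M_n \to 0 .
\]
Hence $\Vert B_h\Vert \le 2$ and $\Vert B_h - I\Vert = \Vert A(h,\psi_n) B_h\Vert = O(M_n\bar{a}_n)$, uniformly in $h$ in the ball.

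\emph{Deterministic term.} Replacing each $B_h$ by $I$ one at a time changes a trace by at most $\Vert a\Vert_F\Vert b\Vert_F \cdot O(\Vert B_h - I\Vert) = O(M_n\bar{a}_n)$. After all replacements the bracket equals $\tr(ab) - \tfrac12\tr(ab) - \tfrac12\tr(ab) = 0$. So the deterministic term is $O(M_n\bar{a}_n)$ uniformly in $u$, $w$ and $\psi \in D_n^{M_n}$.

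\emph{Random term.} Index the process by $t = (\mathcal{I}_n(\psi_n)^{1/2}h, u, w)$, which ranges over a set in $\R{3r}$ contained in a ball of radius $\rho = M_n + 2$. Then $\Vert G_t\Vert_F \le \Vert B_h\Vert^3\Vert a\Vert\,\Vert b\Vert_F \le 16\bar{a}_n$, so $\sigma = O(\bar{a}_n)$.

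For the Lipschitz constant, note that $a$ and $b$ are linear in $u$ and $w$, with $\Vert A(\mathcal{I}_n(\psi_n)^{-1/2}du,\psi_n)\Vert \le \sqrt2\,\bar{a}_n\Vert du\Vert$ and $\Vert\cdot\Vert_F \le \sqrt2\Vert du\Vert$. Also
\[
B_h - B_{h'} = -B_h A(h - h',\psi_n) B_{h'},
\]
and the middle factor has spectral norm at most $\sqrt2\,\bar{a}_n\Vert \mathcal{I}_n(\psi_n)^{1/2}(h - h')\Vert$. In every product, one factor can be kept in Frobenius norm while the perturbed or other factor is bounded in spectral norm by $O(\bar{a}_n)$. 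This yields $L = O(\bar{a}_n)$.

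Lemma~\ref{lem:qf_sup} with $k = 3r$ then gives a supremum of order $O_p(\bar{a}_n + \bar{a}_n(M_n+2)) = O_p(M_n\bar{a}_n)$, using $\liminf_n M_n > 0$. Taking the supremum over $u, w \in \mathbb{S}^{r-1}$ recovers the spectral norm in the statement, which proves the lemma.

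The main obstacle is the Lipschitz bookkeeping. Each difference term must contain at least one factor whose spectral norm is $O(\bar{a}_n)$; otherwise one only obtains $L = O(1)$ and the bound $O_p(M_n)$, which is useless. The way to guarantee this is always to place the Frobenius norm on exactly one of $a$, $b$, or the increment $A(\cdot,\psi_n)$, and to bound every other factor by its spectral norm.
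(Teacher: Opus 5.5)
Your proposal is correct and follows the paper's proof essentially step for step. It uses the same reduction to $\{I + A(h,\psi_n)\}^{-1}$ and the same split into a deterministic trace term of order $\Vert \{I + A(h,\psi_n)\}^{-1} - I\Vert$ and a centred Gaussian quadratic form controlled by Lemma~\ref{lem:qf_sup} with $\sigma, L = O(\bar a_n)$ and $\rho = O(M_n)$; the only differences are cosmetic, namely bilinear forms in $(u,w)$ with index set in $\R{3r}$ where the paper uses quadratic forms in $u$ alone by symmetry (index set in $\R{2r}$), and the exact resolvent identity for the Lipschitz bound where the paper integrates the derivative along a segment.
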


\begin{proof}
  We first bound, pointwise in $(\psi, u) \in D_n^{M_n} \times \mathbb{S}^{r - 1}$,
  \[
     \left\vert u^\tsp \mathcal{I}_n(\psi_n)^{-1/2}\{\mathcal{J}_n(\psi) - \mathcal{I}_n(\psi_n)\}\mathcal{I}_n(\psi_n)^{-1/2}u\right\vert = \left \vert w^\tsp\{\mathcal{J}_n(\psi) - \mathcal{I}_n(\psi_n)\} w\right\vert,
  \]
  where $w = \mathcal{I}_n(\psi_n)^{-1/2}u$, and then conclude by the chaining
  bound of Lemma~\ref{lem:qf_sup}. Fix $n$ large
  enough that $D_n^{M_n} \subseteq \tilde{\Upsilon}_n$ by
  Lemma~\ref{lem:interior}. Pick arbitrary $u \in \mathbb{S}^{r - 1}$
  and $\psi \in D_n^{M_n}$, so $\tilde{\Sigma}(\psi) \succ 0$. Let $h = \psi - \psi_n$
  and $\Phi =
  \tilde{\Sigma}(\psi_n)^{1/2}\tilde{\Sigma}(\psi)^{-1}\tilde{\Sigma}(\psi_n)^{1/2}$.

  Linearity of $\tilde{\Sigma}$ gives $\tilde{\Sigma}(\psi) -
  \tilde{\Sigma}(\psi_n) = \tilde{\Sigma}(h)$, and hence
  \[
    \Phi^{-1} = \tilde{\Sigma}(\psi_n)^{-1/2}\tilde{\Sigma}(\psi)\tilde{\Sigma}(\psi_n)^{-1/2} = I_{n-p} + A(h, \psi_n).
  \]
  A direct calculation shows that the observed information has entries
  \[
    \mathcal{J}_{n,ij}(\psi) = R^\tsp \Phi A_i \Phi A_j \Phi R - \tfrac{1}{2}\tr(\Phi A_i \Phi A_j).
  \] 
  Thus, since $\mathcal{I}_{n,ij}(\psi_n) = \tfrac{1}{2}\tr(A_i A_j)$,
  \[
    \{\mathcal{J}_n(\psi) - \mathcal{I}_n(\psi_n)\}_{ij} = R^\tsp \Phi A_i \Phi A_j \Phi R - \tfrac{1}{2}\tr(\Phi A_i \Phi A_j) - \tfrac{1}{2}\tr(A_i A_j)
  \]
  and
  \[
    w^\tsp \{\mathcal{J}_n(\psi) - \mathcal{I}_n(\psi_n)\} w = R^\tsp \Phi \bar{A} \Phi \bar{A} \Phi R - \tfrac{1}{2}\tr(\Phi \bar{A} \Phi \bar{A}) - \tfrac{1}{2}\tr(\bar{A}^2),
  \]
  where $\bar{A} = A(w, \psi_n) = \sum_i w_i A_i$. This expression depends on $\psi$ only through $\Phi$ since the $A_i$
  are evaluated at $\psi_n$.

  Let $s = \mathcal{I}_n(\psi_n)^{1/2}(\psi - \psi_n)$, $t = (s, u)$, and $H_t = \Phi \bar{A} \Phi \bar{A} \Phi$, which is symmetric because $\Phi$ and $\bar{A}$ are. Then the right-hand side of the last display is
  \[
    \{R^\tsp H_t R - \tr(H_t)\} + \{\tr(H_t) - \tfrac{1}{2}\tr(\Phi \bar{A} \Phi \bar{A}) - \tfrac{1}{2}\tr(\bar{A}^2)\},
  \]
  with the first term in braces a zero-mean Gaussian quadratic form and the second deterministic.

  We begin with the deterministic term, $\tr(H_t) -
  \tfrac{1}{2}\tr(\Phi\bar{A}\Phi\bar{A}) - \tfrac{1}{2}\tr(\bar{A}^2)$.
  Substituting $E := \Phi - I_{n-p}$ and expanding, the terms without a factor $E$
  cancel, and each remaining term is the trace of a product of two factors of
  $\bar{A}$ and factors $E$ or $I_{n-p}$, which is bounded in absolute value
  by $\Vert E\Vert^{j}\Vert\bar{A}\Vert_F^2$, with $j \geq 1$ the number of
  factors $E$, because $\vert\tr(X\bar{A}Y\bar{A}Z)\vert \leq \Vert
  ZX\Vert\,\Vert Y\Vert\,\Vert\bar{A}\Vert_F^2$. The surviving term with the fewest factors of $E$ has $j=1$, so it suffices to show $\| \bar{A} \|_F = O(1)$ and $\| E \| = O(M_n \bar{a_n})$, which gives that the deterministic term is $O(\| E \| \| \bar{A} \|^2_F) = O(M_n\bar{a}_n)$. For the former, $\Vert \bar{A}\Vert_F^2 = 2 w^\tsp \mathcal{I}_n(\psi_n) w = 2\Vert u\Vert^2 = 2$. For the latter, the identity $v^\tsp\mathcal{I}_n(\psi_n)v = \tr\{A(v, \psi_n)^2\}/2$ from Section~\ref{sec:gen_theory} gives $\Vert A(h, \psi_n)\Vert_F^2 = 2h^\tsp\mathcal{I}_n(\psi_n)h \leq 2M_n^2$. Combining with the same scale-invariance argument in the proof of Lemma \ref{lem:interior} yields $\Vert A(h, \psi_n)\Vert \leq \bar{a}_n \Vert A(h,
  \psi_n)\Vert_F \leq \sqrt{2}M_n\bar{a}_n$. Thus, the eigenvalues of $A(h,
  \psi_n)$ have magnitude less than one for all large enough $n$, and therefore
  we can write $E = \Phi - I_{n-p} = - \{I_{n-p} + A(h, \psi_n)\}^{-1} A(h,
  \psi_n)$. Then, if $\lambda_1, \dots, \lambda_{n - p}$ are the eigenvalues of
  $A(h, \psi_n)$, the symmetric matrix $E$ has eigenvalues $-\lambda_j/(1 + \lambda_j)$, so
  $\Vert E\Vert = \max_j \vert \lambda_j\vert/\vert 1 + \lambda_j\vert \leq \Vert A(h, \psi_n)\Vert/\{1 - \Vert A(h, \psi_n)\Vert\}
  \leq \sqrt{2}M_n\bar{a}_n/(1 - \sqrt{2}M_n\bar{a}_n) = O(M_n\bar{a}_n)$, using $\vert 1 + \lambda_j\vert \geq 1 - \vert\lambda_j\vert
  \geq 1 - \Vert A(h, \psi_n)\Vert$ and $\vert\lambda_j\vert \leq \Vert A(h, \psi_n)\Vert$. 

  Now consider the Gaussian quadratic form: we bound
  \[
    \sup_{t \in T} \left\vert R^\tsp H_t R - \tr H_t \right\vert, \qquad
    T = \{(s, u) : \Vert s\Vert \leq M_n,\ \Vert u\Vert = 1\},
  \]
  by verifying the conditions of Lemma~\ref{lem:qf_sup}. The index set $T$ lies
  in a ball of radius $\rho = (M_n^2 + 1)^{1/2}$ in $\R{2r}$, and $\rho =
  O(M_n)$ since $\liminf_n M_n > 0$. Because $\Vert E\Vert = O(M_n\bar{a}_n)
  \to 0$ uniformly over $T$, $\Vert \Phi\Vert \leq 1 + \Vert E\Vert \leq 2$ for
  all large $n$, and hence, using $\Vert \bar{A}\Vert \leq
  \bar{a}_n\Vert\bar{A}\Vert_F = \sqrt{2}\bar{a}_n$,
  \[
    \Vert H_t\Vert_F \leq
    \Vert \Phi\Vert^3 \Vert \bar{A}\Vert \Vert \bar{A}\Vert_F \leq 16\bar{a}_n,
    \qquad t \in T,
  \]
  so we may take $\sigma = 16\bar{a}_n = O(\bar{a}_n)$. Lastly, we verify that
  \[
    \Vert H_t - H_{t'}\Vert_F \leq L \Vert t - t'\Vert, \qquad t, t' \in T,
  \]
  for some $L$ not depending on $t, t'$, where we write $t' = (s', u')$. Because $\bar{A} = A(w, \psi_n)$ is linear in $u$ through $w =
  \mathcal{I}_n(\psi_n)^{-1/2}u$,
  \[
    \Vert \bar{A}_t - \bar{A}_{t'}\Vert_F = \sqrt{2}\Vert u - u'\Vert, \qquad
    \Vert \bar{A}_t - \bar{A}_{t'}\Vert \leq \sqrt{2}\bar{a}_n\Vert u - u'\Vert.
  \]
  Furthermore, $\Phi = \{I_{n-p} + A(\mathcal{I}_n(\psi_n)^{-1/2}s, \psi_n)\}^{-1}$ satisfies $\partial_s\Phi =
  -\Phi(\partial_s A)\Phi$ with $\Vert \partial_s A\Vert = O(\bar{a}_n)$;
  integrating along the segment from $s'$ to $s$, which lies in the ball $\Vert
  s\Vert \leq M_n$ by convexity, gives
  $\Vert \Phi_t - \Phi_{t'}\Vert = O(\bar{a}_n)\Vert s - s'\Vert$. Writing $H_t - H_{t'}$ as a sum of five terms, each the product
  $\Phi\bar{A}\Phi\bar{A}\Phi$ with one factor replaced by $\Phi_t - \Phi_{t'}$
  or $\bar{A}_t - \bar{A}_{t'}$ and the others taken from $t$ or $t'$, the
  bounds above give $\Vert H_t - H_{t'}\Vert_F = O(\bar{a}_n^2)\Vert s -
  s'\Vert + O(\bar{a}_n)\Vert u - u'\Vert = O(\bar{a}_n)\Vert t - t'\Vert$, so
  we may take $L = O(\bar{a}_n)$. Lemma~\ref{lem:qf_sup} with $\sigma = O(\bar{a}_n)$, $L = O(\bar{a}_n)$, and $\rho = O(M_n)$ then gives
  \[
  \E \sup_{t \in T} \vert R^\tsp H_t R - \tr H_t \vert \leq C_{2r} (\sigma + L\rho)
  \]
  with $C_{2r}$ depending only on $r$. Since $r$ is fixed and $\liminf_n M_n > 0$, $\sup_{t \in T} \vert R^\tsp H_t R - \tr H_t\vert = O_p(\bar{a}_n + M_n\bar{a}_n) = O_p(M_n\bar{a}_n)$. With the
  deterministic term, which is $O(M_n\bar{a}_n)$, the supremum over $t = (\mathcal{I}_n(\psi_n)^{1/2}(\psi - \psi_n), u)$, equivalent to the supremum over $(\psi, u)$, is $O_p(M_n\bar{a}_n)$. This proves the lemma.
\end{proof}

The next lemma is the analogue of Lemma~\ref{lem:local_ulln} for the expected
information, used to verify \eqref{eq:suff_info} below.

\begin{lemma} \label{lem:local_detrmn}
If \eqref{eq:negligible} holds, then for any radius $M_n$ with $M_n\bar{a}_n \to 0$,
\[
\sup_{\psi \in D_n^{M_n}} \Vert\mathcal{I}_n(\psi_n)^{-1/2} \mathcal{I}_n(\psi) \mathcal{I}_n(\psi_n)^{-1/2} - I_r\Vert = O(M_n\bar{a}_n).
\]
\end{lemma}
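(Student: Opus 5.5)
The plan is to follow the deterministic part of the proof of Lemma~\ref{lem:local_ulln}. There is no Gaussian quadratic form here, so no chaining is needed. Fix $n$ large enough that $D_n^{M_n} \subseteq \tilde{\Upsilon}_n$, which Lemma~\ref{lem:interior} allows because $M_n\bar{a}_n \to 0$. Also note that $\mathcal{I}_n(\psi_n) \succ 0$ for large $n$, as argued at the start of this appendix. The spectral norm of the symmetric matrix $\mathcal{I}_n(\psi_n)^{-1/2}\mathcal{I}_n(\psi)\mathcal{I}_n(\psi_n)^{-1/2} - I_r$ equals the supremum over $u \in \mathbb{S}^{r-1}$ of $\vert w^\tsp\{\mathcal{I}_n(\psi) - \mathcal{I}_n(\psi_n)\}w\vert$, where $w = \mathcal{I}_n(\psi_n)^{-1/2}u$. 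It therefore suffices to bound this quadratic form uniformly in $(\psi, u)$.

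Fix $\psi \in D_n^{M_n}$ and $u$, and set $h = \psi - \psi_n$. Define $\Phi = \tilde{\Sigma}(\psi_n)^{1/2}\tilde{\Sigma}(\psi)^{-1}\tilde{\Sigma}(\psi_n)^{1/2}$ as before, so that $\Phi^{-1} = I_{n-p} + A(h,\psi_n)$. Since $\partial\tilde{\Sigma}/\partial\psi_j$ does not depend on $\psi$, we can write $A_j(\psi) = \tilde{\Sigma}(\psi)^{-1/2}\tilde{\Sigma}(\psi_n)^{1/2}A_j(\psi_n)\tilde{\Sigma}(\psi_n)^{1/2}\tilde{\Sigma}(\psi)^{-1/2}$. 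Cyclicity of the trace then gives $\mathcal{I}_{n,ij}(\psi) = \tfrac12\tr(\Phi A_i\Phi A_j)$, with the $A_i$ evaluated at $\psi_n$. Consequently
\[
w^\tsp\{\mathcal{I}_n(\psi)-\mathcal{I}_n(\psi_n)\}w = \tfrac12\tr(\Phi\bar{A}\Phi\bar{A}) - \tfrac12\tr(\bar{A}^2), \qquad \bar{A} = A(w,\psi_n).
\]

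Next write $\Phi = I_{n-p} + E$ and expand. The term free of $E$ cancels, and what remains is $\tr(E\bar{A}\bar{A}) + \tfrac12\tr(E\bar{A}E\bar{A})$, using cyclicity to combine the two linear terms. By the inequality $\vert\tr(X\bar{A}Y\bar{A}Z)\vert \le \Vert ZX\Vert\,\Vert Y\Vert\,\Vert\bar{A}\Vert_F^2$ from the proof of Lemma~\ref{lem:local_ulln}, this is bounded by $(\Vert E\Vert + \tfrac12\Vert E\Vert^2)\Vert\bar{A}\Vert_F^2$. Two bounds from that proof now apply verbatim. First, $\Vert\bar{A}\Vert_F^2 = 2$. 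Second, $\Vert E\Vert \le \sqrt{2}M_n\bar{a}_n/(1-\sqrt{2}M_n\bar{a}_n)$, which rests on $\Vert A(h,\psi_n)\Vert \le \bar{a}_n\Vert A(h,\psi_n)\Vert_F \le \sqrt{2}M_n\bar{a}_n$. Neither bound depends on $(\psi,u)$, so the supremum over $D_n^{M_n}\times\mathbb{S}^{r-1}$ is $O(M_n\bar{a}_n)$.

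The only step requiring care is the identity $\mathcal{I}_n(\psi) = \tfrac12[\tr(\Phi A_i\Phi A_j)]_{ij}$, that is, rewriting the information at $\psi$ in terms of the $A_j$ at $\psi_n$. This follows from linearity of $\tilde{\Sigma}$ and cyclicity of the trace. Unlike Lemma~\ref{lem:local_ulln}, no condition $\liminf M_n > 0$ is needed, because there is no stochastic term to control.
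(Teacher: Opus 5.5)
Your proposal is correct and follows essentially the same route as the paper. Both arguments write $\mathcal{I}_{n,ij}(\psi) = \tfrac12\tr(\Phi A_i\Phi A_j)$, expand $\Phi = I_{n-p} + E$ to leave $\tr(E\bar{A}^2) + \tfrac12\tr(E\bar{A}E\bar{A})$, and then use the bounds $\Vert\bar{A}\Vert_F^2 = 2$ and $\Vert E\Vert = O(M_n\bar{a}_n)$ from Lemma~\ref{lem:local_ulln}. You also supply details the paper leaves implicit, all of them correct: the cyclicity justification of the information identity, the appeal to Lemma~\ref{lem:interior} for $D_n^{M_n}\subseteq\tilde{\Upsilon}_n$, and the observation that $\liminf_n M_n > 0$ is not needed.
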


\begin{proof}
Consider $\Phi$, $\bar{A}$ and $E$ as defined in the proof of Lemma \ref{lem:local_ulln}. Pick arbitrary $u \in \mathbb{S}^{r - 1}$ and $\psi \in D_n^{M_n}$. As in Lemma \ref{lem:local_ulln}, direct calculation shows that the expected information evaluated at arbitrary $\psi$ has entries
\[
\mathcal{I}_{n, ij}(\psi) = \frac{1}{2}\tr(\Phi A_i \Phi A_j).
\]
Thus,
\[
\begin{aligned}
u^\tsp\left( \mathcal{I}_n(\psi_n)^{-1/2} \mathcal{I}_n(\psi) \mathcal{I}_n(\psi_n)^{-1/2} - I_r \right) u &= \frac{1}{2} \tr(\Phi \bar{A} \Phi \bar{A}) - \frac{1}{2} \tr(\bar{A}^2) \\
&= \tr(E\bar{A}^2) + \frac{1}{2}\tr(E \bar{A} E \bar{A})
\end{aligned}
\]
and
\[
\left\vert \tr(E\bar{A}^2) + \frac{1}{2} \tr(E\bar{A}E\bar{A}) \right\vert \leq \Vert E \Vert \Vert \bar{A} \Vert^2_F + \frac{1}{2} \Vert E \Vert^2 \Vert \bar{A} \Vert_F^2 = O(M_n\bar{a}_n)
\]
since, as in the proof of Lemma \ref{lem:local_ulln}, $\Vert \bar{A}\Vert_F = O(1)$ and $\Vert E\Vert = O(M_n\bar{a}_n)$. Taking the supremum over $u \in \mathbb{S}^{r-1}$ and $\psi \in D_n^{M_n}$ gives the result.
\end{proof}

We next record the norm of the matrix that multiplies the nuisance-score
residual when the root is inexact; it is the quantity referred to in
Section~\ref{sec:gen_theory}.

\begin{lemma} \label{lem:kappa}
  If $\mathcal{I}_n(\psi_n) \succ 0$, then $\kappa_n < 1$ and, with all
  matrices evaluated at $\psi_n$,
  \[
    \left\Vert \{\mathcal{I}_n^{(11)}\}^{1/2}\mathcal{I}_{n(12)}
      \mathcal{I}_{n(22)}^{-1/2}\right\Vert = \frac{\kappa_n}{\sqrt{1 - \kappa_n^2}}.
  \]
\end{lemma}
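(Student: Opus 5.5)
The plan is to reduce everything to a single vector built from the blocks of $\mathcal{I}_n = \mathcal{I}_n(\psi_n)$. Since $\mathcal{I}_n \succ 0$, its principal submatrices $\mathcal{I}_{n(11)} > 0$ and $\mathcal{I}_{n(22)} \succ 0$ are also positive definite, so all quantities in the statement are well defined. Moreover, the Schur complement $\mathcal{I}_{n(11)} - \mathcal{I}_{n(12)}\mathcal{I}_{n(22)}^{-1}\mathcal{I}_{n(21)}$ is strictly positive, because it is the reciprocal of the $(1,1)$ entry of the positive definite matrix $\mathcal{I}_n^{-1}$ (equivalently, $\det \mathcal{I}_n = \det\mathcal{I}_{n(22)}$ times this complement). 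Dividing by $\mathcal{I}_{n(11)}$ and using the definition \eqref{eq:kappa} gives $1 - \kappa_n^2 > 0$. Since $\kappa_n \geq 0$ as the square root of a nonnegative quadratic form ratio, this yields $\kappa_n < 1$.

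For the norm identity, define the row vector $b = \mathcal{I}_{n(12)}\mathcal{I}_{n(22)}^{-1/2} \in \R{1\times(r-1)}$. Because $b$ is a row vector, its spectral norm equals its Euclidean norm. We then have $\Vert b\Vert^2 = b b^\tsp = \mathcal{I}_{n(12)}\mathcal{I}_{n(22)}^{-1}\mathcal{I}_{n(21)} = \kappa_n^2\mathcal{I}_{n(11)}$, where the middle step uses the symmetry of $\mathcal{I}_{n(22)}^{-1/2}$ and $\mathcal{I}_{n(21)} = \mathcal{I}_{n(12)}^\tsp$.

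The matrix in the statement is the positive scalar $\{\mathcal{I}_n^{(11)}\}^{1/2}$ times $b$. Its squared norm is therefore $\mathcal{I}_n^{(11)}\kappa_n^2\mathcal{I}_{n(11)}$. Substituting the relation noted after \eqref{eq:kappa}, $(\mathcal{I}^{(11)})^{-1} = \mathcal{I}_{(11)}(1-\kappa^2)$, the squared norm becomes $\kappa_n^2/(1-\kappa_n^2)$. Taking nonnegative square roots gives the claimed expression.

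There is no real obstacle here. The only care points are that $\mathcal{I}_{n(22)}^{-1/2}$ and $\mathcal{I}^{(11)}$ exist, which follows from positive definiteness, and the observation that the spectral norm of a $1\times(r-1)$ matrix is its Euclidean norm. Everything else is the Schur complement formula for the $(1,1)$ block of a $2\times 2$ block inverse.
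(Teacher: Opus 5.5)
Your proposal is correct and follows essentially the same route as the paper: positivity of the Schur complement $(\mathcal{I}_n^{(11)})^{-1} = \mathcal{I}_{n(11)}(1-\kappa_n^2)$ gives $\kappa_n < 1$, and the squared norm of the row vector is $\mathcal{I}_n^{(11)}\mathcal{I}_{n(12)}\mathcal{I}_{n(22)}^{-1}\mathcal{I}_{n(21)} = \kappa_n^2/(1-\kappa_n^2)$. Your extra remarks on well-definedness, on $\kappa_n \geq 0$, and on the spectral norm of a row vector being its Euclidean norm are details the paper leaves implicit.
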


\begin{proof}
  Since $\mathcal{I}_n(\psi_n) \succ 0$, its Schur complement
  $(\mathcal{I}_n^{(11)})^{-1} = \mathcal{I}_{n(11)} -
  \mathcal{I}_{n(12)}\mathcal{I}_{n(22)}^{-1}\mathcal{I}_{n(21)} =
  \mathcal{I}_{n(11)}(1 - \kappa_n^2)$ is positive, where the second equality
  is \eqref{eq:kappa}, and hence
  $\kappa_n < 1$. The squared norm in the statement is
  $\mathcal{I}_n^{(11)}\mathcal{I}_{n(12)}\mathcal{I}_{n(22)}^{-1}
  \mathcal{I}_{n(21)} = \mathcal{I}_n^{(11)}\mathcal{I}_{n(11)}\kappa_n^2 =
  \kappa_n^2/(1 - \kappa_n^2)$, which completes the proof.
\end{proof}

By Lemma~\ref{lem:interior}, $(\psi_{n1}, \psi_{(2)}) \in \tilde{\Upsilon}_n$
for every $\psi_{(2)} \in B_n$ once $n$ is large, so the maximizer of
$\ell_n(\psi_{n1}, \cdot)$ over $B_n$, denoted $\hat{\psi}_{n(2)}$, is well
defined.

\begin{lemma} \label{lem:nuis_est}
  If \eqref{eq:negligible} holds, then, with probability tending to one,
  $\ell_n(\psi_{n1}, \cdot)$ is strictly concave on $B_n$, and
  $\hat{\psi}_{n(2)}$ is interior to $B_n$, is the unique root of
  $S_{n(2)}(\psi_{n1}, \cdot)$ there, and satisfies \eqref{eq:nuisance_root}
  and
  \[
    \mathcal{I}_{n(22)}(\psi_n)^{1/2}(\hat{\psi}_{n(2)} - \psi_{n(2)}) =
    \mathcal{I}_{n(22)}(\psi_n)^{-1/2} S_{n(2)}(\psi_n) + o_p(1),
  \]
  and hence \eqref{eq:nuis_rate}. If in addition $\tilde{\psi}_{n(2)}$
  satisfies \eqref{eq:nuisance_root} and \eqref{eq:nuis_rate}, then
  $\tilde{\psi}_{n(2)} = \hat{\psi}_{n(2)}$ with probability tending to one.
\end{lemma}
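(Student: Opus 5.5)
Work throughout in the local coordinate $s = \mathcal{I}_{n(22)}(\psi_n)^{1/2}(\psi_{(2)} - \psi_{n(2)})$, so that $B_n$ becomes the ball $\{\Vert s\Vert \leq M_n\}$. Write $g_n(s) = \ell_n(\psi_{n1}, \psi_{(2)})$ in these coordinates. By Lemma~\ref{lem:interior}, $D_n \subseteq \tilde{\Upsilon}_n$ for large $n$, and $(\psi_{n1},\psi_{(2)}) \in D_n$ whenever $\psi_{(2)} \in B_n$. Hence $g_n$ is smooth on the ball. Its gradient is $\mathcal{I}_{n(22)}(\psi_n)^{-1/2}S_{n(2)}$, and its negative Hessian is $\mathcal{I}_{n(22)}(\psi_n)^{-1/2}\mathcal{J}_{n(22)}\mathcal{I}_{n(22)}(\psi_n)^{-1/2}$.

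The $(22)$-block version of Lemma~\ref{lem:local_ulln} comes from restricting the quadratic form there to unit vectors $u$ with first coordinate zero. Equivalently, use $w = (0, \mathcal{I}_{n(22)}^{-1/2}u_2)$ in that proof, which gives $\Vert\bar A\Vert_F^2 = 2$. This yields
\[
\sup_{\Vert s\Vert\leq M_n}\Vert \mathcal{I}_{n(22)}(\psi_n)^{-1/2}\mathcal{J}_{n(22)}\mathcal{I}_{n(22)}(\psi_n)^{-1/2} - I_{r-1}\Vert = O_p(M_n\bar a_n) = o_p(1).
\]
With probability tending to one the negative Hessian therefore has eigenvalues in $[1/2, 3/2]$ on the whole ball, so $g_n$ is strictly concave there. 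This is the first claim.

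Next, the gradient at the centre is $Z_n = \mathcal{I}_{n(22)}(\psi_n)^{-1/2}S_{n(2)}(\psi_n)$. It has identity covariance, so $\E\Vert Z_n\Vert^2 = r-1$ and $\Vert Z_n\Vert = O_p(1)$. A second-order Taylor expansion with integral remainder, combined with the Hessian bound, gives uniformly over the ball
\[
g_n(s) - g_n(0) = s^\tsp Z_n - \tfrac12\Vert s\Vert^2 + O_p(M_n\bar a_n)\Vert s\Vert^2 .
\]
Put $s^* = Z_n$. On the boundary $\Vert s\Vert = M_n$, the right side is at most $M_n\Vert Z_n\Vert - M_n^2/2 + o_p(M_n^2)$, which is negative with probability tending to one because $M_n \to \infty$ (we have $M_n = \bar a_n^{-1/6}$). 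At $s = 0$ the value is $0$. Hence the maximizer $\hat\psi_{n(2)}$ is interior to the ball, so it solves $S_{n(2)} = 0$, which is \eqref{eq:nuisance_root}. Strict concavity makes it the unique root in $B_n$.

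For the linearization, the mean value form of the gradient gives $0 = Z_n - (I + \Delta_n)\hat s$, where $\Vert\Delta_n\Vert = O_p(M_n\bar a_n)$. Since $\Vert\hat s\Vert \leq M_n$, this yields $\hat s = Z_n + O_p(M_n^2\bar a_n)$. That is $Z_n + o_p(1)$, using $M_n^3\bar a_n \to 0$. A sharper alternative: first note $\Vert \hat s\Vert = O_p(1)$ from $\Vert\hat s\Vert \leq 2\Vert Z_n\Vert$, which holds because the Hessian eigenvalues lie in $[1/2,3/2]$. The claimed expansion, and hence \eqref{eq:nuis_rate}, follow.

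Finally, suppose $\tilde\psi_{n(2)}$ satisfies \eqref{eq:nuisance_root} and \eqref{eq:nuis_rate}. Then its local coordinate is $O_p(1)$, so it lies in $B_n$ with probability tending to one because $M_n \to \infty$. There it is a root, and by uniqueness of the root in $B_n$ it equals $\hat\psi_{n(2)}$ with probability tending to one.

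The main obstacle I expect is the uniform Hessian control restricted to the nuisance block while holding $\psi_1$ fixed at $\psi_{n1}$. This has to be drawn out of Lemma~\ref{lem:local_ulln} on the full $D_n$ without losing the $O_p(M_n\bar a_n)$ rate. It should work because the identity $\Vert \mathcal{I}_n(\psi_n)^{1/2}h\Vert = \Vert\mathcal{I}_{n(22)}^{1/2}(\psi_{(2)}-\psi_{n(2)})\Vert$ noted after \eqref{eq:infoball} places the slice inside $D_n$.
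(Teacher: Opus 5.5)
Your proof is correct. Its ingredients match the paper's: the nuisance-block version of Lemma~\ref{lem:local_ulln}, a second-order Taylor expansion in the whitened coordinate, uniqueness of the root from strict concavity, and the $M_n \to \infty$ argument for the final claim.

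The routes differ in how interiority and the expansion are obtained. The paper compares objective values. It writes $2\{\ell_n(\psi) - \ell_n(\psi_n)\} = -\Vert w - W_{n(2)}\Vert^2 + \Vert W_{n(2)}\Vert^2 + \tau_n$ with $\sup_{B_n}\vert\tau_n\vert \leq M_n^2 \cdot O_p(M_n\bar a_n)$. It then compares the maximizer with the point whose whitened coordinate is $W_{n(2)}$, which gives $\Vert \hat w_n - W_{n(2)}\Vert^2 = O_p(M_n^3\bar a_n)$, and only afterwards deduces interiority. You instead get interiority first, from the sign of the objective on the sphere $\Vert s\Vert = M_n$, and then linearize the score equation along the segment from $0$ to $\hat s$.

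Your route needs only $M_n \to \infty$ and $M_n\bar a_n \to 0$, not $M_n^3\bar a_n \to 0$. In your sharper version, using $\Vert \hat s\Vert \leq 2\Vert Z_n\Vert$, it also gives the better remainder $O_p(M_n\bar a_n)$. The paper's route is the standard argmax argument: it delivers the expansion and interiority together, without invoking the first-order condition.

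One wording slip. Restricting Lemma~\ref{lem:local_ulln} to unit vectors $u$ with first coordinate zero is not equivalent to taking $w = (0, \mathcal{I}_{n(22)}^{-1/2}u_2)$. The latter corresponds to $u = \mathcal{I}_n(\psi_n)^{1/2}w$, a unit vector whose first coordinate is generally nonzero; this is exactly the paper's $Q^\tsp u_2$. Your second formulation is the right one, and it is covered because the supremum in that lemma is over all unit $u$.
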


\begin{proof}
  Throughout, write $\mathcal{I}_{n(22)} = \mathcal{I}_{n(22)}(\psi_n)$ and,
  for $\psi_{(2)} \in B_n$, $\psi = (\psi_{n1}, \psi_{(2)}^\tsp)^\tsp$, $h =
  \psi - \psi_n$, and $w = \mathcal{I}_{n(22)}^{1/2}(\psi_{(2)} -
  \psi_{n(2)})$, such that $\Vert w\Vert \leq M_n$. Let $W_{n(2)} =
  \mathcal{I}_{n(22)}^{-1/2}S_{n(2)}(\psi_n)$, which is $O_p(1)$ because it
  has mean zero and $\var(W_{n(2)}) = I_{r-1}$ with $r$ fixed. For all large
  $n$, $D_n \subseteq \tilde{\Upsilon}_n$ by Lemma~\ref{lem:interior}, so
  $\ell_n(\psi_{n1}, \cdot)$ is twice continuously differentiable on the
  compact convex set $B_n$ and has a maximizer there; let $\hat{\psi}_{n(2)}$
  be one.

  We first show that, with probability tending to one, $\ell_n(\psi_{n1},
  \cdot)$ is strictly concave on $B_n$. Let $Q =
  \mathcal{I}_{n(22)}^{-1/2}P_2\mathcal{I}_n(\psi_n)^{1/2}$, which satisfies
  $\Vert Q\Vert = 1$ because $QQ^\tsp = I_{r-1}$. For every $\psi \in D_n$,
  \[
    \mathcal{I}_{n(22)}^{-1/2}\{\mathcal{J}_{n(22)}(\psi) -
    \mathcal{I}_{n(22)}\}\mathcal{I}_{n(22)}^{-1/2} = Q\left[
    \mathcal{I}_n(\psi_n)^{-1/2}\{\mathcal{J}_n(\psi) -
    \mathcal{I}_n(\psi_n)\}\mathcal{I}_n(\psi_n)^{-1/2}\right]Q^\tsp,
  \]
  and $(\psi_{n1}, \psi_{(2)}) \in D_n$ for every $\psi_{(2)} \in B_n$, so
  Lemma~\ref{lem:local_ulln} with radius $M_n$ gives
  \begin{equation} \label{eq:nuis_ulln}
    \sup_{\psi_{(2)} \in B_n}\left\Vert \mathcal{I}_{n(22)}^{-1/2}
      \{\mathcal{J}_{n(22)}(\psi_{n1}, \psi_{(2)}) - \mathcal{I}_{n(22)}\}
      \mathcal{I}_{n(22)}^{-1/2}\right\Vert = O_p(M_n\bar{a}_n).
  \end{equation}
  Because $M_n\bar{a}_n \to 0$, with probability tending to one
  $\mathcal{J}_{n(22)}(\psi_{n1}, \psi_{(2)}) \succ 0$ for every $\psi_{(2)}
  \in B_n$, and on that event $\ell_n(\psi_{n1}, \cdot)$ is strictly concave
  on $B_n$, so $\hat{\psi}_{n(2)}$ is its unique maximizer and its only
  stationary point in $B_n$.

  Next, for $\psi_{(2)} \in B_n$, the segment from $\psi_n$ to $\psi$ lies in
  the convex set $D_n$, so Taylor's theorem with integral remainder gives
  \[
    \ell_n(\psi) - \ell_n(\psi_n) = h^\tsp S_n(\psi_n) - \frac{1}{2}h^\tsp
      \left\{2\int_0^1(1 - t)\,\mathcal{J}_n(\psi_n + th)\,\dd t\right\}h.
  \]
  Because $h$ has zero interest block, $h^\tsp S_n(\psi_n) = (\psi_{(2)} -
  \psi_{n(2)})^\tsp S_{n(2)}(\psi_n) = w^\tsp W_{n(2)}$, and the quadratic
  term equals $w^\tsp\{I_{r-1} + \Delta_n(\psi_{(2)})\}w$, where
  \[
    \Delta_n(\psi_{(2)}) = 2\int_0^1(1 - t)\,\mathcal{I}_{n(22)}^{-1/2}
      \{\mathcal{J}_{n(22)}(\psi_n + th) - \mathcal{I}_{n(22)}\}
      \mathcal{I}_{n(22)}^{-1/2}\,\dd t.
  \]
  The nuisance block of $\psi_n + th$ lies in $B_n$ for $t \in [0, 1]$, so
  $\sup_{\psi_{(2)} \in B_n}\Vert\Delta_n(\psi_{(2)})\Vert = O_p(M_n\bar{a}_n)$
  by \eqref{eq:nuis_ulln} and $\int_0^1 2(1-t)dt  = 1$. Hence, 
  %uniformly in $\psi_{(2)} \in B_n$,
  \[
    2\{\ell_n(\psi) - \ell_n(\psi_n)\} = 2w^\tsp W_{n(2)} - \Vert w\Vert^2 +
      \tau_n(\psi_{(2)}) = -\Vert w - W_{n(2)}\Vert^2 + \Vert W_{n(2)}\Vert^2
      + \tau_n(\psi_{(2)}),
  \]
  where $\tau_n(\psi_{(2)}) = -w^\tsp\Delta_n(\psi_{(2)})w$ satisfies
  \[
    \sup_{\psi_{(2)} \in B_n}\vert\tau_n(\psi_{(2)})\vert \leq M_n^2
    \sup_{\psi_{(2)} \in B_n}\Vert\Delta_n(\psi_{(2)})\Vert = O_p(M_n^3\bar{a}_n)
    = o_p(1).
  \]

  Let $\hat{w}_n = \mathcal{I}_{n(22)}^{1/2}(\hat{\psi}_{n(2)} -
  \psi_{n(2)})$. Since $\Vert W_{n(2)}\Vert = O_p(1)$ and $M_n \to \infty$,
  the point $\psi^{W}_{(2)} = \psi_{n(2)} + \mathcal{I}_{n(22)}^{-1/2}W_{n(2)}$,
  whose whitened coordinate is $W_{n(2)}$, lies in $B_n$ with probability
  tending to one. On that event, $\ell_n(\psi_{n1}, \hat{\psi}_{n(2)}) \geq
  \ell_n(\psi_{n1}, \psi^{W}_{(2)})$, which by the last display gives
  \[
    \Vert\hat{w}_n - W_{n(2)}\Vert^2 \leq \tau_n(\hat{\psi}_{n(2)}) -
      \tau_n(\psi^{W}_{(2)}) \leq 2\sup_{\psi_{(2)} \in B_n}
      \vert\tau_n(\psi_{(2)})\vert = o_p(1).
  \]
  That is, $\hat{w}_n = W_{n(2)} + o_p(1)$, which is the expansion displayed
  in the statement. Consequently $\Vert\hat{w}_n\Vert = O_p(1)$, and since
  $M_n \to \infty$, $\Vert\hat{w}_n\Vert < M_n$ with probability tending to
  one, so $\hat{\psi}_{n(2)}$ is interior to $B_n$. Being an interior
  maximizer of the differentiable $\ell_n(\psi_{n1}, \cdot)$, it satisfies
  $S_{n(2)}(\psi_{n1}, \hat{\psi}_{n(2)}) = 0$, which is
  \eqref{eq:nuisance_root}, and it is the unique root of $S_{n(2)}(\psi_{n1},
  \cdot)$ in $B_n$ because it is the only stationary point there.

  For the final claim, suppose $\tilde{\psi}_{n(2)}$ satisfies
  \eqref{eq:nuisance_root} and \eqref{eq:nuis_rate}, and write 
  $$
  \tilde{w}_n = \mathcal{I}_{n(22)}(\psi_n)^{1/2}(\tilde{\psi}_{n(2)} - \psi_{n(2)}),
  $$
  so that $\tilde{\psi}_{n(2)} \in B_n$ if and only if $\Vert \tilde{w}_n\Vert
  \leq M_n$. Fix an arbitrary $\epsilon > 0$ and, using \eqref{eq:nuis_rate},
  pick $K$ with $\limsup_n \pr_{\psi_n}(\Vert \tilde{w}_n \Vert > K) <
  \epsilon$. Since $M_n \to \infty$, $M_n \geq K$ for all large $n$, so
  $\limsup_n \pr_{\psi_n}(\Vert \tilde{w}_n\Vert > M_n) < \epsilon$; as
  $\epsilon$ was arbitrary, $\tilde{\psi}_{n(2)} \in B_n$ with probability
  tending to one. Thus the event that $\hat{\psi}_{n(2)}$ is the unique root
  of $S_{n(2)}(\psi_{n1}, \cdot)$ in $B_n$, that $\tilde{\psi}_{n(2)} \in
  B_n$, and that $S_{n(2)}(\psi_{n1}, \tilde{\psi}_{n(2)}) = 0$ has
  probability tending to one, and on that event $\tilde{\psi}_{n(2)} =
  \hat{\psi}_{n(2)}$. This completes the proof.
\end{proof}

Under \eqref{eq:negligible}, $D_n \subseteq \tilde{\Upsilon}_n$ for all large
$n$ by Lemma~\ref{lem:interior}, so the maximizer of $\ell_n$ over $D_n$,
denoted $\bar{\psi}_n$, is well defined; write $\bar{\psi}_{n(2)}$ for its
nuisance block.

\begin{lemma} \label{lem:reml_start}
  If \eqref{eq:negligible} holds, then, with probability tending to one,
  $\ell_n$ is strictly concave on $D_n$, and $\bar{\psi}_n$ is interior to
  $D_n$, is the unique root of $S_n$ there, and satisfies
  \[
    \mathcal{I}_n(\psi_n)^{1/2}(\bar{\psi}_n - \psi_n) =
    \mathcal{I}_n(\psi_n)^{-1/2} S_n(\psi_n) + o_p(1);
  \]
  if in addition $\limsup_{n \to \infty} \kappa_n < 1$, then
  $\bar{\psi}_{n(2)}$ satisfies \eqref{eq:nuis_rate}.
\end{lemma}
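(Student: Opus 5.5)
The plan follows the proof of Lemma~\ref{lem:nuis_est}, now with the full parameter in place of the nuisance block. For large $n$, Lemma~\ref{lem:interior} gives $D_n \subseteq \tilde{\Upsilon}_n$. Hence $\ell_n$ is twice continuously differentiable on the compact convex set $D_n$, and $\bar{\psi}_n$ exists.

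\emph{Concavity.} Lemma~\ref{lem:local_ulln} with radius $M_n$ gives
\[
\sup_{\psi\in D_n}\Vert \mathcal{I}_n(\psi_n)^{-1/2}\{\mathcal{J}_n(\psi)-\mathcal{I}_n(\psi_n)\}\mathcal{I}_n(\psi_n)^{-1/2}\Vert = O_p(M_n\bar a_n)=o_p(1).
\]
So with probability tending to one $\mathcal{J}_n \succ 0$ on $D_n$. On that event $\ell_n$ is strictly concave there, $\bar\psi_n$ is its unique maximizer, and it is the only stationary point in $D_n$.

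\emph{Expansion.} Put $h=\psi-\psi_n$, $w=\mathcal{I}_n(\psi_n)^{1/2}h$, and $W_n=\mathcal{I}_n(\psi_n)^{-1/2}S_n(\psi_n)$. Since $W_n$ has mean zero and identity covariance, $W_n = O_p(1)$. Taylor's theorem with integral remainder along segments in the convex set $D_n$ gives
\[
2\{\ell_n(\psi)-\ell_n(\psi_n)\} = -\Vert w-W_n\Vert^2+\Vert W_n\Vert^2+\tau_n(\psi).
\]
Here $\sup_{D_n}|\tau_n|\le M_n^2\,O_p(M_n\bar a_n)=O_p(M_n^3\bar a_n)=o_p(1)$. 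Now compare $\bar\psi_n$ with $\psi^W=\psi_n+\mathcal{I}_n(\psi_n)^{-1/2}W_n$, which lies in $D_n$ with probability tending to one because $M_n\to\infty$. Since $\ell_n(\bar\psi_n)\ge\ell_n(\psi^W)$, the display yields $\Vert \bar w_n-W_n\Vert^2\le 2\sup|\tau_n|=o_p(1)$. This is the stated expansion. Consequently $\Vert\bar w_n\Vert=O_p(1)<M_n$ with probability tending to one. So $\bar\psi_n$ is interior to $D_n$, satisfies $S_n(\bar\psi_n)=0$, and is the unique root there by strict concavity.

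\emph{Nuisance rate.} This step is the main obstacle. The conclusion gives $\Vert\mathcal{I}_n(\psi_n)^{1/2}(\bar\psi_n-\psi_n)\Vert=O_p(1)$, but \eqref{eq:nuis_rate} uses $\mathcal{I}_{n(22)}^{1/2}$ applied to the nuisance block alone. We need a bound on $\Vert \mathcal{I}_{n(22)}^{1/2}P_2\,\mathcal{I}_n(\psi_n)^{-1/2}\Vert$. Its square equals the largest eigenvalue of $\mathcal{I}_{n(22)}^{1/2}\mathcal{I}_n^{(22)}\mathcal{I}_{n(22)}^{1/2}$, where $\mathcal{I}_n^{(22)}$ is the nuisance block of the inverse. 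By block inversion,
\[
\mathcal{I}_n^{(22)}=\mathcal{I}_{n(22)}^{-1}+\mathcal{I}_{n(22)}^{-1}\mathcal{I}_{n(21)}\mathcal{I}_n^{(11)}\mathcal{I}_{n(12)}\mathcal{I}_{n(22)}^{-1}.
\]
Therefore
\[
\mathcal{I}_{n(22)}^{1/2}\mathcal{I}_n^{(22)}\mathcal{I}_{n(22)}^{1/2}=I_{r-1}+bb^\tsp,\qquad b=\mathcal{I}_{n(22)}^{-1/2}\mathcal{I}_{n(21)}\{\mathcal{I}_n^{(11)}\}^{1/2}.
\]
By Lemma~\ref{lem:kappa}, $\Vert b\Vert^2=\kappa_n^2/(1-\kappa_n^2)$, so the norm is $(1-\kappa_n^2)^{-1/2}$. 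Under $\limsup\kappa_n<1$ this is bounded, and writing $\mathcal{I}_{n(22)}^{1/2}(\bar\psi_{n(2)}-\psi_{n(2)})=\mathcal{I}_{n(22)}^{1/2}P_2\mathcal{I}_n(\psi_n)^{-1/2}\bar w_n$ gives \eqref{eq:nuis_rate}.
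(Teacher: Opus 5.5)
Your proposal is correct and follows the paper's proof: the paper also transfers the argument of Lemma~\ref{lem:nuis_est} to the full parameter on $D_n$, and then bounds $\Vert \mathcal{I}_{n(22)}^{1/2}P_2\mathcal{I}_n^{-1/2}\Vert$ by $(1-\kappa_n^2)^{-1/2}$. The only difference is cosmetic. You write the whitened nuisance block of $\mathcal{I}_n^{-1}$ as $I_{r-1}+bb^\tsp$, whereas the paper writes it in Schur-complement form $(I_{r-1}-N^\tsp N)^{-1}$. These are the same matrix by Sherman--Morrison and give the same bound.
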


\begin{proof}
  The proof of Lemma~\ref{lem:nuis_est} applies with the full parameter vector
  in place of the nuisance block, giving every claim up to and including the
  displayed expansion: replace $B_n$ by $D_n$, $h$ by $\psi - \psi_n$, $w$ by
  $\mathcal{I}_n(\psi_n)^{1/2}(\psi - \psi_n)$, and $W_{n(2)}$ by $W_n =
  \mathcal{I}_n(\psi_n)^{-1/2}S_n(\psi_n)$, which is $O_p(1)$ because it has
  mean zero and identity covariance with $r$ fixed. The identity $h^\tsp
  S_n(\psi_n) = w^\tsp W_n$ then holds by the definitions of $w$ and $W_n$,
  and the uniform bound on the observed information is
  Lemma~\ref{lem:local_ulln} on $D_n$ itself, without the projection by $Q$.

  For the final claim, write $\mathcal{I}_n = \mathcal{I}_n(\psi_n)$ and $\mathcal{I}_{n(22)} = \mathcal{I}_{n(22)}(\psi_n)$. Under \eqref{eq:negligible},
  $\mathcal{I}_n \succ 0$ for all large $n$. Since $\bar{\psi}_{n(2)} -
  \psi_{n(2)} = P_2(\bar{\psi}_n - \psi_n)$,
  \[
    \Vert \mathcal{I}_{n(22)}^{1/2}(\bar{\psi}_{n(2)} - \psi_{n(2)})\Vert \leq
    \Vert \mathcal{I}_{n(22)}^{1/2}P_2\mathcal{I}_n^{-1/2}\Vert\,
    \Vert \mathcal{I}_n^{1/2}(\bar{\psi}_n - \psi_n)\Vert,
  \]
  where the second factor is $O_p(1)$ by the expansion just established. For the first factor,
  \[
    \Vert \mathcal{I}_{n(22)}^{1/2}P_2\mathcal{I}_n^{-1/2}\Vert^2 = \Vert
    \mathcal{I}_{n(22)}^{1/2}P_2\mathcal{I}_n^{-1}P_2^\tsp\mathcal{I}_{n(22)}^{1/2}\Vert,
  \]
  and $P_2\mathcal{I}_n^{-1}P_2^\tsp = (\mathcal{I}_{n(22)} -
  \mathcal{I}_{n(21)}\mathcal{I}_{n(11)}^{-1}\mathcal{I}_{n(12)})^{-1}$ by the
  block form of the inverse, so, with $N = \mathcal{I}_{n(11)}^{-1/2}\mathcal{I}_{n(12)}\mathcal{I}_{n(22)}^{-1/2} \in \R{1 \times (r - 1)}$,
  \[
    \mathcal{I}_{n(22)}^{1/2}P_2\mathcal{I}_n^{-1}P_2^\tsp\mathcal{I}_{n(22)}^{1/2}
    = (I_{r-1} - N^\tsp N)^{-1}.
  \]
  $N^\tsp N$ has rank at most $1$ and is symmetric, and thus has one eigenvalue equal to $\Vert N\Vert^2 = \kappa_n^2 < 1$, with the rest $0$, by the definition of $\kappa_n$ and Lemma~\ref{lem:kappa}. Thus, the spectral norm of the right-hand side is $(1 - \kappa_n^2)^{-1}$. Hence
  \[
    \Vert \mathcal{I}_{n(22)}^{1/2}(\bar{\psi}_{n(2)} - \psi_{n(2)})\Vert =
    O_p\{(1 - \kappa_n^2)^{-1/2}\},
  \]
  which is $O_p(1)$ when $\limsup_{n\to\infty} \kappa_n < 1$, so
  $\bar{\psi}_{n(2)}$ satisfies \eqref{eq:nuis_rate}. This completes the
  proof.
\end{proof}

The efficient score for $\psi_1$ is
\begin{equation} \label{eq:eff_score}
  U_n(\psi) = S_{n,1}(\psi) - \mathcal{I}_{n(12)}(\psi_n)\mathcal{I}_{n(22)}(\psi_n)^{-1}S_{n(2)}(\psi),
\end{equation}
where the information matrices are evaluated at the true $\psi_n$ for every
argument $\psi$. Let $K_n =
\mathcal{I}_{n(12)}(\psi_n)\mathcal{I}_{n(22)}(\psi_n)^{-1}$, such that $U_n(\psi) = [1, -K_n]S_n(\psi)$,
and let $\tilde{\psi}_n = [\psi_{n1}, \tilde{\psi}_{n(2)}^\tsp]^\tsp$, where
$\tilde{\psi}_{n(2)} = \tilde{\psi}_{n(2)}(\psi_{n1})$ is the nuisance
estimator used to form $C_n(\alpha)$, evaluated at the true interest
parameter. By the theory for general parametric models in
\citet{shedden2026profile}, \eqref{eq:uniform_coverage} holds if, along every
sequence $(\psi_n)$ in $A$, $\mathcal{I}_n(\psi_n)$ is invertible for all
large $n$ and, with information matrices without an argument evaluated at
$\psi_n$,
\begin{align}
  \mathcal{I}_n^{-1/2}S_n(\psi_n) &\rightsquigarrow \rN(0, I_r),
    \label{eq:suff_clt}\\
  \{\mathcal{I}_n^{(11)}\}^{1/2}K_n S_{n(2)}(\tilde{\psi}_n) &= o_p(1),
    \label{eq:suff_nuis}\\
  \{\mathcal{I}_n^{(11)}\}^{1/2}\{U_n(\tilde{\psi}_n) - U_n(\psi_n)\} &= o_p(1),
    \label{eq:suff_linear}\\
  \mathcal{I}_n^{(11)}(\tilde{\psi}_n)/\mathcal{I}_n^{(11)} &= 1 + o_p(1).
    \label{eq:suff_info}
\end{align}

We now prove the results of Section~\ref{sec:gen_theory}.

\begin{proof}[Proof of Lemma~\ref{lem:sequences}]
  If $A$ is empty, then \eqref{eq:uniform_coverage} holds trivially, so
  suppose it is not, and write $\delta_n = \sup_{\psi \in A}\vert
  \pr_\psi\{\psi_1 \in C_n(\alpha)\} - (1 - \alpha)\vert$. Because $\delta_n$
  is finite, there is for each $n$ a $\psi_n \in A$ with
  $\vert \pr_{\psi_n}\{\psi_{n1} \in C_n(\alpha)\} - (1 - \alpha)\vert \geq
  \delta_n - 1/n$. The hypothesis applied to the sequence $(\psi_n)$ gives
  $\pr_{\psi_n}\{\psi_{n1} \in C_n(\alpha)\} \to 1 - \alpha$, and hence
  $\delta_n \leq \vert \pr_{\psi_n}\{\psi_{n1} \in C_n(\alpha)\} - (1 -
  \alpha)\vert + 1/n \to 0$. This completes the proof.
\end{proof}

\begin{proof}[Proof of Theorem~\ref{thm:porting}]
  Fix an arbitrary sequence $(\psi_n)$ in $A$. By Lemma~\ref{lem:nuis_est}, $\hat{\psi}_{n(2)}$
  is a root of $S_{n(2)}(\psi_{n1}, \cdot)$ with probability tending to one,
  satisfying \eqref{eq:nuisance_root} and \eqref{eq:nuis_rate}, and any $\tilde{\psi}_{n(2)}$ satisfying \eqref{eq:nuisance_root} and
  \eqref{eq:nuis_rate} is equal to it with probability tending to one. Fix such a $\tilde{\psi}_{n(2)}$. To show \eqref{eq:uniform_coverage}, it suffices to consider an arbitrary sequence $(\psi_n)$ in $A$ and show \eqref{eq:suff_clt}--\eqref{eq:suff_info} and that $\mathcal{I}_n(\psi_n)$ is invertible for large $n$. The latter follows from \eqref{eq:negligible} as noted at the beginning of this appendix. Throughout, information matrices without an argument are evaluated at $\psi_n$.
  
  For \eqref{eq:suff_clt}, fix $v$ with $\| v \| = 1$. The vector $V^\tsp Y$ follows
  the model of \citet{ekvall2026uniform} with known zero mean, $n - p$
  observations, and random-effect design $V^\tsp Z$, and their $a(v, \psi)$ is
  then our $a_n(v, \psi)$. By their
  Lemma~5, for every $n$ with $\bar{a}_n^2 < 1/8$, the difference between the density of
  $v^\tsp\mathcal{I}_n^{-1/2}S_n(\psi_n)$ and the standard normal density is bounded by
  $0.14\{4 + 0.29(1 - 8\bar{a}_n^2)^{-2}\}\bar{a}_n$, uniformly on $\R{}$.
  The proof of the lemma uses only that $v^\tsp\mathcal{I}_n^{-1/2}S_n(\psi_n) =
  [R^\tsp A(\mathcal{I}_n^{-1/2}v, \psi_n)R - \tr\{A(\mathcal{I}_n^{-1/2}v,
  \psi_n)\}]/2$, which holds for symmetric $H_j$. By
  \eqref{eq:negligible}, the density of $v^\tsp\mathcal{I}_n^{-1/2}S_n(\psi_n)$
  therefore converges pointwise to the standard normal density. Scheffé's theorem \citep[Theorem~16.12]{billingsley1995probability} then gives convergence in total variation, and thus in distribution. Since $r$ is fixed
  and $v$ was arbitrary, \eqref{eq:suff_clt} follows.

  Condition \eqref{eq:suff_nuis} holds because $\tilde{\psi}_{n(2)}$
  satisfies \eqref{eq:nuisance_root}.

  For \eqref{eq:suff_linear}, let $\gamma_n(t) = (\psi_{n1}, \psi_{n(2)}^\tsp +
  t(\tilde{\psi}_{n(2)} - \psi_{n(2)})^\tsp)^\tsp$ for $t \in [0, 1]$. Since
  $\gamma_n(t) - \psi_n$ has zero interest block, $\Vert
  \mathcal{I}_n^{1/2}\{\gamma_n(t) - \psi_n\}\Vert = t\Vert
  \mathcal{I}_{n(22)}^{1/2}(\tilde{\psi}_{n(2)} - \psi_{n(2)})\Vert$, which
  is $O_p(1)$ by \eqref{eq:nuis_rate}, so the chord $\{\gamma_n(t) : t \in
  [0, 1]\}$ lies in $D_n$ with probability tending to one because $M_n \to
  \infty$. As $U_n(\psi) = [1, -K_n]S_n(\psi)$, the map $t \mapsto
  U_n(\gamma_n(t))$ is continuously differentiable with derivative $-[1,
  -K_n]\mathcal{J}_n(\gamma_n(t))P_2^\tsp(\tilde{\psi}_{n(2)} -
  \psi_{n(2)})$, and $[1, -K_n]\mathcal{I}_nP_2^\tsp = \mathcal{I}_{n(12)} -
  K_n\mathcal{I}_{n(22)} = 0$, so
  \[
    U_n(\psi_{n1}, \tilde{\psi}_{n(2)}) - U_n(\psi_n) = -\int_0^1 [1, -K_n]
      \{\mathcal{J}_n(\gamma_n(t)) - \mathcal{I}_n\}\,\dd t\,
      P_2^\tsp(\tilde{\psi}_{n(2)} - \psi_{n(2)}).
  \]
  Let $G_n(\psi) := \mathcal{I}_n^{-1/2}\{\mathcal{J}_n(\psi) - \mathcal{I}_n\}\mathcal{I}_n^{-1/2}$ such that $\mathcal{J}_n(\psi) - \mathcal{I}_n =
  \mathcal{I}_n^{1/2}G_n(\psi)\mathcal{I}_n^{1/2}$ and
  \[
    \Xi_n = \{\mathcal{I}_n^{(11)}\}^{1/2}[1, -K_n]\mathcal{I}_n^{1/2} \in
    \R{1 \times r}.
  \]
  On the event that the chord lies in $D_n$,
  \begin{multline*}
    \left\Vert \{\mathcal{I}_n^{(11)}\}^{1/2}\{U_n(\psi_{n1},
      \tilde{\psi}_{n(2)}) - U_n(\psi_n)\}\right\Vert \\
    \leq \sup_{\psi \in D_n}\left\Vert \Xi_n G_n(\psi)\,\mathcal{I}_n^{1/2}
      P_2^\tsp\mathcal{I}_{n(22)}^{-1/2}\,\mathcal{I}_{n(22)}^{1/2}
      (\tilde{\psi}_{n(2)} - \psi_{n(2)})\right\Vert \\
    \leq \sup_{\psi \in D_n}\Vert G_n(\psi)\Vert\,
      \Vert \mathcal{I}_{n(22)}^{1/2}(\tilde{\psi}_{n(2)} - \psi_{n(2)})\Vert.
  \end{multline*}
  The second inequality uses two identities that hold exactly for every $n$,
  however close to singular the correlation matrix of the score is: $\Vert
  \Xi_n\Vert = 1$, because $\Xi_n\Xi_n^\tsp =
  \mathcal{I}_n^{(11)}(\mathcal{I}_{n(11)} -
  \mathcal{I}_{n(12)}\mathcal{I}_{n(22)}^{-1}\mathcal{I}_{n(21)}) = 1$, and
  $\Vert \mathcal{I}_n^{1/2}P_2^\tsp\mathcal{I}_{n(22)}^{-1/2}\Vert = 1$,
  because $\mathcal{I}_{n(22)}^{-1/2}P_2\mathcal{I}_nP_2^\tsp
  \mathcal{I}_{n(22)}^{-1/2} = I_{r-1}$. By Lemma~\ref{lem:local_ulln},
  $\sup_{\psi \in D_n}\Vert G_n(\psi)\Vert = O_p(M_n\bar{a}_n)$, and
  $\mathcal{I}_{n(22)}^{1/2}(\tilde{\psi}_{n(2)} - \psi_{n(2)}) = O_p(1)$ by
  \eqref{eq:nuis_rate}. Since $M_n\bar{a}_n \to 0$, \eqref{eq:suff_linear} follows.

  Finally, for \eqref{eq:suff_info}, let $\Omega_n(\psi) =
  \mathcal{I}_n^{-1/2}\mathcal{I}_n(\psi)\mathcal{I}_n^{-1/2}$ and $L = [1,
  0]\,\mathcal{I}_n^{-1/2} \in \R{1 \times r}$. Because
  $\mathcal{I}_n^{(11)}(\psi)$ is the leading element of
  $\mathcal{I}_n(\psi)^{-1}$, we have $\mathcal{I}_n^{(11)}(\psi_n) = LL^\tsp
  = \Vert L\Vert^2$ and $\mathcal{I}_n^{(11)}(\psi) =
  L\,\Omega_n(\psi)^{-1}L^\tsp$, so
  \[
    \left\vert \frac{\mathcal{I}_n^{(11)}(\psi)}{\mathcal{I}_n^{(11)}(\psi_n)}
      - 1\right\vert = \frac{\vert L\{\Omega_n(\psi)^{-1} -
      I_r\}L^\tsp\vert}{\Vert L\Vert^2} \leq \Vert \Omega_n(\psi)^{-1} -
      I_r\Vert.
  \]
  By Lemma~\ref{lem:local_detrmn}, $\sup_{\psi \in D_n}\Vert \Omega_n(\psi) -
  I_r\Vert = O(M_n\bar{a}_n)$, so
  \[
    \sup_{\psi \in D_n}\Vert \Omega_n(\psi)^{-1} - I_r\Vert \leq \sup_{\psi \in
    D_n}\frac{\Vert \Omega_n(\psi) - I_r\Vert}{1 - \Vert \Omega_n(\psi) -
    I_r\Vert} = O(M_n\bar{a}_n),
  \]
  and thus $\sup_{\psi \in D_n} \mathcal{I}_n^{(11)}(\psi)/\mathcal{I}_n^{(11)}(\psi_n) = 1 + O_p(M_n\bar{a}_n)$. By \eqref{eq:nuis_rate} and $M_n \to \infty$, the point $(\psi_{n1},
  \tilde{\psi}_{n(2)})$ lies in $D_n$ with probability tending to one, since
  its distance from $\psi_n$ in the $\mathcal{I}_n$ metric is $\Vert
  \mathcal{I}_{n(22)}^{1/2}(\tilde{\psi}_{n(2)} - \psi_{n(2)})\Vert = O_p(1)$.
  Hence $\mathcal{I}_n^{(11)}(\psi_{n1},
  \tilde{\psi}_{n(2)})/\mathcal{I}_n^{(11)}(\psi_n) = 1 + O_p(M_n\bar{a}_n) = 1
  + o_p(1)$, which is \eqref{eq:suff_info} and finishes the proof.
\end{proof}

\begin{proof}[Proof of Corollary~\ref{cor:maximizer}]
  Fix an arbitrary sequence $(\psi_n)$ in $A$. The set
  $\tilde{\Upsilon}^{(2)}_n(\psi_{n1})$ is open, so a maximizer of
  $\ell_n(\psi_{n1}, \cdot)$ over it is an interior maximizer of a
  differentiable function and hence satisfies \eqref{eq:nuisance_root}. It
  satisfies \eqref{eq:nuis_rate} by assumption, so
  Theorem~\ref{thm:porting} applies. This completes the proof.
\end{proof}

\begin{proof}[Proof of Proposition~\ref{prop:onestep}]
  Fix an arbitrary sequence $(\psi_n)$ in $A$. Write $\mathcal{I}_{n(22)}$
  for $\mathcal{I}_{n(22)}(\psi_n)$ and $\mathring{\psi}_{(2)}$ for
  $\mathring{\psi}_{n(2)}$. By \eqref{eq:negligible} and
  \eqref{eq:onestep_margin}, there is a
  sequence $M_n \to \infty$ with
  \[
    M_n\bar a_n \to 0, \qquad \frac{\kappa_n}{\sqrt{1 - \kappa_n^2}}\,
    M_n\bar a_n \to 0,
  \]
  for instance $M_n = d_n^{-1/2}$ with $d_n = \max\{\bar a_n, \kappa_n(1 -
  \kappa_n^2)^{-1/2}\bar a_n\}$. Let $\zeta_n$ be the supremum over
  $\psi_{(2)} \in B_n^{M_n}$ of
  \[
    \left\Vert \mathcal{I}_{n(22)}^{-1/2}\{\mathcal{J}_{n(22)}(\psi_{n1}, \psi_{(2)}) -
      \mathcal{I}_{n(22)}\}\mathcal{I}_{n(22)}^{-1/2}\right\Vert +
    \left\Vert \mathcal{I}_{n(22)}^{-1/2}\{\mathcal{I}_{n(22)}(\psi_{n1}, \psi_{(2)}) -
      \mathcal{I}_{n(22)}\}\mathcal{I}_{n(22)}^{-1/2}\right\Vert,
  \]
  which is $O_p(M_n\bar a_n)$ by Lemmas~\ref{lem:interior} to
  \ref{lem:local_detrmn}, each with radius $M_n$, and the projection by $Q$ in the proof of
  Lemma~\ref{lem:nuis_est}. Because $\mathring{\psi}_{(2)}$ satisfies
  \eqref{eq:nuis_rate} and $M_n \to \infty$, the point $(\psi_{n1},
  \mathring{\psi}_{(2)})$ lies in the convex set $D_n^{M_n}$ with probability
  tending to one, and we argue on that event. 
  Since $\partial
  S_{n(2)}/\partial\psi_{(2)}^\tsp = -\mathcal{J}_{n(22)}$, the fundamental
  theorem of calculus along the segment from $\psi_n$ to $(\psi_{n1},
  \mathring{\psi}_{(2)})$ gives
  \[
    S_{n(2)}(\psi_{n1}, \mathring{\psi}_{(2)}) - S_{n(2)}(\psi_n) =
    -\int_0^1 \mathcal{J}_{n(22)}\{\psi_{n1}, \psi_{n(2)} +
    t(\mathring{\psi}_{(2)} - \psi_{n(2)})\}\,\dd t\,(\mathring{\psi}_{(2)} -
    \psi_{n(2)}),
  \]
  and hence
  \[
    \mathcal{I}_{n(22)}^{-1/2}S_{n(2)}(\psi_{n1}, \mathring{\psi}_{(2)}) =
      \mathcal{I}_{n(22)}^{-1/2}S_{n(2)}(\psi_n) - (I_{r-1} + a_n)
      \mathcal{I}_{n(22)}^{1/2}(\mathring{\psi}_{(2)} - \psi_{n(2)}),
  \]
  where
  \[
    a_n = \int_0^1 \mathcal{I}_{n(22)}^{-1/2}\left[\mathcal{J}_{n(22)}\{\psi_{n1},
      \psi_{n(2)} + t(\mathring{\psi}_{(2)} - \psi_{n(2)})\} -
      \mathcal{I}_{n(22)}\right]\mathcal{I}_{n(22)}^{-1/2}\,\dd t.
  \]
  Since $\psi_{n(2)}, \mathring{\psi}_{(2)} \in B_n^{M_n}$, the latter on the
  event above, and $B_n^{M_n}$ is convex, the point $\psi_{n(2)} +
  t(\mathring{\psi}_{(2)} - \psi_{n(2)})$ lies in $B_n^{M_n}$, for every $t \in [0, 1]$. Thus, each integrand above has norm at
  most $\zeta_n$, and by $\int_0^1 \dd t = 1$,
  $\Vert a_n\Vert \leq \zeta_n = O_p(M_n\bar{a}_n)$. Thus the left-hand side
  is $O_p(1)$, the first term on the right having mean zero and identity
  covariance and the second being $O_p(1)$ by \eqref{eq:nuis_rate}.

  Define $b_n$ by
  \[
    \mathcal{I}_{n(22)}^{-1/2}\mathcal{I}_{n(22)}(\psi_{n1},
    \mathring{\psi}_{(2)})\mathcal{I}_{n(22)}^{-1/2} = I_{r-1} + b_n.
  \]
  Since $\mathring{\psi}_{(2)} \in B_n^{M_n}$, $b_n$ is less than or equal to the second term of $\zeta_n$, so $\Vert
  b_n\Vert \leq \zeta_n = O_p(M_n\bar a_n)$. In particular, $\Vert
  b_n\Vert < 1$ with probability tending to one, and on that event $I_{r-1} +
  b_n$ and $\mathcal{I}_{n(22)}(\psi_{n1}, \mathring{\psi}_{(2)})$ are
  invertible, $\mathcal{I}_{n(22)}^{1/2}\mathcal{I}_{n(22)}(\psi_{n1},
  \mathring{\psi}_{(2)})^{-1}\mathcal{I}_{n(22)}^{1/2} = (I_{r-1} +
  b_n)^{-1}$, and $\Vert (I_{r-1} + b_n)^{-1}\Vert \leq (1 - \Vert
  b_n\Vert)^{-1}$. By \eqref{eq:onestep},
  \[
    \mathcal{I}_{n(22)}^{1/2}(\check{\psi}_{n(2)} - \psi_{n(2)}) =
      \mathcal{I}_{n(22)}^{1/2}(\mathring{\psi}_{(2)} - \psi_{n(2)}) +
      (I_{r-1} + b_n)^{-1}\mathcal{I}_{n(22)}^{-1/2}S_{n(2)}(\psi_{n1},
      \mathring{\psi}_{(2)}),
  \]
  so $\check{\psi}_{n(2)}$ satisfies \eqref{eq:nuis_rate}, and
  $\check{\psi}_{n(2)} \in B_n^{M_n}$ with probability tending to one. We argue on
  that event as well. Rearranging the above gives
  $$
  \mathcal{I}_{n(22)}^{1/2}(\check{\psi}_{n(2)} -
  \mathring{\psi}_{(2)}) = (I_{r-1} +
  b_n)^{-1}\mathcal{I}_{n(22)}^{-1/2}S_{n(2)}(\psi_{n1},
  \mathring{\psi}_{(2)}).
  $$
  The fundamental theorem of calculus along the segment
  from $(\psi_{n1}, \mathring{\psi}_{(2)})$ to $(\psi_{n1},
  \check{\psi}_{n(2)})$ gives
  \[
    \mathcal{I}_{n(22)}^{-1/2}S_{n(2)}(\psi_{n1}, \check{\psi}_{n(2)}) =
    \mathcal{I}_{n(22)}^{-1/2}S_{n(2)}(\psi_{n1}, \mathring{\psi}_{(2)}) -
    (I_{r-1} + c_n)\mathcal{I}_{n(22)}^{1/2}(\check{\psi}_{n(2)} -
    \mathring{\psi}_{(2)}),
  \]
  where
  \[
    c_n = \int_0^1 \mathcal{I}_{n(22)}^{-1/2}\left[\mathcal{J}_{n(22)}\{\psi_{n1},
      \mathring{\psi}_{(2)} + t(\check{\psi}_{n(2)} - \mathring{\psi}_{(2)})\}
      - \mathcal{I}_{n(22)}\right]\mathcal{I}_{n(22)}^{-1/2}\,\dd t.
  \]
  Since $\mathring{\psi}_{(2)}, \check{\psi}_{n(2)} \in B_n^{M_n}$, both on
  events already fixed above, and $B_n^{M_n}$ is convex, the point
  $\mathring{\psi}_{(2)} + t(\check{\psi}_{n(2)} - \mathring{\psi}_{(2)})$
  lies in $B_n^{M_n}$, for every $t
  \in [0, 1]$; the same argument as for $a_n$ gives $\Vert c_n\Vert \leq
  \zeta_n = O_p(M_n\bar{a}_n)$. Combining the last two display equations then
  gives
  \begin{equation} \label{eq:onestep_resid}
    \mathcal{I}_{n(22)}^{-1/2}S_{n(2)}(\psi_{n1}, \check{\psi}_{n(2)}) =
      (b_n - c_n)(I_{r-1} + b_n)^{-1}\mathcal{I}_{n(22)}^{-1/2}
      S_{n(2)}(\psi_{n1}, \mathring{\psi}_{(2)}) = O_p(M_n\bar a_n),
  \end{equation}
  because $I_{r-1} - (I_{r-1} + c_n)(I_{r-1} + b_n)^{-1} = (b_n -
  c_n)(I_{r-1} + b_n)^{-1}$, $\Vert b_n - c_n\Vert \leq \Vert b_n\Vert +
  \Vert c_n\Vert \leq 2\zeta_n = O_p(M_n\bar a_n)$, $\Vert (I_{r-1} +
  b_n)^{-1}\Vert = O_p(1)$, and $\mathcal{I}_{n(22)}^{-1/2}S_{n(2)}(\psi_{n1},
  \mathring{\psi}_{(2)}) = O_p(1)$.

  With $\tilde{\psi}_{n(2)} = \check{\psi}_{n(2)}$, the invertibility of
  $\mathcal{I}_n(\psi_n)$, \eqref{eq:suff_clt}, \eqref{eq:suff_linear}, and
  \eqref{eq:suff_info} are verified as in
  the proof of Theorem~\ref{thm:porting}, which uses the estimator only
  through \eqref{eq:nuis_rate}. For \eqref{eq:suff_nuis}, by \eqref{eq:eff_score},
  Lemma~\ref{lem:kappa}, and \eqref{eq:onestep_resid},
  \begin{align*}
    \left\Vert \{\mathcal{I}_n^{(11)}\}^{1/2}\mathcal{I}_{n(12)}
      \mathcal{I}_{n(22)}^{-1}S_{n(2)}(\psi_{n1}, \check{\psi}_{n(2)})
      \right\Vert &\leq \frac{\kappa_n}{\sqrt{1 - \kappa_n^2}}\left\Vert
      \mathcal{I}_{n(22)}^{-1/2}S_{n(2)}(\psi_{n1}, \check{\psi}_{n(2)})
      \right\Vert \\
      &= O_p\left\{\frac{\kappa_n M_n\bar a_n}{\sqrt{1 - \kappa_n^2}}\right\},
  \end{align*}
  which is $o_p(1)$ by the choice of $M_n$. Because the sequence was
  arbitrary, \eqref{eq:uniform_coverage} follows. This completes
  the proof.
\end{proof}

\begin{proof}[Proof of Proposition~\ref{prop:common_design}]
  Let $H \in \R{m \times (m - 1)}$ satisfy $H^\tsp H = I_{m - 1}$ and $H^\tsp
  1_m = 0$, and let $V_1 \in \R{k \times (k - p)}$ satisfy $V_1^\tsp V_1 =
  I_{k - p}$ and $V_1^\tsp X_1 = 0$; if $k = p$, $V_1$ is empty and the
  second block in every block-diagonal matrix below is absent. The matrix $V
  = [H \otimes I_k,\ m^{-1/2}1_m \otimes V_1] \in \R{n \times (n - p)}$ is
  semi-orthogonal with $V^\tsp X = 0$, and we use it throughout; this is no
  restriction, because $\ell_n$ and $a_n$ do not depend on the choice of $V$.
  For $v \in \R{r}$, because $\Sigma(v) = I_m \otimes \Sigma_1(v)$ and
  $H^\tsp 1_m = 0$,
  \begin{equation} \label{eq:common_block}
    \tilde{\Sigma}(v) = V^\tsp\Sigma(v)V = \bdiag\{I_{m - 1} \otimes
      \Sigma_1(v),\ V_1^\tsp\Sigma_1(v)V_1\}.
  \end{equation}
  In particular, $\tilde{\Upsilon}_n = \{\psi : \Sigma_1(\psi) \succ 0\}$.

  Fix $\psi \in \tilde{\Upsilon}_n$ and a
  non-zero $v \in \R{r}$. Write $\Sigma_1 = \Sigma_1(\psi)$, $A_1 =
  \Sigma_1^{-1/2}\Sigma_1(v)\Sigma_1^{-1/2}$, and $Q = \Sigma_1^{1/2}V_1
  (V_1^\tsp\Sigma_1V_1)^{-1/2}$, which satisfies $Q^\tsp Q = I_{k - p}$. By
  \eqref{eq:common_block},
  \[
    A(v, \psi) = \bdiag(I_{m - 1} \otimes A_1,\ Q^\tsp A_1 Q).
  \]
  Since $\Vert Q^\tsp A_1 Q\Vert \leq \Vert A_1\Vert$, the spectral norm of
  $A(v, \psi)$ is $\Vert A_1\Vert$, and $\Vert A(v, \psi)\Vert_F^2 \geq (m -
  1)\Vert A_1\Vert_F^2$. Because $\Sigma_1(v) \neq 0$, $A_1 \neq 0$, and
  hence $a_n(v, \psi)^2 \leq \Vert A_1\Vert^2/\{(m - 1)\Vert A_1\Vert_F^2\}
  \leq (m - 1)^{-1}$. Taking the supremum over $v$ gives the bound on
  $\bar{a}_n$.

  Fix $\psi^0 \in \Upsilon$ as the
  true parameter. Partition $V^\tsp Y = [\vecop(Y_c)^\tsp, y_0^\tsp]^\tsp$
  with $Y_c \in \R{k \times (m - 1)}$ and $y_0 \in \R{k - p}$, and let $S =
  Y_cY_c^\tsp/(m - 1)$. By \eqref{eq:common_block}, the columns of $Y_c$ are
  independent $\rN_k\{0, \Sigma_1(\psi^0)\}$ and $y_0 \sim \rN_{k - p}\{0,
  V_1^\tsp\Sigma_1(\psi^0)V_1\}$ is independent of $Y_c$. Since $m - 1 \geq
  k$ and $Y_c$ has a density, $S \succ 0$ with probability one, and the rest
  of the proof is on this event. For $\psi \in \tilde{\Upsilon}_n$, let
  $\Sigma_e(\psi) = V_1^\tsp\Sigma_1(\psi)V_1$ and let $\nu_1(\psi) \leq
  \dots \leq \nu_k(\psi)$ be the eigenvalues of $\Sigma_1(\psi)^{-1}S$, which
  are positive. Let also $\phi(x) = x - \log x - 1$ for $x > 0$, which is
  non-negative, vanishes only at $x = 1$, tends to infinity as $x \to 0$ and
  as $x \to \infty$, and, by Taylor's theorem with $\phi(1) = \phi'(1) = 0$
  and $\phi''(x) = x^{-2}$, satisfies $(x - 1)^2/8 \leq \phi(x) \leq 2(x -
  1)^2$ for $1/2 \leq x \leq 2$. Using \eqref{eq:common_block} in
  \eqref{eq:res_ll}, and since $\log\vert\Sigma_1(\psi)\vert = \log\vert
  S\vert - \sum_{j = 1}^k\log\nu_j(\psi)$ and $\tr\{\Sigma_1(\psi)^{-1}S\} =
  \sum_{j = 1}^k\nu_j(\psi)$,
  \begin{equation} \label{eq:common_ll}
    \ell_n(\psi) = -\frac{m - 1}{2}\left[\log\vert S\vert + k + \sum_{j =
      1}^k\phi\{\nu_j(\psi)\}\right] - \frac{1}{2}\left\{
      \log\vert\Sigma_e(\psi)\vert + y_0^\tsp\Sigma_e(\psi)^{-1}y_0\right\}.
  \end{equation}
  Fix $\psi, \psi' \in \tilde{\Upsilon}_n$. Since $\nu_1(\psi')\Sigma_1(\psi')
  \preceq S \preceq \nu_k(\psi)\Sigma_1(\psi)$, also $\nu_1(\psi')
  \Sigma_e(\psi') \preceq V_1^\tsp S V_1 \preceq \nu_k(\psi)\Sigma_e(\psi)$,
  and hence $\log\vert\Sigma_e(\psi')\vert - \log\vert\Sigma_e(\psi)\vert
  \leq (k - p)\{\log\nu_k(\psi) - \log\nu_1(\psi')\}$. In addition, for $a,
  b > 0$, the convex function $x \mapsto a\phi(x) - b\log x = a(x - 1) - (a
  + b)\log x$ is minimized at $x = 1 + b/a$, where it equals $b - (a +
  b)\log(1 + b/a) \geq -b^2/a$; with $x = \nu_k(\psi)$, $a = (m - 1)/2$, $b
  = k - p$, and $(k - p)^2 \leq k(m - 1)$, this gives $(k - p)\log\nu_k(\psi)
  \leq (m - 1)\phi\{\nu_k(\psi)\}/2 + 2k$. Using these two bounds in the
  difference of \eqref{eq:common_ll} at $\psi$ and at $\psi'$, and dropping
  the non-negative quadratic form at $\psi$, gives
  \begin{equation} \label{eq:common_upper}
    \ell_n(\psi) - \ell_n(\psi') \leq k + \frac{\varrho(\psi')}{2} - \frac{m -
      1}{4}\sum_{j = 1}^k\phi\{\nu_j(\psi)\}, \qquad \psi, \psi' \in
      \tilde{\Upsilon}_n,
  \end{equation}
  where $\varrho(\psi') = (m - 1)\sum_{j = 1}^k\phi\{\nu_j(\psi')\} - (k -
  p)\log\nu_1(\psi') + y_0^\tsp\Sigma_e(\psi')^{-1}y_0$.

  Fix $\psi_1$ with
  $\tilde{\Upsilon}^{(2)}_n(\psi_1)$ non-empty and $\psi_{(2)}^0$ in that
  set, and let $\psi' = (\psi_1, \psi_{(2)}^{0\tsp})^\tsp$. By
  \eqref{eq:common_upper}, every $\psi_{(2)} \in
  \tilde{\Upsilon}^{(2)}_n(\psi_1)$ with $\ell_n(\psi_1, \psi_{(2)}) \geq
  \ell_n(\psi')$ satisfies $\sum_{j = 1}^k\phi\{\nu_j(\psi_1, \psi_{(2)})\}
  \leq C = \{4k + 2\varrho(\psi')\}/(m - 1)$. Since $\phi$ is continuous and
  tends to infinity at zero and at infinity, there is $\varepsilon \in (0,
  1]$ such that $\phi(x) \leq C$ implies $\varepsilon \leq x \leq
  1/\varepsilon$, and hence $\varepsilon S \preceq \Sigma_1(\psi_1,
  \psi_{(2)}) \preceq S/\varepsilon$ for every such $\psi_{(2)}$. The
  matrices between $\varepsilon S$ and $S/\varepsilon$ in the positive
  semi-definite order form a compact set of positive definite matrices, and
  $\psi_{(2)} \mapsto \Sigma_1(\psi_1, \psi_{(2)})$ is affine and injective
  because $\Sigma_1(v) \neq 0$ for $v \neq 0$. Thus, the superlevel set
  $\{\psi_{(2)} : \ell_n(\psi_1, \psi_{(2)}) \geq \ell_n(\psi')\}$ is
  relatively closed in a compact subset of $\tilde{\Upsilon}^{(2)}_n(\psi_1)$,
  so it is compact, and the continuous $\ell_n(\psi_1, \cdot)$ attains its
  maximum over $\tilde{\Upsilon}^{(2)}_n(\psi_1)$ on it.

  Fix a sequence $(\psi_n)$ in $\Upsilon$, take $\psi^0 =
  \psi_n$ above, write $\Sigma_n = \Sigma_1(\psi_n)$ and $\tilde{\psi}_n =
  (\psi_{n1}, \tilde{\psi}_{n(2)}^\tsp)^\tsp$, and let $\tilde{\nu}_j =
  \nu_j(\tilde{\psi}_n)$ and $w_j = \nu_j(\psi_n)$. The $w_j$ are the
  eigenvalues of $W = \Sigma_n^{-1/2}S\Sigma_n^{-1/2}$, and $(m - 1)W$ has
  the Wishart distribution with $m - 1$ degrees of freedom and identity
  scale, while $y_0^\tsp\Sigma_e(\psi_n)^{-1}y_0$ has the
  $\chi^2_{k - p}$-distribution; neither distribution depends on $\psi_n$, so
  every stochastic order bound below holds along every sequence in
  $\Upsilon$. Because $k$ is fixed, the entries of $W - I_k$ have mean zero
  and variance of order $m^{-1}$, so $\delta_n = \max_j\vert w_j - 1\vert =
  O_p(m^{-1/2})$. On the event $\delta_n \leq 1/2$, whose probability tends
  to one, $\phi(w_j) \leq 2\delta_n^2$ for every $j$ and $-\log w_1 \leq \log
  2$, so $\varrho(\psi_n) = O_p(1)$. Since $\tilde{\psi}_{n(2)}$ maximizes
  $\ell_n(\psi_{n1}, \cdot)$, $\ell_n(\tilde{\psi}_n) \geq \ell_n(\psi_n)$,
  and \eqref{eq:common_upper} with $\psi = \tilde{\psi}_n$ and $\psi' =
  \psi_n$ gives $\sum_{j = 1}^k\phi(\tilde{\nu}_j) \leq \{4k +
  2\varrho(\psi_n)\}/(m - 1) = O_p(m^{-1})$. For every $\eta > 0$, $\phi$ is
  bounded away from zero on $\{x : \vert x - 1\vert \geq \eta\}$, so
  $\tilde{\nu}_j \to 1$ in probability for every $j$. On the event
  $\max_j\vert\tilde{\nu}_j - 1\vert \leq 1/2$, whose probability therefore
  tends to one, $(\tilde{\nu}_j - 1)^2 \leq 8\phi(\tilde{\nu}_j)$, and hence
  $\varepsilon_n = \max_j\vert\tilde{\nu}_j - 1\vert = O_p(m^{-1/2})$.

  By the definitions of $\varepsilon_n$ and $\delta_n$, $(1 -
  \varepsilon_n)\Sigma_1(\tilde{\psi}_n) \preceq S \preceq (1 +
  \varepsilon_n)\Sigma_1(\tilde{\psi}_n)$ and $(1 - \delta_n)\Sigma_n \preceq
  S \preceq (1 + \delta_n)\Sigma_n$, so on the event $\varepsilon_n < 1$,
  whose probability tends to one, the eigenvalues of
  $\Sigma_n^{-1/2}\Sigma_1(\tilde{\psi}_n)\Sigma_n^{-1/2}$ lie between $(1 -
  \delta_n)/(1 + \varepsilon_n)$ and $(1 + \delta_n)/(1 - \varepsilon_n)$.
  Let $h = \tilde{\psi}_n - \psi_n$, whose first element is zero. By
  linearity, $\Sigma_1(\tilde{\psi}_n) - \Sigma_n = \Sigma_1(h)$, so $A_1 =
  \Sigma_n^{-1/2}\Sigma_1(h)\Sigma_n^{-1/2}$ is that matrix minus $I_k$, and
  $\Vert A_1\Vert \leq (\delta_n + \varepsilon_n)/(1 - \varepsilon_n) =
  O_p(m^{-1/2})$. By the block form of $A(v, \psi)$ above with $\psi =
  \psi_n$ and $v = h$, the identity $v^\tsp\mathcal{I}_n(\psi)v = \tr\{A(v,
  \psi)^2\}/2$ from Section~\ref{sec:gen_theory}, and since the squared
  Frobenius norm of a symmetric matrix is at most its dimension times its
  squared spectral norm,
  \[
    \Vert \mathcal{I}_{n(22)}(\psi_n)^{1/2}(\tilde{\psi}_{n(2)} -
      \psi_{n(2)})\Vert^2 = h^\tsp\mathcal{I}_n(\psi_n)h = \frac{1}{2}\Vert
      A(h, \psi_n)\Vert_F^2 \leq \frac{mk}{2}\Vert A_1\Vert^2 = O_p(1),
  \]
  where the first equality holds because the first element of $h$ is zero.
  This is \eqref{eq:nuis_rate}.

  Finally, along any sequence $(\psi_n)$ in $\Upsilon$, $\bar{a}_n(\psi_n)
  \leq (m - 1)^{-1/2} \to 0$, which is \eqref{eq:negligible}, and
  \eqref{eq:nuis_rate} was established above.
  Corollary~\ref{cor:maximizer} with $A = \Upsilon$ completes the proof.
\end{proof}

% app_computing.tex -- included via \input from 'compute lmm.tex'

\section{Computing details} \label{sec:lmm_computing}

The statistic of Section~\ref{sec:model} is built from the restricted
likelihood, but its parts are computed from the likelihood of
\eqref{eq:marginal}: the restricted score is the score for that likelihood
evaluated at $\tilde{\beta}(\psi)$, corrected by a determinant term, and the
restricted information has the same traces with a projection inserted. We
therefore give the likelihood, score, and information first, assemble the
restricted versions in Section~\ref{sec:reloglik}, and invert the statistic
last.

\subsection{Likelihood, score, and information} \label{sec:likelihood computation}

The log-likelihood for $(\beta, \psi)$ based on \eqref{eq:marginal} is, up to a
constant,
\begin{align} \label{eq:loglik}
 \bar{\ell}(\beta, \psi) =  -\frac{1}{2}\{\log \vert \Sigma(\psi)\vert + (Y - X\beta )^\tsp \Sigma(\psi)^{-1}(Y - X\beta)\},
\end{align}
where the bar distinguishes it from the restricted log-likelihood
\eqref{eq:res_ll} of the main text. Similarly, $\bar{S}$ and $\bar{\mathcal{I}}$
are the score and expected information based on $\bar{\ell}$ while $S$ and
$\mathcal{I}$ are the restricted ones of Section~\ref{sec:model}. We take the
parameter of interest to be a random effect parameter, but similar ideas apply
to a combination of random and fixed effect parameters.

Since $\bar{\ell}(\beta, \psi)$ is partially maximized in $\beta$ for fixed
$\psi$ by the $\tilde{\beta}(\psi)$ of Section~\ref{sec:model}, the test
statistic is based on the profile likelihood $\bar{\ell}\{\tilde{\beta}(\psi),
\psi\}$, and the nuisance estimator $\tilde{\psi}_{(2)}(\psi_1)$ is a local
maximizer of $\psi_{(2)} \mapsto \bar{\ell}\{\tilde{\beta}(\psi), \psi_1,
\psi_{(2)}\}$ over the $\psi_{(2)}$ for which $\Sigma(\psi)$ is positive
definite, the analogue for the full likelihood of the extended set
$\tilde{\Upsilon}^{(2)}(\psi_1)$ of Section~\ref{sec:model}.

The nonconvex objective can have flat regions and multiple local maxima. Trust-region
algorithms have proven useful in similar settings
\citep{knudson2021likelihoodbased}, and the expected information used in place
of the Hessian in Section~\ref{sec:computation} is positive definite, so the
model problem is well defined at every iterate. In practice, we have rarely
encountered convergence issues.

Evaluating the statistic requires the value and gradient of the profile
log-likelihood and the expected information. Since $\tilde{\beta}(\psi)$ is an
interior partial maximizer, the gradient of $\psi \mapsto
\bar{\ell}\{\tilde{\beta}(\psi),\psi\}$ is the score for $\psi$ evaluated at
$\beta = \tilde{\beta}(\psi)$.

We focus on the case where $p$ is small relative to $n$ and the random-effects design is sparse, which is common in practice. Then, a useful identity is
\begin{equation} \label{eq:sigma_woodbury}
  \Sigma(\psi)^{-1} = \psi_r^{-1} \{I_n - Z(I_q  + \Psi_r Z^\tsp Z)^{-1} \Psi_r Z^\tsp\} = \psi_r^{-1}\{I_n - Z A(\psi) Z^\tsp\},
\end{equation}
where $\Psi_r = \Psi / \psi_r$ and $A(\psi) = (I_q + \Psi_r Z^\tsp
Z)^{-1}\Psi_r$ \citep[see e.g.,][]{henderson1981deriving}. The $q\times q$ matrix $A(\psi)$
is symmetric and turns up in many computations to follow. Thus, it is computed
once and stored. The matrix $B(\psi) = I_q + \Psi_r Z^\tsp Z$ is typically
sparse, so we use a sparse solver to compute $A(\psi) =
B(\psi)^{-1}\Psi_r$, whose fill determines the cost of the multiplications
that follow. If $B(\psi)$ were dense,
the factorization would cost $O(q^3)$; with sparsity the cost is lower and
depends on the sparsity pattern, as do the costs of the matrix multiplications
involving $Z$ and $\Psi$.

To evaluate the log-likelihood in \eqref{eq:loglik} efficiently, note
\[
  \vert \Sigma(\psi)\vert = \psi_r^n \vert I_n + Z \Psi_r Z^\tsp\vert = \psi_r^n \vert I_q + \Psi_r Z^\tsp Z\vert = \psi_r^n \vert B(\psi)\vert,
\]
which requires $O(q)$ operations using the available factorization of $B(\psi)$. To simplify notation, we write the quadratic term of $\bar{\ell}(\beta, \psi)$ as
\begin{align*}
  (Y - X\beta)^\tsp \Sigma(\psi)^{-1}(Y - X\beta) &= e(\beta)^\tsp \tilde{e}(\beta, \psi),
\end{align*}
where $e(\beta) = Y - X\beta$ and $\tilde{e}(\beta, \psi) = \Sigma(\psi)^{-1}(Y
- X\beta)$. To compute the latter efficiently, note, by
\eqref{eq:sigma_woodbury}, $ \tilde{e}(\beta, \psi) = \psi_r^{-1}\{e(\beta) - Z
A(\psi) Z^\tsp e(\beta)\}$; without sparsity, the cost would be $O(nq + q^2)$.
Consider next the score functions for $\psi$,
\begin{align} \label{eq:score_psi_r}
 \bar{S}(\psi_r) = \frac{1}{2}[\tilde{e}(\beta, \psi)^\tsp \tilde{e}(\beta, \psi) - \tr\{\Sigma(\psi)^{-1}\}],
\end{align}
and, for $j < r$,
\begin{align} \label{eq:score_psi_j}
  \bar{S}(\psi_j) = \frac{1}{2}[\tilde{e}(\beta, \psi)^\tsp Z H_j Z^\tsp \tilde{e}(\beta, \psi) - \tr\{Z^\tsp \Sigma(\psi)^{-1}Z H_j\}].
\end{align}

For the first term in \eqref{eq:score_psi_j}, note $Z^\tsp \tilde{e}(\beta,
\psi)$ is shared by all $j < r$, so it is computed once and stored. Similarly,
the term $Z^\tsp \Sigma(\psi)^{-1}Z$ is shared by all $j < r$ and is, by
\eqref{eq:sigma_woodbury},
\begin{align*}
  \psi_r^{-1}\{Z^\tsp Z - Z^\tsp Z A(\psi) Z^\tsp Z\} = \psi_r^{-1}Z^\tsp Z  \{I_q- A(\psi) Z^\tsp Z\} = \psi_r^{-1} Z^\tsp Z F(\psi),
\end{align*}
where $F(\psi) = I_q - A(\psi) Z^\tsp Z$ is computed once and stored for future
computations. Note also that, once $Z^\tsp \Sigma(\psi)^{-1}Z$ is computed as in
the last display, it is cheap to compute $\tr\{Z^\tsp \Sigma(\psi)^{-1}Z H_j\} =
\mathtt{sum}[\{Z^\tsp \Sigma(\psi)^{-1}Z\} \circ H_j]$, where the $\circ$ is the
elementwise product and $\mathtt{sum}$ sums the elements of its argument; all
$r-1$ such traces cost $O(r q^2)$ in the worst case.

Having computed $F(\psi)$, $\bar{S}(\psi_r)$ adds only $O(n + q)$ operations, the $q$
coming from the trace and the $n$ from the inner product in
\eqref{eq:score_psi_r}. In particular, again by \eqref{eq:sigma_woodbury}, the
second term of \eqref{eq:score_psi_r} is $\tr\{\Sigma^{-1}(\psi)\} =
\psi_r^{-1}[n - \tr\{Z^\tsp Z A(\psi)\}] = \psi_r^{-1}[n - q + \tr\{F(\psi)\}]$.

We next focus on the expected information.  The expected information for $\psi$ has elements
\begin{align*}
  \bar{\mathcal{I}}_{ij}(\psi) =  \frac{1}{2} \times \begin{cases}
    \tr(Z^\tsp\Sigma^{-1}ZH_i Z^\tsp \Sigma^{-1}Z H_j ), & i \leq j < r \\
    \tr(Z^\tsp\Sigma^{-2} Z H_i ), & i < j = r \\
    \tr(\Sigma^{-2}), & i = j = r
  \end{cases},
\end{align*}
where we omitted the argument $\psi$ on the right-hand side for brevity.

For the first of the three cases, note we already computed $Z^\tsp \Sigma^{-1}Z$ for the score. Thus, we next compute $Z^\tsp \Sigma^{-1}Z H$, with $H = [H_1,
\dots, H_{r - 1}] \in \R{q \times q (r - 1)}$. Then, the traces needed for the
first case are the sums of the elementwise products of $q\times q$ blocks of
$Z^\tsp \Sigma^{-1}Z H$, which costs $O(r q^3)$ to form and $O(r^2 q^2)$ for the
pairwise block products without sparsity. For the second case, by \eqref{eq:sigma_woodbury},
\[
  Z^\tsp \Sigma^{-2}Z = \psi_r^{-1} Z^\tsp \Sigma^{-1}Z F(\psi).
\]
Hence, the required traces are the sums of elementwise products of $q\times q$
blocks of $Z^\tsp \Sigma^{-1}Z H$, which we already computed, and
$\psi_r^{-1}F(\psi)$. For the third case,
\[
  \tr(\Sigma^{-2}) = \psi_r^{-2}\{n - q + \mathtt{sum}(F \circ F^\tsp)\}.
\]
In summary, computing the log-likelihood value, score, and expected information
together has worst-case cost $O(nq + r q^3 + r^2 q^2)$.

\subsection{Restricted likelihood} \label{sec:reloglik}
Up to a constant, the restricted log-likelihood is
\begin{align} \label{eq:res_ll_comp}
  \ell(\psi) = \bar{\ell}(\tilde{\beta}(\psi), \psi) - \frac{1}{2}\log \vert X^\tsp \Sigma(\psi)^{-1}X\vert,
\end{align}
where $\tilde{\beta} = \{X^\tsp \Sigma(\psi)^{-1}X\}^{-1} X^\tsp
\Sigma(\psi)^{-1}Y$. We assume $X$ has full column rank $p < n$, so $X^\tsp
\Sigma^{-1}X$ is indeed invertible and $\tilde{\beta}(\psi)$ is the unique
partial maximizer of $\bar{\ell}(\beta, \psi)$ in $\beta$. The score for $\beta$ is
\begin{equation*}
  \bar{S}(\beta) = X^\tsp \Sigma(\psi)^{-1}(Y - X\beta) = X^\tsp \tilde{e}(\beta, \psi),
\end{equation*}
and, by \eqref{eq:sigma_woodbury}, the corresponding information block is
\begin{align} \label{eq:info_beta}
  \bar{\mathcal{I}}(\beta; \psi) = \cov_{\beta, \psi}\{\bar{S}(\beta; \psi)\} = X^\tsp \Sigma(\psi)^{-1}X = \psi_r^{-1}\{X^\tsp X - X^\tsp Z A(\psi) Z^\tsp X\}.
\end{align}

Differentiating \eqref{eq:res_ll_comp}, and using that the gradient of $\psi
\mapsto \bar{\ell}\{\tilde{\beta}(\psi), \psi\}$ is $\bar{S}(\psi;
\tilde{\beta}(\psi))$, the restricted score has elements
\begin{equation*}
  S(\psi_j) = \bar{S}(\psi_j; \tilde{\beta}(\psi)) - \frac{1}{2} \frac{\partial \log \vert X^\tsp \Sigma(\psi)^{-1}X\vert}{\partial \psi_j},
\end{equation*}
where
\[
  \frac{\partial \log \vert X^\tsp \Sigma^{-1}X\vert}{\partial \psi_j} = \begin{cases} -\tr\{Z^\tsp\Sigma^{-1} X (X^\tsp \Sigma^{-1}X)^{-1}X^\tsp \Sigma^{-1} Z H_j \} & j < r \\
   -\tr\{(X^\tsp \Sigma^{-1}X)^{-1}X^\tsp \Sigma^{-2} X \} & j = r \end{cases}.
\]

An alternative expression for the restricted score is useful. Let 
$$
P =  X(X^\tsp
\Sigma^{-1}X)^{-1}X^\tsp \Sigma^{-1}
$$
and $Q = I_n - P$. Note $PP = P$ and
$\Sigma^{-1} P = P^\tsp \Sigma^{-1}$, and hence $QQ = Q$ and $\Sigma Q^\tsp
\Sigma^{-1} = Q$. Since $QX = 0$, we get $Y - X\tilde{\beta} = Y - P Y = QY =
Q(Y - X\beta) = Q \Sigma^{1/2}\Sigma^{-1/2}(Y - X\beta) = Q \Sigma^{1/2} W$,
where $W = \Sigma^{-1/2}(Y - X\beta)$. Using this and that $W$ has a
multivariate standard normal distribution under $(\beta, \psi)$, one can show that, for
$j < r$,
\begin{equation} \label{eq:res_score_j}
  S(\psi_j) = \frac{1}{2} \{W^\tsp \Sigma^{1/2}Q^\tsp\Sigma^{-1}Z H_j
    Z^\tsp\Sigma^{-1} Q \Sigma^{1/2} W  -\tr(Q Z H_j
    Z^\tsp\Sigma^{-1} )\},
\end{equation}
and
\begin{equation} \label{eq:res_score_r}
  S(\psi_r) = \frac{1}{2} \{W^\tsp \Sigma^{1/2}Q^\tsp\Sigma^{-2} Q \Sigma^{1/2} W  -\tr( Q\Sigma^{-1})\}.
\end{equation}
The trace terms follow because the restricted score is the score based on the
likelihood for $V^\tsp Y$, so it has mean zero by the first Bartlett identity.
By the calculations above, the stochastic part of $S(\psi_j)$ is $(1/2)
W^\tsp \Sigma^{1/2}Q^\tsp\Sigma^{-1}Z H_j Z^\tsp\Sigma^{-1} Q \Sigma^{1/2} W$,
so the trace part must be
\[
  \tr(\Sigma^{1/2}Q^\tsp\Sigma^{-1}Z H_j Z^\tsp\Sigma^{-1} Q \Sigma^{1/2}) =
  \tr(Q Z H_j Z^\tsp \Sigma^{-1}),
\]
where we use $\Sigma Q^\tsp \Sigma^{-1} = Q$. The derivation for
$S(\psi_r)$ is similar.

For simplicity, we refer to the quadratic forms in \eqref{eq:res_score_j} and
\eqref{eq:res_score_r} as the stochastic parts of the restricted score, since
they are the only parts that depend on $Y$; the trace terms are
nonstochastic when $X$ is. The stochastic parts determine the information.
Specifically, using that $\cov_{\beta, \psi}(W^\tsp G_1 W, W^\tsp G_2 W) = 2 \tr(G_1G_2)$ for
any symmetric $A$ and $B$, we get
\begin{align} \label{eq:res_info}
  \mathcal{I}_{jk}(\psi) &= \frac{1}{2} \times \begin{cases}
       \tr(Z^\tsp \Sigma^{-1}QZH_j Z^\tsp \Sigma^{-1}Q Z H_k) & j\leq k < r\\
           \tr(Z^\tsp \Sigma^{-1}Q\Sigma^{-1}Q Z H_j) & j < k = r \\
    \tr(Q\Sigma^{-1} Q\Sigma^{-1}) & j = k = r
  \end{cases}.
\end{align}

We next discuss how to compute relevant quantities efficiently. First, using
\eqref{eq:info_beta}, it is cheap to compute $X^\tsp \Sigma^{-1}X$. A Cholesky
decomposition of that matrix is then used to compute both $\tilde{\beta}(\psi)$
and the determinant in \eqref{eq:res_ll_comp}. Then, to evaluate
$\ell(\psi)$, we need only compute $\bar{\ell}(\tilde{\beta}(\psi), \psi)$,
which is done as in Section \ref{sec:likelihood computation}, replacing $\beta$
by $\tilde{\beta}(\psi)$ everywhere. Similarly, we compute the stochastic parts
of $S(\psi)$ as the stochastic parts of $\bar{S}(\psi)$, replacing $\beta$ by
$\tilde{\beta}(\psi)$ everywhere.

To compute the nonstochastic parts of the restricted score and the information
matrix, we use \eqref{eq:res_score_j}--\eqref{eq:res_info}, as detailed in
Section~\ref{sec:res_compute}. The terms needed involve $X^\tsp
\Sigma^{-k}X$, $k \in \{1, 2, 3\}$; $Z^\tsp \Sigma^{-k} Z$, $k \in \{1, 2\}$;
and $X^\tsp \Sigma^{-k} Z$, $k \in \{1, 2\}$.

\subsection{Computing the restricted score and information} \label{sec:res_compute}

Let us introduce the notation
\begin{align*}
D_{(k)} &= (X^\tsp \Sigma^{-1}X)^{-1}X^\tsp \Sigma^{-k} X; \\
E_{(k)} &= (X^\tsp \Sigma^{-1}X)^{-1}X^\tsp \Sigma^{-k} Z.
\end{align*}
Note that $D_{(1)} = I_p$ and
\begin{align*}
  X^\tsp \Sigma^{-1} Z &= \psi_r^{-1}\{X^\tsp Z - X^\tsp Z A Z^\tsp Z\} = \psi_r^{-1}X^\tsp Z F(\psi),
\end{align*}
so that
\[
  E_{(1)} = \psi_r^{-1}(X^\tsp \Sigma^{-1}X)^{-1}X^\tsp Z F(\psi).
\]
For $k \geq 2$, we can compute $D_{(k)}$ and $E_{(k)}$ recursively, using
\begin{align*}
  D_{(k)} &= \psi_r^{-1}(X^\tsp \Sigma^{-1} X)^{-1}X^\tsp \Sigma^{-(k -1)}(I_n - Z A Z^\tsp)X \\
  &= \psi_r^{-1}\{D_{(k - 1)} - E_{(k - 1)} A Z^\tsp X\},
\end{align*}
and
\begin{align*}
  E_{(k)} &= \psi_r^{-1}(X^\tsp \Sigma^{-1}X)^{-1}X^\tsp \Sigma^{-(k - 1)}(I_n - Z A Z^\tsp)Z \\
  &= \psi_r^{-1}(E_{(k - 1)} - E_{(k - 1)} A Z^\tsp Z) \\
  &= \psi_r^{-1}E_{(k - 1)} F(\psi).
\end{align*}
 
 Expanding $Q = I_n - P$ in the trace terms of
\eqref{eq:res_score_j}--\eqref{eq:res_info}, and using $\Sigma Q^\tsp\Sigma^{-1}
= Q$ and $QQ = Q$, gives, with $M = Z^\tsp\Sigma^{-1}Z$ and $N =
Z^\tsp\Sigma^{-1}XE_{(1)}$,
\begin{align*}
  \tr(Q\Sigma^{-1}) &= \tr(\Sigma^{-1}) - \tr(D_{(2)}), \\
  \tr(QZH_jZ^\tsp\Sigma^{-1}) &= \tr(MH_j) - \tr(H_jN), \\
  \tr(Q\Sigma^{-1}Q\Sigma^{-1}) &= \tr(\Sigma^{-2}) - 2\tr(D_{(3)}) + \tr(D_{(2)}^2), \\
  \tr(Z^\tsp\Sigma^{-1}Q\Sigma^{-1}QZH_j) &= \tr(Z^\tsp\Sigma^{-2}ZH_j)
    - 2\tr(H_jZ^\tsp\Sigma^{-1}XE_{(2)}) \\
    &\qquad + \tr(H_jZ^\tsp\Sigma^{-1}XD_{(2)}E_{(1)}), \\
  \tr(Z^\tsp\Sigma^{-1}QZH_jZ^\tsp\Sigma^{-1}QZH_k) &= \tr\{(M - N)H_j(M - N)H_k\},
\end{align*}
where $\tr(D_{(3)}) = \psi_r^{-1}\{\tr(D_{(2)}) - \tr(E_{(2)}AZ^\tsp X)\}$ by
the recursion above, so that $D_{(3)}$ itself is never formed. The terms
$\tr(\Sigma^{-1})$, $\tr(\Sigma^{-2})$, $M$, and $Z^\tsp\Sigma^{-2}Z$ are
computed as in Section~\ref{sec:likelihood computation}.

\subsection{Inverting the test statistic}

Constructing a confidence region in practice requires evaluating $T(\psi_1)$ at many candidate values of $\psi_1$ and retaining those for which $T(\psi_1) \leq q_{1-\alpha}$. For clarity, we focus on a confidence interval for a single random effect variance, say $\psi_1$, with $\psi_{(2)}$ the nuisance parameter. Thus, we seek the $C(\alpha)$ of Section~\ref{sec:model},
\[
  \{\psi_1 \in [0, \infty) : T(\psi_1) \leq q_{1-\alpha}\}.
\]
Similar arguments apply to other parameters of interest, and the accompanying software implements the method for any parameter of interest.

In practice, confidence intervals are usually constructed after fitting the model, so (restricted) maximum likelihood estimates of $\beta$ and $\psi$ are available and provide natural starting values. If they are not available, any $(\beta^0, \psi^0) \in \R{p} \times \tilde{\Upsilon}$ such that $T(\psi_1^0) \leq q_{1-\alpha}$ can be used to initialize our algorithm.

The one-step option takes the Newton step \eqref{eq:onestep} from the value
accepted at the previous point of the search, and at the first point from the
nuisance components of the fitted model. In the simulations of
Section~\ref{sec:simulations}, intervals computed this way differ in width from
those obtained by profiling to convergence by at most $1.6\%$.

\end{document}